\documentclass[12pt,a4paper]{article}
\usepackage[top=1in,bottom=1in,left=1in,right=1in,letterpaper]{geometry}
\usepackage{amsmath,amsthm,amssymb}
\usepackage{hhline}
\usepackage{algorithm}
\usepackage[noend]{algorithmic}
\usepackage{multirow}
\usepackage{bm}
\usepackage{authblk}
\usepackage{tikz}
\usetikzlibrary{arrows,positioning,decorations.pathreplacing,shapes}

\newenvironment{claimproof}{\noindent\emph{Proof of claim.\ }}{\hspace*{\fill}$\Box$\medskip}
\newtheorem{theorem}{Theorem}
\newtheorem{lemma}{Lemma}
\newtheorem{claim}{Claim}
\newtheorem{proposition}{Proposition}

\makeatletter\renewcommand{\ALG@name}{Pro\-ce\-du\-re}

\title{Primal-dual and dual-fitting analysis of online scheduling algorithms for generalized flow-time problems}

\author[1]{Spyros Angelopoulos}
\author[2,3]{Giorgio Lucarelli}
\author[4]{Nguyen Kim Thang}

\affil[1]{CNRS and Sorbonne Universit\'{e}s, UPMC Univ Paris 06, Paris, France \texttt{spyros.angelopoulos@lip6.fr}}
\affil[2]{LIG, University Grenoble-Alpes, France}
\affil[3]{LCOMS, Universit\'e de Lorraine, Metz, France
\texttt{giorgio.lucarelli@univ-lorraine.fr}}
\affil[4]{IBISC, Univ Evry, Universit\'e Paris-Saclay, 91025, Evry, France \texttt{thang@ibisc.fr}}

\date{}

\pagestyle{plain}

\begin{document}

\maketitle

\begin{abstract}
We study online scheduling problems on a single processor that can be viewed as extensions of the well-studied problem of minimizing total weighted flow time.
In particular, we provide a framework of analysis that is derived by duality properties,
does not rely on potential functions and is applicable to a variety of scheduling problems.
A key ingredient in our approach is bypassing the need for ``black-box'' rounding of fractional solutions, which yields improved competitive ratios.

We begin with an interpretation of Highest-Density-First (HDF) as a primal-dual algorithm, and a corresponding
proof that HDF is optimal for total fractional weighted flow time (and thus scalable for the integral objective).
Building upon the salient ideas of the proof, we show how to apply and extend this analysis to the more general problem of
minimizing $\sum_j w_j g(F_j)$, where $w_j$ is the job weight, $F_j$ is the flow time and $g$ is a non-decreasing cost function.
Among other results, we present improved competitive ratios for the setting in which $g$ is a concave function,
and the setting of same-density jobs but general cost functions.
We further apply our framework of analysis to online weighted completion time with general cost functions
as well as scheduling under polyhedral constraints.
\end{abstract}

\section{Introduction}
\label{sec:introduction}

We consider online scheduling problems in which a set of jobs $\mathcal{J}$ arriving over time must be executed on a single processor.
In particular, each job $j \in \mathcal{J}$ is characterized by its \emph{processing time} $p_j>0$ and its \emph{weight} $w_j>0$,
which become known after its \emph{release time} $r_j \geq 0$.
The \emph{density} of $j$ is $\delta_j=w_{j}/p_{j}$, whereas, given a schedule, its {\em completion time}, $C_j$,
is defined as the first time $t\geq r_j$ such that $p_j$ units of $j$ have been processed.
The {\em flow time} of $j$ is then defined as $F_j=C_j-r_j$, and represents the time elapsed after the release of job $j$ and up to its completion.
A natural optimization objective is to design schedules that minimize the {\em total weighted flow time},
namely the sum $\sum_{j \in {\mathcal J}} w_jF_j$ of all processed jobs.
A related objective is to minimize the weighted sum of completion times, as given by the expression $\sum_{j \in {\mathcal J}} w_jC_j$.
We assume that {\em preemption} of jobs is allowed,
i.e., the jobs can be interrupted and be resumed at a later point without any penalty.

Im {\em et al.}~\cite{ImMoseley12:Online-scheduling} studied a generalization of the
total weighted flow time problem, in which jobs may incur non-linear contributions to the objective.
More formally, they defined the {\em Generalized Flow Time Problem} (\text{GFP}) in which the
objective is to minimize the sum $\sum_{j \in{\cal J}} w_j g(F_j)$, where $g:\mathbb{R}^+\rightarrow
\mathbb{R}^+$ is a given non-decreasing cost function with $g(0)=0$.
This extension captures many interesting and natural variants of flow time with real-life applications.
Moreover, it is an appropriate formulation of the setting in which we aim to simultaneously optimize several objectives.
We define the {\em Generalized Completion Time Problem} (GCP) along the same lines, with the only
difference being the objective function, which equals to $\sum_{j \in J} w_j g(C_j)$.

A further generalization of the above problems, introduced in~\cite{BansalPruhs10:The-Geometry-of-Scheduling},
associates each job $j$ with a non-decreasing cost function $g_j: \mathbb{R^+} \rightarrow \mathbb{R^+}$ and $g_j(0)=0$;
in the {\em Job-Dependent Generalized Flow Time Problem} (JDGFP), the objective is to minimize the sum $\sum_{j \in {\cal J}} w_j g_j(F_j)$.
This problem formulation is quite powerful, and captures many natural scheduling objectives such as weighted tardiness.
No competitive algorithms are known in the online setting;
approximation algorithms are studied in~\cite{BansalPruhs10:The-Geometry-of-Scheduling,holn2014:unsplittable}.

Very recently, Im {\em et al.}~\cite{ImKulkarni14:Competitive-Algorithms} introduced and studied a general
scheduling problem called {\em Packing Scheduling Problem} (PSP). Here, at any time $t$, the scheduler
may assign rates $\{x_j(t) \}$ to each job $j \in {\cal J}$. In addition, we are given a matrix $B$ of non-negative entries.
The goal is to minimize the total weighted flow time
subject to packing constraints $\{ B \mathbf{x} \leq 1, \mathbf{x}\geq 0\}$.
This formulates applications in which each job $j$ is associated with a resource-demand vector
$\mathbf{b}_j=(b_{1j}, b_{2j}, \ldots ,b_{Mj})$ so that it requires an amount $b_{ij}$ of the $i$-th resource.

In this paper, we present a general framework based on LP-duality principles, for online scheduling with generalized flow time objectives.
Since no online algorithm even for total weighted flow time is constant competitive~\cite{BansalChan09:Weighted-flow},
we study the effect of {\em resource augmentation}, introduced by Kalyanasundaram and Pruhs~\cite{KalyanasundaramPruhs00:Speed-is-as-powerful}.
More precisely, given some optimization objective (e.g. total flow time),
an algorithm is said to be $\alpha$-speed $\beta$-competitive if it is $\beta$-competitive with respect to
an offline optimal scheduling algorithm of speed $1/\alpha$ (here $\alpha \geq 1$).
\bigskip

\noindent
{\bf Related work.} \
It is well-known that the algorithm Shortest Remaining Processing Time (SRPT) is optimal for online total (unweighted) flow time.
Becchetti {\em et al.}~\cite{BecchettiLeonardi06:Online-weighted} showed that the natural algorithm Highest-Density-First (HDF)
is $(1+\epsilon)$-speed $\frac{1+\epsilon}{\epsilon}$-competitive for total weighted flow time.
At each time, HDF processes the job of highest density.

Concerning the online GFP, Im {\em et al.}~\cite{ImMoseley12:Online-scheduling}
showed that HDF is $(2+\epsilon)$-speed $O(\frac{1}{\epsilon})$-competitive algorithm for general non decreasing functions $g$.
On the negative side, they showed that no {\em oblivious} algorithm is $O(1)$-competitive with speed augmentation
$2-\epsilon$, for any $\epsilon >0$ (an oblivious algorithm does not know the function $g$).
In the case in which $g$ is a twice-differentiable, concave function, they showed that the algorithm Weighted Late Arrival Processor Sharing (WLAPS)
is $(1+\epsilon)$-speed $O(\frac{1}{\epsilon^2})$-competitive.
For equal-density jobs and general cost functions~\cite{ImMoseley12:Online-scheduling} prove that
FIFO is $(1+\epsilon)$-speed $\frac{4}{\epsilon^2}$-competitive.
Fox {\em et al.}~\cite{FoxIm13:Online-Non-clairvoyant} studied the problem of convex cost functions in the \emph{non-clairvoyant} variant
providing a $(2+\epsilon)$-speed $O(\frac{1}{\epsilon})$-competitive algorithm;
in this variant, the scheduler learns the processing time of a job only when the job is completed
(see for example ~\cite{Edmonds00:Scheduling-in-the-dark,EdmondsPruhs12:Scalably-scheduling,GuptaIm12:Scheduling-heterogeneous,GuptaKrishnaswamy10:Nonclairvoyantly-scheduling,KalyanasundaramPruhs00:Speed-is-as-powerful,MotwaniPhillips94:Non-Clairvoyant-Scheduling}).
Bansal and Pruhs~\cite{BansalPruhs04:Server-Scheduling} considered a special class of convex functions, namely the weighted $\ell_k$ norms of flow time,
with $1 < k < \infty$, and they showed that HDF is $(1+\epsilon)$-speed $O(\frac{1}{\epsilon^2})$-competitive.
Moreover, they showed how to transform this result in order to obtain an 1-speed $O(1)$-competitive algorithm
for the weighted $\ell_k$ norms of completion time.

Most of the above works rely to techniques based on amortized analysis (see also~\cite{ImMoseley11:A-tutorial-on-amortized} for a survey).
More recently, techniques based on LP duality have been applied in the context of online scheduling for generalized flow time problems.
Gupta {\em et al.}~\cite{GuptaKrishnaswamy12:Online-Primal-Dual} gave a primal-dual algorithm for a class of non-linear load balancing problems.
Devanur and Huang~\cite{DevanurHuang14:Primal-Dual} used a duality approach
for the problem of minimizing the sum of energy and weighted flow time on unrelated machines.
Of particular relevance to our paper is the work of Antoniadis {\em et al.}~\cite{antoniadis:STACS2014},
which gives an optimal {\em offline} energy and fractional weighted flow trade-off schedule for a speed-scalable processor with discrete speeds,
and uses an approach based on primal-dual properties (similar geometric interpretations arise in the context of our work, in the online setting).
Anand {\em et al.}~\cite{AnandGarg12:Resource-augmentation} were the first to propose
an approach to online scheduling by linear/convex programming and dual fitting.
Nguyen~\cite{Thang13:Lagrangian-Duality} presented a framework based on Lagrangian duality
for online scheduling problems beyond linear and convex programming.
The framework is applied to several problems related to flow time such as energy plus flow time, $\ell_{p}$-norm of flow time.
Im {\em et al.}~\cite{ImKulkarni14:Competitive-Algorithms} applied dual fitting in the context of PSP.
For the weighted flow time objective, they gave a non-clairvoyant algorithm that is $O(\log n)$-speed $O(\log n)$-competitive,
where $n$ denotes the number of jobs.
They also showed that for any constant $\epsilon>0$, any $O(n^{1-\epsilon})$-competitive algorithm
requires speed augmentation compared to the offline optimum.

We note that a common approach in obtaining a competitive, resource-augmented scheduling algorithm for flow time and related problems is by first deriving an algorithm that is competitive for the {\em fractional}
objective~\cite{BecchettiLeonardi06:Online-weighted,ImMoseley11:A-tutorial-on-amortized}.
An informal interpretation of the fractional objective is that a job contributes to the objective proportionally to the amount of its remaining work (see Section~\ref{sec:preliminaries} for a formal definition).
It is known that any $\alpha$-speed $\beta$-competitive algorithm for fractional GFP can be converted, in ``black-box'' fashion, to a $(1+\epsilon)\alpha$-speed $\frac{1+\epsilon}{\epsilon}\beta$-competitive algorithm
for (integral) GFP, for $0<\epsilon \leq 1$~\cite{FoxIm13:Online-Non-clairvoyant}.
Fractional objectives are often considered as interesting problems in their own (as in~\cite{antoniadis:STACS2014}).
\bigskip

\noindent
{\bf Contribution.} \
We present a framework for the design and analysis of algorithms for generalized flow time problems
that is based on primal-dual and dual-fitting techniques.
Our proofs are based on intuitive geometric interpretations of the primal/dual objectives; in particular, we do not rely on potential functions.
An interesting feature in our primal-dual approach, that differs from previous ones, is that
when a new job arrives, we may update the dual variables for jobs that already have been scheduled
without affecting the past portion (primal solution) of the schedule.
Another important ingredient of our analysis consists in relating, in a direct manner,
the primal integral and fractional dual objectives, without passing through the fractional primal.
This allows us to bypass the ``black-box'' transformation of fractional to integral solutions~\cite{BecchettiLeonardi06:Online-weighted,ImMoseley12:Online-scheduling},
which has been the canonical approach up to now.
As a result, we obtain an improvement to the competitive ratio by a factor of $O(\frac{1}{\epsilon})$, for $(1+\epsilon)$-speed.

\begin{table}
\hspace*{-1cm}
{\footnotesize
\begin{tabular}{|c||c|c|c|c|}
\hline
\multirow{2}{*}{$g(\cdot)$} & \multicolumn{2}{c|}{$\sum w_{j} g(F_{j})$}
 & \multicolumn{1}{c|}{\multirow{2}{*}{$\sum w_{j} g(C_{j})$}}
 & \multicolumn{1}{c|}{\multirow{2}{*}{$\sum w_{j} g_{j}(F_{j})$}} \\
\hhline{~--}
 & same density & arbitrary density & & \\
\hline
\hline
linear & & HDF, $(1+\epsilon,\frac{1+\epsilon}{\epsilon})$ \cite{BecchettiLeonardi06:Online-weighted} & & \\
\hline
\multirow{3}{*}{convex} & \multirow{3}{*}{\textbf{FIFO,} $\bm{(1+\epsilon,\frac{1+\epsilon}{\epsilon})}$}
 & WSETF, $(2+\epsilon, O(\frac{1}{\epsilon}))$ \cite{FoxIm13:Online-Non-clairvoyant} & & \\
 & & oblivious, $(2-\epsilon, \Omega(1))$ \cite{FoxIm13:Online-Non-clairvoyant} & & \\
 & & non-clairvoyant, $(\sqrt{2}-\epsilon, \Omega(1))$ \cite{FoxIm13:Online-Non-clairvoyant} & & \\
\hline
concave & & \multirow{2}{*}{WLAPS, $(1+\epsilon,O(\frac{1}{\epsilon^{2}}))$ \cite{ImMoseley12:Online-scheduling}} & & \multirow{2}{*}{$\bm{(1+\epsilon,\frac{4(1+\epsilon)^2}{\epsilon^{2}})}$} \\
differentiable & & & & \\
\hline
concave & \textbf{LIFO,} $\bm{(1+\epsilon,\frac{1+\epsilon}{\epsilon})}$ & \textbf{HDF,} $\bm{(1+\epsilon,\frac{1+\epsilon}{\epsilon})}$ & & \\
\hline
\multirow{2}{*}{general} & FIFO, $(1+\epsilon, \frac{4}{\epsilon^{2}})$ \cite{ImMoseley12:Online-scheduling}
 & HDF, $(2+\epsilon, O(\frac{1}{\epsilon}))$ \cite{ImMoseley12:Online-scheduling}
 & \multirow{2}{*}{\textbf{HDF,} $\bm{(1+\epsilon,\frac{1+\epsilon}{\epsilon})}$} & \\
 & \textbf{FIFO}, $\bm{(1+\epsilon,\frac{1+\epsilon}{\epsilon} })$ & $(7/6-\epsilon, \Omega(1))$ \cite{ImMoseley12:Online-scheduling} & & \\
\hline
\end{tabular}} \\[5pt]
  \caption{Summary of results for generalized flow time and completion time problems on a single machine. The
  $(\alpha,\beta)$ notation describes an algorithm which is $\alpha$-speed $\beta$-competitive.
  Our contribution (excluding our result on the PSP problem) is shown in bold.}
  \label{table:results}
\end{table}

In Section~\ref{section:flow} we begin with an interpretation of HDF as a primal-dual algorithm for total weighted flow time.
Our analysis, albeit significantly more complicated than the known combinatorial one~\cite{BecchettiLeonardi06:Online-weighted},
yields insights about more complex problems. One may draw parallels with the successful application of dual fitting in
approximation algorithms; for instance, the dual-fitting analysis of set cover that has led to improved algorithms for
metric facility location~\cite{Jain:Factor-revealing:2003}.
We also note that our approach differs from~\cite{DevanurHuang14:Primal-Dual}
(in which the objective is to minimize the sum of energy and weighted flow time),
even though the two settings are seemingly similar.
More precisely, the relaxation considered in~\cite{DevanurHuang14:Primal-Dual} consists only of covering constraints,
whereas for minimizing weighted flow~time, one has to consider both covering and packing constraints in the primal LP.

In Sections~\ref{section:framework-opt} and~\ref{sec:competitive} we expand the salient ideas behind the above analysis of HDF
and derive a framework which is applicable to more complicated objectives.
More precisely, we show that HDF is $(1+\epsilon)$-speed $\frac{1+\epsilon}{\epsilon}$-competitive for GFP with concave functions,
improving the $(1+\epsilon)$-speed $O(\frac{1}{\epsilon^2})$-competitive analysis of WLAPS~\cite{ImMoseley12:Online-scheduling}, and removing the assumption that $g$ is twice-differentiable.
For GFP with general cost functions and jobs of the same density, we show that FIFO is $(1+\epsilon)$-speed $\frac{1+\epsilon}{\epsilon}$-competitive,
which improves again the analysis in~\cite{ImMoseley12:Online-scheduling} by a factor of $O(\frac{1}{\epsilon})$ in the competitive ratio.
For the special case of GFP with equal-density jobs and convex (resp. concave) cost functions we show that FIFO (resp. LIFO) are fractionally optimal,
and $(1+\epsilon)$-speed $\frac{1+\epsilon}{\epsilon}$-competitive for the integral objective.

In addition, we apply our framework to the following problems:
i) online GCP: here, we show that HDF is optimal for the fractional objective,
and $(1+\epsilon)$-speed $\frac{1+\epsilon}{\epsilon}$-competitive for the integral one;
and ii) online PSP assuming a matrix $B$ of strictly positive elements:
here, we derive an adaptation of HDF which we prove is 1-competitive and which requires resource augmentation
$\max_j \frac{B_j}{b_j}$, with $B_j =\max_i b_{ij}$ and $b_j=\min_i b_{ij}$.

Last, in Section~\ref{sec:jdgfp} we extend ideas of~\cite{ImKulkarni14:SELFISHMIGRATE:-A-Scalable}, using, in addition, the Lagrangian
relaxation of a non-convex formulation for the online JDGFT problem.
We thus obtain a non-oblivious $(1+\epsilon)$-speed $\frac{4(1+\epsilon)^2}{\epsilon^2}$-competitive algorithm,
assuming each function $g_j$ is concave and differentiable.
Note that this result does not rely on our framework.

Table~\ref{table:results} summarizes the results of this paper in comparison to previous work.
\bigskip

\noindent
{\bf Notation.} \
Let $z$ be a job that is released at time $\tau$.
For a given scheduling algorithm, we denote by $P_\tau$ the set of pending jobs at time $\tau$
(i.e., jobs released up to and including $\tau$ but not yet completed),
and by $C_{\max}^\tau$ the last completion time among jobs in $P_\tau$, assuming no jobs are released after $\tau$.
We also define $R_\tau$ as the set of all jobs released up to and included $\tau$
(which may or may not have been completed at $\tau$)
and ${\cal J}_\tau$ as the set of all jobs that have been completed up to time $\tau$.

\section{Linear programming relaxation}
\label{sec:preliminaries}

In order to give a linear programming relaxation of GFP, we pass through the corresponding fractional variant of the objective.
Formally, let $q_{j}(t)$ be the remaining processing time of job $j$ at time $t$ (in a schedule).
The {\em fractional remaining weight} of $j$ at time $t$ is defined as $w_{j} q_{j}(t)/p_{j}$.
The fractional objective of GFP is now defined as $\sum_{j} \int_{r_{j}}^{\infty} w_{j} \frac{q_{j}(t)}{p_{j}} g'(t-r_{j})$.
Note that the fractional objective is a lower bound to the integral one, since 
\begin{equation*}
\sum_{j} \int_{r_{j}}^{\infty} w_{j} \frac{q_{j}(t)}{p_{j}} g'(t-r_{j})
 < \sum_{j} \int_{r_{j}}^{C_j} w_{j} g'(t-r_{j})
 < \sum_{j} w_{j} g(C_j-r_{j})
\end{equation*}
where the last inequality is due to the assumption that $g$ is non-decreasing.
An advantage of fractional GFP is that it admits a linear-programming formulation
(in fact, the same holds even for the stronger problem JDGFP (see Appendix~\ref{app:lp.formulation})).
Let $x_j(t) \in [0,1]$ be a variable that indicates the execution rate of $j \in \mathcal{J}$ at time $t$.
The primal and dual LPs are:

\hspace*{-1.5cm}
{\small
\begin{minipage}[t]{0.43\textwidth}
\begin{align}
\min \sum_{j \in \mathcal{J}} \delta_j \int_{r_j}^{\infty} &g(t-r_j) x_j(t) dt \tag{$P$} \\
\int_{r_j}^{\infty} x_j(t) dt &\geq p_j \quad \forall j \in \mathcal{J} \label{LP1}\\
\sum_{j \in \mathcal{J}} x_j(t) &\leq 1 \quad \forall t \geq 0 \label{LP2}\\
x_j(t) &\geq 0 \quad \forall j \in \mathcal{J}, t \geq 0 \notag
\end{align}
\end{minipage}
$\quad \quad$
\begin{minipage}[t]{0.55\textwidth}
\begin{align}
\max \sum_{j \in \mathcal{J}} \lambda_j & p_j - \int_0^{\infty} \gamma(t) dt \tag{$D$} \\
\lambda_j - \gamma(t) &\leq \delta_{j} g(t-r_j) & & \forall j \in \mathcal{J}, t \geq r_j \label{DLP1}\\
\lambda_j, \gamma(t) &\geq 0 & & \forall j \in \mathcal{J}, \forall t \geq 0 \notag
\end{align}
\end{minipage}
}
\bigskip

In this paper, we avoid the use of the ``black-box'' transformation (see for example \cite{BecchettiLeonardi06:Online-weighted,ImMoseley11:A-tutorial-on-amortized}) from fractional to integral GFP.
However, we also consider the fractional objective of GFP as a lower bound for the integral one.
Specifically, we will prove the performance of an algorithm by comparing its integral objective to that of a feasible dual solution $(D)$.
Note that by weak duality the latter is upper-bounded by the optimal solution of $(P)$, which equals to the optimal solution for fractional~GFP and hence it is a lower bound of the optimum solution for integral~GFP.

Moreover, we will analyze algorithms that are $\alpha$-speed $\beta$-competitive.
In other words, we compare the performance of our algorithm to an offline optimum with speed $1/\alpha$ ($\alpha, \beta>1$).
In turn, the cost of this offline optimum is the objective of a variant of $(P)$ in which constraints~(\ref{LP2}) are replaced by constraints $\sum_{j \in \mathcal{J}} x_j(t) \leq 1/\alpha$ for all $t \geq 0$.
The corresponding dual is the same as $(D)$, with the only difference that the objective is equal to
$\sum_{j \in \mathcal{J}} \lambda_j p_j - \frac{1}{\alpha}\int_0^{\infty} \gamma(t) dt$.
We denote these modified primal and dual LP's by $(P_\alpha)$ and $(D_\alpha)$, respectively.
In order to prove that the algorithm is $\alpha$-speed $\beta$-competitive, it will then be sufficient
to show that there is a feasible dual solution to $(D_\alpha)$ for which the
algorithm's cost is at most $\beta$ times the objective of the solution.

\section{A primal-dual interpretation of HDF for $\sum_j w_jF_j$}
\label{section:flow}

In this section we give an alternative statement of HDF as a primal-dual algorithm for the total weighted flow time problem.
We consider the primal and dual programs defined in the previous section where the function $g$ is considered to be the identity function, i.e., $g(x)=x$.
We begin with an intuitive understanding of the complementary slackness (CS) conditions.
In particular, the primal CS condition states that for a given job $j$ and time $t$,
if $x_j(t)>0$, i.e., if the algorithm were to execute job $j$ at time $t$, then it should be that $\gamma(t)=\lambda_j-\delta_j(t-r_j)$.
We would like then the dual variable $\gamma(t)$ to be such that we obtain some information about which job to schedule at time $t$.
To this end, for any job $j \in {\cal J}$, we define the line $\gamma_j(t)=\lambda_j-\delta_{j}(t-r_j)$, with domain $[r_j,\infty)$.
The slope of this line is equal to the negative density of the job, i.e, $-\delta_j$.
Our algorithm will always choose $\gamma(t)$ to be equal to $\max\{0,\max_{j \in \mathcal{J}: r_j \leq t}\{\gamma_j(t)\}\}$ for every $t \geq 0$.
We say that at time $t$ the line $\gamma_j$ (or the job $j$) is {\em dominant} if $\gamma_j(t)=\gamma(t)$;
informally, $\gamma_j$ is above $\gamma_{j'}$, for all $j' \neq j$.
We can thus restate the primal CS condition as a {\em dominance} condition: if a job $j$ is executed at time $t$, then $\gamma_j$ must be dominant at $t$.

We will consider a class of scheduling algorithms, denoted by ${\cal A}$, that comply to the following rules:
i) the processor is never idle if there are pending jobs; and
ii) if at time $\tau$ a new job $z$ is released, the algorithm will first decide an ordering on the set $P_\tau$ of all pending jobs at time $\tau$.
Then for every $t \geq \tau$, it schedules all jobs in $P_\tau$ according to the above ordering, unless a new job arrives after $\tau$.

We now proceed to give a primal-dual algorithm in the class ${\cal A}$ (which will turn out to be identical to HDF).
The algorithm will use the dominance condition so as to decide how to update the dual variables $\lambda_j$,
and, on the primal side, which job to execute at each time.
Note that once we define the $\lambda_j$'s, the lines $\gamma_j$'s as well as $\gamma(t)$ are well-defined, as we emphasized earlier.
In our scheme we change the primal and dual variables only upon arrival of a new job, say at time $\tau$.
We also modify the dual variables for jobs in ${\cal J}_\tau$, i.e., jobs that have already completed in the past (before time $\tau$)
without however affecting the primal variables of the past, so as to comply with the online nature of the problem.

\begin{figure}[h!]
\begin{center}
\begin{minipage}[t]{0.34\textwidth}
\begin{tikzpicture}[scale=1]
% axis
\draw[thick,->] (0,0) -- (4,0) node(xline)[below,font=\scriptsize] {$t$};
\draw[thick,->] (0,0) -- (0,3.5) node(yline)[left,font=\scriptsize] {$\gamma(t)$};
\draw[blue,dashed] (0,2.5) -- (1.8,0);
\draw[blue] (0,2.5) -- (1.4,0.55);
\draw[red,dashed] (0.5,1.1) -- (2.3,0);
\draw[red] (1.4,0.55) -- (2.3,0);
\draw[dotted] (0.0,1.1) -- (0.5,1.1); %c1
\node[font=\tiny] at (-0.2,2.5) {$\lambda_1$};
\node[font=\tiny] at (-0.2,1.1) {$\lambda_2$};
\draw[dotted] (1.4,0.0) -- (1.4,0.55);
\draw[dotted] (0.5,0.0) -- (0.5,1.1);
\node[font=\tiny] at (0,-0.2) {$r_1$};
\node[font=\tiny] at (0.5,-0.2) {$r_2$};
%\node[font=\tiny] at (0.9,-0.2) {$r_3$};
\node[font=\tiny] at (1.4,-0.2) {$C_1$};
\node[font=\tiny] at (2.3,-0.2) {$C_2$};
\end{tikzpicture}
\centerline{(a)}
\end{minipage}
\begin{minipage}[t]{0.29\textwidth}
\begin{tikzpicture}[scale=1]
\fill[blue!10] (0,3.14) -- (1.4,3.14) -- (1.4,1.2) -- (0,3.14);
\fill[red!10] (2,1.5) -- (2.9,1.5) -- (2.9,0) -- (2,0.575) --(2,1.5);
\fill[green!15] (1.4,1.72) -- (2,1.72) -- (2,0.575) -- (1.4,1.2) -- (1.4,1.72);
\draw[thick,->] (0,0) -- (4,0) node(xline)[below,font=\scriptsize] {$t$};
\draw[thick,->] (0,0) -- (0,3.5) node(yline)[left,font=\scriptsize] {$\gamma(t)$};
\draw[blue,dashed] (0,3.14) -- (2.26,0.0); %gamma1
\draw[blue] (0,3.14) -- (1.4,1.2); %gamma1
\draw[red,dashed] (0.5,1.5) -- (2.9,0.0); %gamma2
\draw[red] (2.0,0.575) -- (2.9,0.0); %gamma2
\draw[green!50!black,dashed] (0.9,1.72) -- (2.53,0.0); %gamma3
\draw[green!50!black] (1.4,1.2) -- (2,0.575); %gamma3
\draw[dotted] (0.0,3.14) -- (1.4,3.14); %lambda1
\draw[dotted] (0.0,1.5) -- (2.9,1.5); %lambda2
\draw[dotted] (0.0,1.72) -- (2.0,1.72); %lambda3
\node[font=\tiny] at (-0.2,3.14) {$\lambda_1$};
\node[font=\tiny] at (-0.2,1.5) {$\lambda_2$};
\node[font=\tiny] at (-0.2,1.72) {$\lambda_3$};
\draw[dotted] (1.4,0.0) -- (1.4,3.14); %C1
\draw[dotted] (2.9,0.0) -- (2.9,1.5); %C2
\draw[dotted] (2.0,0.0) -- (2.0,1.72); %C3
\draw[dotted] (0.5,0.0) -- (0.5,1.5); %r2
\draw[dotted] (0.9,0.0) -- (0.9,1.72); %r3
\node[font=\tiny] at (0,-0.2) {$r_1$};
\node[font=\tiny] at (0.5,-0.2) {$r_2$};
\node[font=\tiny] at (0.9,-0.2) {$r_3$};
\node[font=\tiny] at (1.4,-0.2) {$C_1$};
\node[font=\tiny] at (2.9,-0.2) {$C_2$};
\node[font=\tiny] at (2.0,-0.2) {$C_3$};
\end{tikzpicture}
\centerline{(b)}
\end{minipage}
\begin{minipage}[t]{0.34\textwidth}
\begin{tikzpicture}[scale=1]
\fill[blue!10] (0,0) -- (1.4,1.945) -- (1.4,0) -- (0,0);
\fill[red!10] (2,0) -- (2,0.91) -- (2.9,1.465) -- (2.9,0) -- (2,0);
\fill[green!15] (1.4,0) -- (1.4,0.46) -- (2,1.03) -- (2,0) -- (1.4,0);
\draw[thick,->] (0,0) -- (4,0) node(xline)[below,font=\scriptsize] {$t$};
\draw[thick,->] (0,0) -- (0,3.5) node(yline)[left,font=\scriptsize] {$\delta_j(t-r_j)$};
\draw[blue,dashed] (0,0) -- (2.52,3.5); %line1
\draw[blue] (0,0) -- (1.4,1.945); %line1
\draw[red,dashed] (0.5,0) -- (3.2,1.65); %line2
\draw[red] (2,0.91) -- (2.9,1.465); %line2
\draw[green!50!black,dashed] (0.9,0) -- (3.2,2.18); %line3
\draw[green!50!black] (1.4,0.46) -- (2,1.03); %line3
\draw[dotted] (1.4,0.0) -- (1.4,1.925); %C1
\draw[dotted] (2.9,0.0) -- (2.9,1.465); %C2
\draw[dotted] (2.0,0.0) -- (2.0,1.08); %C3
\node[font=\tiny] at (0,-0.2) {$r_1$};
\node[font=\tiny] at (0.5,-0.2) {$r_2$};
\node[font=\tiny] at (0.9,-0.2) {$r_3$};
\node[font=\tiny] at (1.4,-0.2) {$C_1$};
\node[font=\tiny] at (2.9,-0.2) {$C_2$};
\node[font=\tiny] at (2.0,-0.2) {$C_3$};
\end{tikzpicture}
\centerline{(c)}
\end{minipage}
\caption{Figure~(a) depicts the situation right before $\tau$:
the two lines $\gamma_1$, $\gamma_2$ correspond to two pending jobs prior to the release of $z$. In addition,
$\gamma(t)$ is the upper envelope of the two lines. Figure~(b) illustrates the situation
after the release of a third job $z$ at time $\tau=r_3$; the area of the shaded regions is the dual objective. In
Figure~(c), the area of the shaded regions is the primal fractional objective for the three jobs of Figure~(b).}
\label{fig:primaldual}
\end{center}
\end{figure}

By induction, suppose that the primal-dual algorithm $A \in {\cal A}$ satisfies the dominance condition up to time $\tau$, upon which a new job $z$ arrives.
Let $q_{j}$ be the remaining processing time of each job $j \in P_\tau$ at time $\tau$ and $|P_\tau|=k$.
Each $j \in P_\tau$ has a corresponding line $\gamma_j$, once $\lambda_j$ is defined.
To satisfy CS conditions, each line $\gamma_j$ must be defined such that to be dominant for a total period of time at least $q_j$, in $[\tau, \infty)$.
The crucial observation is that, if a line $\gamma_j$ is dominant at times $t_1, t_2$, it must also be dominant in the entire interval $[t_1,t_2]$.
This implies that for two jobs $j_1, j_2 \in P_\tau$, such that $j_1$ (resp. $j_2$) is dominant at time $t_1$ (resp. $t_2$), if $t_1 <t_2$
then the slope of $\gamma_{j_1}$ must be smaller than the slope of $\gamma_{j_2}$ (i.e., $-\delta_{j_1}\leq -\delta_{j_2}$).
We derive that $A$ must make the same decisions as HDF.
Consequently, the algorithm $A$ orders the jobs in $P_\tau$ in non-decreasing order of the slopes of the corresponding lines $\gamma_j$
(note that the slope of the lines is independent of the $\lambda_j$'s).
For every job $j \in P_\tau$, define $C_j= \tau + \sum_{j' \prec j} q_{j'}$, where the precedence is according to the above ordering of $A$.
These are the completion times of jobs in $P_\tau$ in $A$'s schedule, if no new jobs are released after time $\tau$;
so we set the primal variables $x_j(t)=1$ for all $t \in (C_{j-1}, C_j]$.
Procedure~\ref{algo:linear1} formalizes the choice of $\lambda_j$ for all $j \in P_\tau$;
intuitively, it ensures that if a job $j \in P_\tau$ is executed at
time $t>\tau$ then $\gamma_j$ is dominant at $t$ (see Figure~\ref{fig:primaldual} for an illustration).

\begin{algorithm}[h]
\begin{algorithmic}[1]
\STATE Consider the jobs in $P_\tau$ in increasing order of completion times if no new jobs are released after time $\tau$, i.e., $C_1<C_2<\ldots <C_k$
\STATE Choose $\lambda_{k}$ such that $\gamma_{k}(C_{k}) = 0$
\FOR{each pending job $j = k-1$ to $1$}
    \STATE Choose $\lambda_{j}$ such that $\gamma_{j}(C_{j}) = \gamma_{j+1}(C_{j})$
\ENDFOR
\end{algorithmic}
\caption{Assignment of dual variables $\lambda_j$ for all $j \in P_\tau$ at the arrival of a new job $z$.}
\label{algo:linear1}
\end{algorithm}

We first observe a monotonicity property of $\lambda_j$'s.
 
\begin{lemma} \label{lemma:monotonicitylinear}
By Procedure~\ref{algo:linear1}, the value of the dual variable $\lambda_j$, $j \in P_\tau$, can be only increased after the arrival of a new job $z$ at time $\tau$.
\end{lemma}
\begin{proof}
Let $\lambda'_{j}$ and $\lambda_{j}$ be the value of the dual variable of $j$ before and after the arrival of $z$, respectively.
We consider the following two cases.

\begin{description} 
\item[Job $j$ is delayed by $z$.]
Let $C_{j}$ be the new completion time of job $j$. So before the arrival of $z$, the completion of $j$ was $C_{j} - p_{z}$.   
Thus, by Procedure~\ref{algo:linear1}, we have that $\gamma_j(C_j)=\gamma'_j(C_j-p_z)$ where $\gamma'_{j}$ is the corresponding line of $j$ before the arrival of $z$.
Thus, $\lambda_{j} - \delta_{j}(C_{j} - r_{j}) = \lambda'_{j} - \delta_{j}(C_{j} - p_{z} - r_{j})$, that is $\lambda_{j} > \lambda'_{j}$.
\item[Job $j$ is not delayed by $z$.]
In this case, Procedure~\ref{algo:linear1} increases the dual variable of $j$
by $\gamma_{z}(C_{z} - p_{z} - r_{z}) - \gamma_{z}(C_{z} - r_{z}) > 0$.
Hence, $\lambda_{j} > \lambda'_{j}$.
\end{description}
The lemma follows.
\end{proof}

\begin{figure}[h!]
\begin{tikzpicture}[scale=1]
% axis
\draw[thick,->] (0,0) -- (8,0) node(xline)[below,font=\scriptsize] {$t$};
\draw[thick,->] (0,0) -- (0,4.2) node(yline)[left,font=\scriptsize] {$\gamma(t)$};
%
%y/x = (3.9/1.6)
\node[font=\tiny] at (0,-0.2) {$r_1$};
\node[font=\tiny] at (0.8,-0.2) {$C_1$};
\node[font=\tiny] at (-0.2,3.9-0.489) {$\lambda_1$};
\draw[blue,dashed] (0,3.9-0.489) -- (1.399,0);
\draw[blue,thick] (0,3.9-0.489) -- (0.8,1.95-0.489);
\draw[dotted] (0.8,1.95-0.489) -- (0.8,0); %c1

%y/x = (2.1/7.3)
\node[font=\tiny] at (0.3,-0.2) {$r_2$};
\node[font=\tiny] at (5.9,-0.2) {$C_2$};
\node[font=\tiny] at (-0.2,2.1-0.489) {$\lambda_2$};
\draw[red,dashed] (0.3,2.1-0.489) -- (5.9,0);
\draw[red,thick] (0.8,1.95-0.489) -- (1.8,1.66233-0.489);
\draw[red,thick] (3.9,0.57534) -- (5.9,0);
\draw[dotted] (0.0,2.1-0.489) -- (0.3,2.1-0.489); %lambda2
\draw[dotted] (0.3,0.0) -- (0.3,2.1-0.489); %r2

%y/x = (4.6/2.5)
\node[font=\tiny] at (1.8,-0.2) {$r_3$};
\node[font=\tiny] at (3.1,-0.2) {$C_3$};
\node[font=\tiny] at (-0.2,4.6-0.84) {$\lambda_3$};
\draw[yellow!60!black,dashed] (1.8,4.6-0.84) -- (4.3-0.84*2.5/4.6,0);
\draw[yellow!60!black,thick] (1.8,4.6-0.84) -- (3.1,2.208-0.84);
\draw[dotted] (0.0,4.6-0.84) -- (1.8,4.6-0.84); %lambda3
\draw[dotted] (1.8,0.0) -- (1.8,4.6-0.84); %r3
\draw[dotted] (3.1,0.0) -- (3.1,2.208-0.84); %c3

%y/x = (2.208/2.2)
\node[font=\tiny] at (2.4,-0.2) {$r_4$};
\node[font=\tiny] at (3.9,-0.2) {$C_4$};
\node[font=\tiny] at (-0.2,2.91054-0.84) {$\lambda_4$};
\draw[orange!80,dashed] (2.4,2.91054-0.84) -- (5.3-0.84*2.2/2.208,0);
\draw[orange!80,thick] (3.1,2.208-0.84) -- (3.9,1.4051-0.84);
\draw[dotted] (0.0,2.91054-0.84) -- (2.4,2.91054-0.84); %lambda4
\draw[dotted] (2.4,0.0) -- (2.4,2.91054-0.84); %r4
\draw[dotted] (3.9,0.0) -- (3.9,1.4051-0.84); %c4
\end{tikzpicture}
\\
%%%%%%%%%%%%%%%%%%%%%%%%%
%%%%%%%%%%%%%%%%%%%%%%%%%
%%%%%%%%%%%%%%%%%%%%%%%%%
\begin{tikzpicture}[scale=1]
% axis
\draw[thick,->] (0,0) -- (8,0) node(xline)[below,font=\scriptsize] {$t$};
\draw[thick,->] (0,0) -- (0,5) node(yline)[left,font=\scriptsize] {$\gamma(t)$};
%
%y/x = (3.9/1.6)
\node[font=\tiny] at (0,-0.2) {$r_1$};
\node[font=\tiny] at (0.8,-0.2) {$C_1$};
\node[font=\tiny] at (-0.2,3.9) {$\lambda_1$};
\draw[blue,dashed] (0,3.9) -- (1.6,0);
\draw[blue,thick] (0,3.9) -- (0.8,1.95);
\draw[dotted] (0.8,1.95) -- (0.8,0); %c1

%y/x = (2.1/7.3)
\node[font=\tiny] at (0.3,-0.2) {$r_2$};
\node[font=\tiny] at (7.6,-0.2) {$C_2$};
\node[font=\tiny] at (-0.2,2.1) {$\lambda_2$};
\draw[red,dashed] (0.3,2.1) -- (7.6,0);
\draw[red,thick] (0.8,1.95) -- (1.8,1.66233);
\draw[red,thick] (5.6,0.58142) -- (7.6,0);
\draw[dotted] (0.0,2.1) -- (0.3,2.1); %lambda2
\draw[dotted] (0.3,0.0) -- (0.3,2.1); %r2

%y/x = (4.6/2.5)
\node[font=\tiny] at (1.8,-0.2) {$r_3$};
\node[font=\tiny] at (3.1,-0.2) {$C_3$};
\node[font=\tiny] at (-0.2,4.6) {$\lambda_3$};
\draw[yellow!60!black,dashed] (1.8,4.6) -- (4.3,0);
\draw[yellow!60!black,thick] (1.8,4.6) -- (3.1,2.208);
\draw[dotted] (0.0,4.6) -- (1.8,4.6); %lambda3
\draw[dotted] (1.8,0.0) -- (1.8,4.6); %r3
\draw[dotted] (3.1,0.0) -- (3.1,2.208); %c3

%y/x = (2.208/2.2)
\node[font=\tiny] at (2.4,-0.2) {$r_4$};
\node[font=\tiny] at (3.9,-0.2) {$C_4$};
\node[font=\tiny] at (-0.2,2.91054) {$\lambda_4$};
\draw[orange!80,dashed] (2.4,2.91054) -- (5.3,0);
\draw[orange!80,thick] (3.1,2.208) -- (3.9,1.4051);
\draw[dotted] (0.0,2.91054) -- (2.4,2.91054); %lambda4
\draw[dotted] (2.4,0.0) -- (2.4,2.91054); %r4
\draw[dotted] (3.9,0.0) -- (3.9,1.4051); %c4

%y/x = (1.4051/2.9)
\node[font=\tiny] at (3.6,-0.2) {$r_5$};
\node[font=\tiny] at (5.6,-0.2) {$C_5$};
\node[font=\tiny] at (-0.2,1.55045) {$\lambda_5$};
\draw[green!50!black,dashed] (3.6,1.55045) -- (6.8,0);
\draw[green!50!black,thick] (3.9,1.4051) -- (5.6,0.58142);
\draw[dotted] (0.0,1.55045) -- (3.6,1.55045); %lambda5
\draw[dotted] (3.6,0.0) -- (3.6,1.55045); %r5
\draw[dotted] (5.6,0.0) -- (5.6,0.58142); %c5
\end{tikzpicture}
\caption{The figure depictes the lines corresponding to the dual variables before (above) and after (below) the arrival of job 5 at time $\tau=r_5$.
The set of completed jobs at time $\tau$ is $\mathcal{J}_\tau=\{1,3\}$.
Moreover, Procedure~\ref{algo:linear2} defines three disjoint sets: $\{1,2\}$ with representative 2, $\{5\}$ with representative 5 and $\{3,4\}$ with representative 4.}
\label{fig:sets}
\end{figure}

The following lemma shows that if no new jobs were to be released after time $\tau$, HDF would guarantee the dominance
condition for all times $t \geq \tau$.

\begin{lemma}[future dominance] \label{lemma:dominancelinear}
For $\lambda_{j}$'s as defined by Procedure~\ref{algo:linear1}, and $A \equiv HDF$,
if job $j \in P_\tau$ is executed at time $t \geq \tau$, then $\gamma_j$ is dominant at $t$,
assuming that no new jobs are released after time $\tau$.
\end{lemma}
\begin{proof}
Suppose that the jobs in $P_\tau$ are in increasing order of their completion times, i.e., $C_1 < C_2 < \ldots < C_k$.
In order to prove the lemma, we have to show that
\begin{eqnarray}
%\lambda_{j} - \delta_{j}(t-r_{j}) \leq \lambda_{j+1} - \delta_{j+1}(t-r_{j+1}) & & \forall t \geq C_{j}
\gamma_j(t) \leq \gamma_{j+1}(t) & & \forall t \geq C_{j}
\label{eq:dominant-ineq-1} \\
%\lambda_{j} - \delta_{j}(t-r_{j}) \geq \lambda_{j+1} - \delta_{j+1}(t-r_{j+1}) & & \forall t \in [\tau,C_{j}]
\gamma_j(t) \geq \gamma_{j+1}(t) & & \forall t \in [\tau,C_{j}]
\label{eq:dominant-ineq-2}
\end{eqnarray}
Recall that by the choice of $\lambda_{j}$'s, we have that $\lambda_{j} - \delta_{j}(C_{j}-r_{j}) = \lambda_{j+1} - \delta_{j+1}(C_{j}-r_{j+1})$.
Therefore,
\begin{eqnarray}
\lambda_{j} - \delta_{j}(t-r_{j}) & = & \lambda_{j+1} - \delta_{j+1}(C_{j}-r_{j+1}) + \delta_{j}(C_{j}-r_{j}) - \delta_{j}(t-r_{j}) \notag \\
 & = & \lambda_{j+1} - \delta_{j+1}(t-r_{j+1}) + (\delta_j-\delta_{j+1}) (C_{j}-t) \label{eq:dominant-critical}
\end{eqnarray}
Since $A \equiv HDF$, we have that $\delta_j - \delta_{j+1} \geq 0$.
If $t \geq C_j$ then (\ref{eq:dominant-ineq-1}) follows, while if $t < C_j$ then (\ref{eq:dominant-ineq-2}) follows.
\end{proof}

By Lemma~\ref{lemma:monotonicitylinear}, Procedure~\ref{algo:linear1} modifies (increases) the $\lambda_j$ variables of all jobs pending at time $\tau$.
In turn, this action may violate the dominance condition {\em prior to $\tau$}.
We thus need a second procedure that will rectify the dominance condition for $t \leq \tau$.

We consider again the jobs in $P_\tau$ in increasing order of their completion times, i.e., $C_1 < C_2 < \ldots < C_k$, with $k=|P_\tau|$.
We partition $R_{\tau}$ into $k$ disjoint sets $S_1,S_2,\ldots,S_{k}$.
Each set $S_j$ is initialized with the job $j \in P_\tau$, which is called the \emph{representative} element of $S_{j}$
(we use the same index to denote the set and its representative job).
Informally, the set $S_{j}$ will be constructed in such a way that it will contain
all jobs $a \in \mathcal{J}_{\tau}$ whose corresponding variable $\lambda_a$ will be increased by the same amount in the procedure.
This amount is equal to the increase, say $\Delta_j$, of $\lambda_j$, due to Procedure~\ref{algo:linear1} for the representative job of~$S_j$.

Fig.~\ref{fig:sets} gives an example of the definition of the above sets.
We can also observe that the distance between $\lambda_1$ and $\lambda_2$, as well as, between $\lambda_3$ and $\lambda_4$ remains the same before and after the arrival of the job 5.
However, this is not true for jobs of different sets (see for example $\lambda_1$ and $\lambda_3$).
We then define Procedure~\ref{algo:linear2} that describes formally the increase in the dual variables for jobs in~$\mathcal{J}_{\tau}$.

\begin{algorithm}
\begin{algorithmic}[1]
\FOR {$j=1$ to $k$}
    \STATE Add $j$ in $S_j$
\ENDFOR
\FOR {each job $a \in \mathcal{J}_{\tau}$ in decreasing order of completion times (defined under the assumption that no new jobs are released after time $\tau$)}
    \STATE Let $b$ be the job such that $\gamma_{a}(C_{a})=\gamma_{b}(C_{a})$
    \STATE Let $S_j$ be the set that contains $b$
    \STATE Add $a$ in $S_j$
\ENDFOR
\FOR {each set $S_{j}$, $1 \leq j \leq k$,}
    \STATE Let $\Delta_j$ be the increase of $\lambda_j$, due to Procedure~\ref{algo:linear1}, for the representative of $S_j$
    \STATE Increase $\lambda_{a}$ by an amount of $\Delta_{j}$ for all $a \in S_{j} \setminus \{j\}$
\ENDFOR
\end{algorithmic}
\caption{Updating of dual variables $\lambda_j$ for all jobs $j\in {\cal J}_\tau$ at the arrival of a new job $z$.}
\label{algo:linear2}
\end{algorithm}

Geometrically, the update operation is a vertical translation of the line $\gamma(t)$ for $t < \tau$.
The following lemma ensures that the job $b$ selected in Line~4 of Procedure~\ref{algo:linear2} always exists.

\begin{lemma}
For each job $a \in \mathcal{J}_{\tau}$, there is always a job $b$ such that $\gamma_{a}(C_{a})=\gamma_{b}(C_{a})$ (except if $a$ is the last completed job of the schedule).
Moreover, Procedure~\ref{algo:linear2} adds $b$ to a set $S_j$ before treating $a$.
\end{lemma}
\begin{proof}
Let $c$ be the last job arrived before the completion of $a$, i.e., $r_c \leq C_a$ and there is no job $c'$ with $r_c < r_{c'} \leq C_a$.
Note that $a$ may coincide with $c$.
Since $a \in P_{r_c}$, there is a job $b$ such that $\gamma_{a}(C_{a})=\gamma_{b}(C_{a})$ as this is defined by Line~4 of Procedure~\ref{algo:linear1}, except if $a$ is the last completed job of the schedule (Line~2 of Procedure~\ref{algo:linear1}).
After this point only Procedure~\ref{algo:linear2} can raise this property.
However, Procedure~\ref{algo:linear2} always groups the jobs $a$ and $b$ in the same set, let $S_j$.
Since all lines corresponding to jobs in $S_j$ are vertically translated by the same quantity, the first part of the lemma follows.

If $b \in P_\tau$, then the second part of the lemma directly holds.
Otherwise, consider again the last job, $c$, arrived before the completion of $a$.
By the definition of $c$ and the execution of Procedure~\ref{algo:linear1} at time $r_c$, we know that $a$ completes before $b$ and hence it is treated before $a$ by Procedure~\ref{algo:linear2}.
\end{proof} 

The following lemma shows that, if a line $\gamma_j$ was dominant for a time $t < \tau$ prior to the arrival of the new job at time $\tau$, then it will remain dominant after the application of Procedures~\ref{algo:linear1} and~\ref{algo:linear2}.

\begin{lemma}[past dominance] \label{lemma:consistencylinear}
For $\lambda_{j}$'s as defined by both Procedure~\ref{algo:linear1} and Procedure~\ref{algo:linear2}, and $A \equiv HDF$,
if job $j \in {\cal J}_\tau \cup P_\tau$ is executed at time $t<\tau$, then $\gamma_j$ is dominant at $t$.
\end{lemma}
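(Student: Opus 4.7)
The plan is to show that the partition $\{S_j\}_{j\in P_\tau}$ produced by Procedure~\ref{algo:linear2} induces a tiling of the past timeline $[0,\tau)$ into contiguous blocks, one per pending job, where block $I_j$ is the union of past dominance intervals of jobs in $S_j$. Together with the observation that all lines within a single $S_j$ are shifted by the same amount $\Delta_j$, this structural property reduces the verification to cross-block comparisons.

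First I would establish the contiguity of the blocks. By lines~5--7 of Procedure~\ref{algo:linear2}, every completed job $a$ is placed in the same set as the job $b$ that is dominant immediately after $C_a$. Tracing this successor relation forward in time starting from any completed job ultimately lands at a pending job, which is the representative of the containing set. Consequently, transitions in the past dominance sequence triggered by a job completion stay within a single set, whereas transitions triggered by the arrival of a new pending job are the only ones that can cross from one set to the next; these arrival transitions define the block boundaries.

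Given this tiling, I would verify post-update dominance. Fix $t<\tau$ with past dominant job $d(t)\in S_j$ and any other job $a$ with $r_a\le t$. If $a\in S_j$, both $\gamma_{d(t)}$ and $\gamma_a$ are shifted by the same amount $\Delta_j$, so the pre-update inequality $\gamma_{d(t)}(t)\ge\gamma_a(t)$---which holds by inductive application of the statement at the previous arrival---immediately yields the post-update inequality. If $a\in S_{j'}$ with $j'\ne j$, the contiguity of blocks places $a$'s dominance interval in a different block than $t$, separated from $I_j$ by an arrival-time boundary. The required inequality $\gamma_{d(t)}(t)-\gamma_a(t)\ge\Delta_{j'}-\Delta_{j}$ then follows from the ``envelope jump'' at that arrival boundary: the newly-released job's $\lambda$-value was originally set so that its line dominates at its arrival time, and the shift differentials, propagated through the meeting conditions $\gamma_j(C_j)=\gamma_{j+1}(C_j)$ of Procedure~\ref{algo:linear1}, are absorbed by this jump.

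The main obstacle, I expect, will be the cross-block verification. The clean geometric picture---lines within a block shifted uniformly, with the envelope's jump at an arrival boundary absorbing the cross-block shift differential---suggests an argument analogous to equation~(\ref{eq:dominant-critical}) in the proof of Lemma~\ref{lemma:dominancelinear}, now adapted to the past timeline and invoking the explicit form of the $\Delta_j$'s (e.g., $\Delta_j=\delta_j\,p_z$ for pending jobs $j$ that lie after $z$ in the HDF ordering, a formula that follows by telescoping the meeting conditions in Procedure~\ref{algo:linear1}).
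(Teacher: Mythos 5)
Your skeleton matches the paper's: the sets built by Procedure~\ref{algo:linear2} tile the past timeline into contiguous blocks (this is the paper's Claim~\ref{claim:linear-continuous}, proved exactly as you propose, by following the successor links of the procedure), and within a block all lines are translated by the same $\Delta_j$, so pre-update dominance is preserved verbatim. The gap is in the cross-block step, which you rightly flag as the crux but propose to resolve with the wrong mechanism. Writing $\gamma'$ for the pre-update lines, the target is $\gamma'_{d(t)}(t)-\gamma'_a(t)\ge\Delta_{j'}-\Delta_{j}$ for $d(t)\in S_j$ dominant at $t$ and a competitor $a\in S_{j'}$. There are two cases, and neither is handled by an ``envelope jump.'' If $a$'s block precedes $t$'s block in past time, then $\Delta_{j'}\le\Delta_j$ and the inequality follows from pre-update dominance alone; what makes this work is the monotonicity $\Delta_{j_1}\ge\Delta_{j_2}$ whenever $C_{j_1}<C_{j_2}$ (the paper's Claim~\ref{claim:linear-change}) together with the fact that, under HDF, the blocks appear in past time in decreasing order of their representatives' completion times. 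You have the raw material for this (the formulas $\Delta_j=\delta_j p_z$ or $\delta_z p_z$ according to whether $j$ completes after or before $z$, plus the density ordering of HDF), but you never assemble it into the monotonicity statement, and your sketch never orients the inequality by the relative position of the two blocks.

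If instead $a$'s block comes \emph{after} $t$'s block, then $\Delta_{j'}\ge\Delta_j$ and the shift differential is adverse; no jump of the envelope at the block boundary can be expected to absorb it at times $t$ deep inside the earlier block, and your proposal gives no quantitative control there. The paper's resolution is that this case is vacuous: by Claim~\ref{claim:linear-order} and the HDF preemption structure, a job belonging to a later block is not yet released at any time in an earlier block, so its curve has empty domain there and no comparison is required. This release-time observation --- not a bound on the envelope's discontinuity propagated through the meeting conditions of Procedure~\ref{algo:linear1} --- is the missing idea; without it, the inequality you state would have to be verified for adversely-shifted competitors throughout the earlier block, and the argument as sketched does not close.
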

\begin{proof}
The proof is based on the following three claims.

\begin{claim}\label{claim:linear-change}
Let $j_{1}$ and $j_{2}$ be two jobs in $P_\tau$ such that $C_{j_1} < C_{j_2}$.
If $A \equiv HDF$, then $\Delta_{j_{1}} \geq \Delta_{j_{2}}$.
\end{claim}
\begin{claimproof}
Let $z$ be the job released at time $\tau$, with processing time $p_z$.
Consider the following three cases.
\begin{itemize}
\item[(i)] $C_z <C_{j_1}<C_{j_2}$.
Hence, the completion times of both $j_{1}$ and $j_{2}$ are delayed by $p_{z}$ by HDF.
In other words, before the arrival of $z$ the completion times of $j_{1}$ and $j_{2}$ were $C_{j_{1}}-p_{z}$
and $C_{j_{2}}-p_{z}$, respectively.
Moreover, the relative orderings of all jobs in $P_\tau \setminus \{z\}$ (as done by the algorithm)
is the same before and after the release of $z$.
Thus, by Procedure~\ref{algo:linear1}, we have that
$\lambda_{j_{1}} - \delta_{j_{1}}(C_{j_{1}} - r_{j_{1}}) = \lambda'_{j_{1}} - \delta_{j_{1}}(C_{j_{1}} - p_{z} - r_{j_{1}})$,
where $\lambda'_{j_{1}}$ is the value of the dual variable of $j_{1}$ before the arrival of $z$.
Hence, $\Delta_{j_{1}} = \delta_{j_{1}} p_{z}$.
Similarly, we obtain that $\Delta_{j_{2}} = \delta_{j_{2}} p_{z}$.
Therefore, $\Delta_{j_{1}} \geq \Delta_{j_{2}}$ as from the HDF algorithm we infer that
$\delta_{j_1} \geq \delta_{j_2}$.
\item[(ii)] $C_{j_1}<C_{j_2}<C_{z}$.
In this case, Procedure~\ref{algo:linear1} increases both $\lambda_{j_1}$ and $\lambda_{j_2}$
by $\gamma_{z}(C_{z} - p_{z} - r_{z}) - \gamma_{z}(C_{z} - r_{z})$.
Therefore, $\Delta_{j_{1}} = \Delta_{j_{2}} = \delta_{z} p_{z}$.
\item[(iii)] $C_{j_1}<C_z<C_{j_2}$.
As in case (ii), we have $\Delta_{j_{1}} = \delta_{z} p_{z}$.
As in case (i), we have $\Delta_{j_{2}} = \delta_{j_{2}} p_{z}$.
Therefore, $\Delta_{j_{1}} \geq \Delta_{j_{2}}$ as from the HDF algorithm we infer that
$\delta_{z} \geq \delta_{j_{2}}$.
\end{itemize}
The claim follows.
\end{claimproof}

We call a set $S_j$ \emph{critical} if at least one of the following hold:
$S_j$ contains at least one job in $\mathcal{J}_{\tau}$ (i.e., $S_j \cap \mathcal{J}_{\tau} \not= \emptyset$)
or its representative job has been partially executed before time $\tau$ (i.e., $q_j(\tau)<p_j$).
Let $\ell \leq k$ be the number of critical sets.
Let also $Q_\tau \subseteq P_\tau$ denote the set of representative jobs of critical subsets; hence $|Q_\tau|=\ell$.
Note that job $z$ which is released at time $\tau$ does not belong in $Q_\tau$.

\begin{claim}\label{claim:linear-order}
Let $j_{1}$ and $j_{2}$ be two jobs in $Q_\tau$ such that $C_{j_1} < C_{j_2}$.
Then $r_{j_{1}} \geq r_{j_{2}}$.
\end{claim}
\begin{claimproof}
By way of contradiction, suppose that there are two jobs $j_1$ and $j_2$ in $Q_{\tau}$
such that $C_{j_1} < C_{j_2}$ and $r_{j_{1}} < r_{j_{2}}$.
Since $C_{j_1}<C_{j_2}$, $j_1$ has higher density than $j_2$.
Hence, $j_2$ has not been scheduled before $\tau$ because $j_1$ is active during $[r_{j_2},\tau]$ and it has higher density.
Thus, from the definition of $Q_\tau$, there is a job $a \in S_{j_2} \cap \mathcal{J}_{\tau}$
for which $\gamma_a(C_{a})=\gamma_{j_2}(C_{a})$ and $C_a < \tau$.
Therefore, at time $C_a$ the job $j_2$ is the pending job with the highest density,
which is a contradiction as we assumed that $j_1$ is already released by time $C_a$ and that
it has higher density than $j_2$.
\end{claimproof}

Let $t_j$ be the first time in which a job in the critical set $S_j$ begins its execution for $1 \leq j \leq \ell$.
Let $i_{1}, i_2, \ldots, i_{\ell}$ be job indices such that $t_{i_{1}} < t_{i_2} < \ldots < t_{i_{\ell}} \leq \tau$.
The following lemma shows a structure property of the algorithm schedule.

\begin{claim}\label{claim:linear-continuous}
During interval $[t_{i_{j}}, t_{i_{j+1}})$, only jobs in $S_{j}$ are executed for $1 \leq j \leq \ell - 1$.
\end{claim}
\begin{claimproof}
Consider interval $[t_{i_{j}}, t_{i_{j+1}})$ and let $a$ be the job processed at time $t_{i_{j}}$.
If $a$ is still pending at time $\tau$ then by definition $a$ is indeed job $i_{j}$ and $S_{i_{j}}$ consists of a singleton job.
If $a$ is completed at time $C_{a} < \tau$ then by Procedure \ref{algo:linear2} there exists a job $b$ such that $\gamma_{b}(C_{a}) = \gamma_{a}(C_{a})$ and jobs $a$ and $b$ belong to the same representative set.
By repeating the same argument for job $b$ inductively, we infer that all jobs executed in interval $[t_{i_{j}}, t_{i_{j+1}})$ belong to the same representative set.
\end{claimproof}

We now continue with the proof of the lemma and we show that at each time $t < \tau$, the dominant job remains the same before and after the application of Procedures~\ref{algo:linear1} and~\ref{algo:linear2}.
By Procedure~\ref{algo:linear2}, the dual variables $\lambda_j$ of all jobs in the same critical set are all increased by the same amount.
Moreover, by Claim~\ref{claim:linear-continuous} the jobs in the same critical set are all executed consecutively.
Consider two jobs $j_1$ and $j_2$ in the same critical set $S_j$.
Assume, without loss of generality, that $\gamma'_{j_1}(t) \geq \gamma'_{j_1}(t)$ for time $t< \tau$, where $\gamma'_{j_1}$ and $\gamma'_{j_2}$ are the lines of $j_1, j_2$ prior to the arrival of $z$.
Then $\gamma_{j_1}(t)=\gamma'_{j_1}(t)+\Delta_j\geq\gamma'_{j_2}(t)+\Delta_j=\gamma_{j_2}(t)$.
Therefore, it suffices to consider only jobs that belong in different critical sets.

Consider the critical sets in decreasing order of completion times of their representatives, i.e., $C_1 > C_2 > \ldots > C_{\ell}$, and let $S_j$ and $S_{j+1}$ be any pair of consecutive critical sets.
Consider also the time $t_{i_{j+1}} < \tau$ as it is defined before Claim~\ref{claim:linear-continuous}.
Let $a \in S_j$ be the last job that is executed before $t_{i_{j+1}}$ and $b \in S_{j+1}$ be the job that is executed at $t_{i_{j+1}}$.
By Claim~\ref{claim:linear-change}, $\Delta_j \leq \Delta_{j+1}$ and hence $\lambda_a$ has been increased at most as much as $\lambda_b$ did (in Procedure~\ref{algo:linear2}).
Thus, $\gamma_b$ is dominant after $t_{i_{j+1}}$ and it remains to show that $\gamma_a$ is dominant just before $t_{i_{j+1}}$.
In order to prove this, it suffices to prove that the job $b$ is not yet released by time $t_{i_{j+1}}$, which means that $\gamma_b$ does not affect the line $\gamma_a$ prior to this time.
Indeed, if $b$ was released before $t_{i_{j+1}}$, then $C_a=t_{i_{j+1}}$ and $\Delta_a>\Delta_b$ since $A \equiv HDF$ and $a$ was dominant just before $t_{i_{j+1}}$ while $b$ was dominant after $t_{i_{j+1}}$ when considering the situation prior to the arrival of the new job $z$ at time $\tau$.
Hence, Procedure~\ref{algo:linear1} would have defined $\gamma_a(C_a)\not=\gamma_b(C_a)$, that is $a$ and $b$ would belong to the same representative set, which is a contradiction.
\end{proof}

Recall that $C_{\max}^\tau$ denotes the completion time of the last pending job in $P_\tau$
(with the usual assumption that no job arrives after time $\tau$).
The following lemma states that the dual variable $\gamma(t)$ has been defined in such a way that it is zero for all $t> C_{\max}^\tau$.
This will be required in order to establish that the primal and dual solutions have the same objective value.

\begin{lemma}[completion] \label{lemma:completionlinear}
For $\lambda_{j}$'s defined by Procedures~\ref{algo:linear1} and~\ref{algo:linear2},
we have that $\gamma(t)=0$ for every $t > C_{\max}^\tau$.
\end{lemma}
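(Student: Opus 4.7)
The plan is to establish $\gamma_j(t) \leq 0$ for every job $j \in R_\tau$ and every $t > C_{\max}^\tau$; since $\gamma(t) = \max\{0, \max_{j: r_j \leq t} \gamma_j(t)\}$, the lemma follows immediately. I would split the argument into pending and completed jobs.

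First I would treat the pending jobs. Order $P_\tau$ by increasing completion time $C_1 < C_2 < \cdots < C_k = C_{\max}^\tau$. Procedure~\ref{algo:linear1} is designed so that $\gamma_k(C_k) = 0$, and since $\gamma_k$ has negative slope $-\delta_k$, clearly $\gamma_k(t) \leq 0$ for every $t \geq C_{\max}^\tau$. Applying Lemma~\ref{lemma:dominancelinear} yields $\gamma_j(t) \leq \gamma_{j+1}(t)$ for every $t \geq C_j$, and telescoping up to $j=k$ gives $\gamma_j(t) \leq \gamma_k(t) \leq 0$ for $t > C_{\max}^\tau \geq C_j$.

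Next I would treat a completed job $a \in \mathcal{J}_\tau$. By Procedure~\ref{algo:linear2}, $a$ belongs to a critical set $S_j$ whose representative $j$ lies in $P_\tau$. Unrolling the procedure produces a chain $a = a_1, a_2, \ldots, a_m = j$ with strictly increasing completion times $C_{a_1} < \cdots < C_{a_m}$ in which consecutive lines meet, i.e., $\gamma_{a_i}(C_{a_i}) = \gamma_{a_{i+1}}(C_{a_i})$. Since all these lines receive the same upward shift $\Delta_j$ in Procedure~\ref{algo:linear2}, the meeting-point equalities continue to hold afterwards. Because $\gamma_{a_i}$ is dominant immediately before $C_{a_i}$ and $\gamma_{a_{i+1}}$ is dominant immediately after, the standard property of an upper envelope of affine functions forces the slopes to be monotone in $t$: $-\delta_{a_{i+1}} \geq -\delta_{a_i}$, i.e., $\delta_{a_1} \geq \delta_{a_2} \geq \cdots \geq \delta_{a_m}$. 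A calculation identical to equation~(\ref{eq:dominant-critical}) then gives $\gamma_{a_i}(t) - \gamma_{a_{i+1}}(t) = (\delta_{a_{i+1}} - \delta_{a_i})(t - C_{a_i}) \leq 0$ for every $t \geq C_{a_i}$, so telescoping yields $\gamma_a(t) \leq \gamma_j(t)$ for $t \geq C_{a_{m-1}}$. Since $C_{a_{m-1}} \leq \tau \leq C_{\max}^\tau$, combining with the pending-job case gives $\gamma_a(t) \leq 0$ for $t > C_{\max}^\tau$.

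The main obstacle is the completed-jobs case: one has to extract from Procedure~\ref{algo:linear2} the right chain of meetings back to a pending representative, verify that the chain survives the uniform $\Delta_j$-translation, and argue density monotonicity along it from the upper-envelope structure. Once that is in place, the remainder of the proof is a direct reuse of the interval-telescoping calculation behind Lemma~\ref{lemma:dominancelinear}.
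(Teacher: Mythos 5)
Your proof is correct, and its overall skeleton (anchor $\gamma_k(C_k)=0$, show every other line lies below $\gamma_k$ at $C_{\max}^\tau$, then use that each $\gamma_j$ is non-increasing) matches the paper's. The difference is in how you dominate the lines of completed jobs: you reconstruct, from Procedure~\ref{algo:linear2}, a chain $a=a_1,\ldots,a_m=j$ of consecutively executed jobs with matched meeting points, argue density monotonicity from the upper-envelope structure, and telescope. This is sound, but it is work the dominance lemmas already do for you: the notion of ``dominant'' is defined with respect to $\gamma(t)=\max\{0,\max_{j':r_{j'}\leq t}\gamma_{j'}(t)\}$, where the inner maximum ranges over \emph{all} released jobs, completed or pending. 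Hence Lemma~\ref{lemma:dominancelinear}, applied at times $t\to C_k^-$ where job $k$ is executed, already gives $\gamma_{j'}(C_{\max}^\tau)\leq\gamma_k(C_{\max}^\tau)=0$ for every $j'\in\mathcal{J}_\tau\cup P_\tau$ in one stroke; this is exactly the paper's three-line proof. What your longer route buys is a self-contained verification that the chains produced by Procedure~\ref{algo:linear2} behave as claimed (essentially re-deriving Claim~\ref{claim:linear-continuous} and the slope monotonicity of the envelope), which is reassuring but not needed once Lemmas~\ref{lemma:dominancelinear} and~\ref{lemma:consistencylinear} are taken as given. One small point of care in your pending-jobs telescoping: the inequalities $\gamma_i(t)\leq\gamma_{i+1}(t)$ for $t\geq C_i$ indeed all hold simultaneously once $t>C_{\max}^\tau=C_k\geq C_i$, so that step is fine.
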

\begin{proof}
From Lemmas~\ref{lemma:dominancelinear} and~\ref{lemma:consistencylinear},
for every job $j \in \mathcal{J}_\tau \cup P_\tau$, it holds that $\gamma_{j}(C_{\max}^\tau) \leq \gamma(C_{\max}^\tau)$.
By construction we have that $\gamma(C_{\max}^\tau) = 0$, and hence $\gamma_{j}(C_{\max}^\tau) \leq 0$.
Moreover, $\gamma_j(t)$ is non-increasing function of $t$,
that is $\gamma_{j}(t) \leq \gamma_{j}(C_{\max}^\tau)$ for every $t \geq C_{\max}^\tau$.
Therefore, $\gamma(t)=\max\{0,\max_{j:r_j \leq t}\{\gamma_j(t)\}\} = 0$ for $t \geq C_{\max}^\tau$.
\end{proof}

The proof of the following theorem is based on Lemmas~\ref{lemma:dominancelinear},~\ref{lemma:consistencylinear} and~\ref{lemma:completionlinear}, and
it is a simplified case of the proof of Theorem~\ref{thm:framework1} which is given in the next section.
We note that the primal and dual objectives have intuitive geometric interpretations, as shown in Figure~\ref{fig:primaldual}(b),(c)).
For each job, its contribution to the dual objective is the area of a trapezoid that is exactly the same as the contribution of the job to the primal objective.

\begin{theorem} \label{thm:linear}
The primal-dual algorithm $A \equiv HDF$ is an optimal online algorithm for the total fractional weighted flow time
and a $(1+\epsilon)$-speed $\frac{1+\epsilon}{\epsilon}$-competitive algorithm for the total (integral) weighted flow time.
\end{theorem}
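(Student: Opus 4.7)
The plan is to exhibit a feasible dual solution via Procedures~\ref{algo:linear1} and~\ref{algo:linear2}, show that its objective value matches the fractional HDF cost, and then leverage weak LP duality together with the speed-augmented dual $(D_\alpha)$ to handle the integral case directly, without passing through a generic fractional-to-integral conversion.

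I would first combine Lemmas~\ref{lemma:dominancelinear}, \ref{lemma:consistencylinear}, and \ref{lemma:completionlinear} to verify that the variables $\{\lambda_j\}$ together with the induced $\gamma(t)=\max\{0,\max_{j'}\gamma_{j'}(t)\}$ form a feasible solution of~$(D)$: by construction $\gamma(t)\ge \gamma_j(t)=\lambda_j-\delta_j(t-r_j)$ whenever $t\ge r_j$, which is constraint~(\ref{DLP1}) with $g(x)=x$. Next, I would compute the dual objective geometrically. On the interval $(C_{j-1},C_j]$ in which HDF processes job $j$, the dominance lemmas force $\gamma=\gamma_j$, and hence
\[
\lambda_j p_j - \int_{C_{j-1}}^{C_j}\gamma(t)\,dt \;=\; \delta_j\int_{C_{j-1}}^{C_j}(t-r_j)\,dt,
\]
which is exactly the fractional contribution of job $j$: the trapezoid appearing in Figure~\ref{fig:primaldual}(b) for the dual is congruent to the one appearing in Figure~\ref{fig:primaldual}(c) for the primal. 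Lemma~\ref{lemma:completionlinear} ensures that $\gamma$ vanishes past $C^\tau_{\max}$, so summing over all jobs yields dual\,$=$\,fractional HDF; weak duality then gives fractional optimality.

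For the $(1+\epsilon)$-speed, $\tfrac{1+\epsilon}{\epsilon}$-competitive bound on the integral objective, I would reuse the same dual solution inside $(D_\alpha)$ with $\alpha = 1+\epsilon$. Feasibility of $(\lambda_j,\gamma)$ is independent of $\alpha$, so its objective $\sum_j\lambda_j p_j - \tfrac{1}{1+\epsilon}\int_0^\infty\gamma(t)\,dt$ is still a valid lower bound on the offline $1/(1+\epsilon)$-speed optimum. The main remaining step is to prove the direct inequality
\[
\sum_j w_jF_j \;\le\; \tfrac{1+\epsilon}{\epsilon}\!\left(\sum_j \lambda_j p_j - \tfrac{1}{1+\epsilon}\int_0^\infty\gamma(t)\,dt\right).
\]
The obstacle here is that, unlike in the fractional analysis, the integral contribution $w_jF_j$ of job $j$ is the full rectangle of width $p_j$ and height $\delta_j(C_j-r_j)$, rather than merely the trapezoid under $\gamma_j$ on $(C_{j-1},C_j]$. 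I would argue per-job, charging the ``missing triangle'' between the rectangle and the trapezoid against the slack $\tfrac{\epsilon}{1+\epsilon}\int_0^\infty\gamma(t)\,dt$ that arises in $(D_\alpha)$ but not in $(D)$; the speed-augmentation coefficient $(1+\epsilon)/\epsilon$ is precisely what makes this local charging scheme go through, and this is the only place where $\epsilon$ enters the analysis.
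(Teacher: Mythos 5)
Your proposal is correct and follows essentially the same route as the paper: feasibility plus the three dominance/completion lemmas give the trapezoid identity dual $=$ fractional cost (hence fractional optimality by weak duality), and the integral bound is obtained by comparing each job's integral contribution to its contribution to the speed-augmented dual, which is exactly the paper's argument in the proof of Theorem~\ref{thm:framework1} (your ``missing triangle'' charge amounts to the two inequalities $w_jF_j\le\lambda_j p_j$, from $\gamma_j(C_j)\ge 0$, and $\gamma(t)\le\lambda_j$ on $j$'s execution intervals). The only cosmetic caveat is that a job may be executed in several disjoint intervals rather than a single $(C_{j-1},C_j]$, but dominance holds on each piece and the integrals over them still sum to $\lambda_j p_j$, so nothing changes.
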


\section{A framework for primal-dual algorithms}
\label{section:framework-opt}

Building on the primal-dual analysis of HDF for total weighted flow time,
we can abstract the essential properties that we need to satisfy in order to obtain online algorithms for other similar problems.
For the problems we consider, the primal solution is generated by an online primal-dual algorithm $A \in\mathcal{A}$ which may not necessarily be HDF.
In addition, each job $j$ will now correspond to a {\em curve} $\gamma_j$ (for the total weighted flow time problem, $\gamma_j$ is a line),
and we will also have a dual variable $\gamma(t)$ that will be set equal to $\max\{0,\max_{j \in \mathcal{J}: r_j \leq t}\{\gamma_j(t)\}\}$ for every $t \geq 0$.
Finally, the crux is in maintaining dual variables $\lambda_j$, upon release of a new job $z$ at time $\tau$,
such that the following properties are satisfied:
%$(\mathcal{P}1)$ {\em Future dominance}: if the algorithm $A$ executes job $j$ at time $t \geq \tau$, then $\gamma_j$ is dominant at $t$;
%$(\mathcal{P}2)$ {\em Past dominance}: if the algorithm $A$ executes job $j$ at time $t< \tau$, then $\gamma_j$ remains dominant at $t$.
%In addition, the primal solution (i.e., the algorithm's scheduling decisions) for $t<\tau$ does not change due to the release of $z$; and
%$(\mathcal{P}3)$ {\em Completion}: $\gamma(t) = 0$ for all $t > C_{\max}^\tau$.
\begin{description}
\item[$(\mathcal{P}1)$ \textbf{Future dominance}.] If the algorithm $A$ executes job $j$ at time $t \geq \tau$, then $\gamma_j$ is dominant at $t$.
\item[$(\mathcal{P}2)$ \textbf{Past dominance}.] If the algorithm $A$ executes job $j$ at time $t< \tau$, then $\gamma_j$ remains dominant at $t$.
In addition, the primal solution (i.e., the algorithm's scheduling decisions) for $t<\tau$ does not change due to the release of $z$.
\item[$(\mathcal{P}3)$ \textbf{Completion}.] $\gamma(t) = 0$ for all $t > C_{\max}^\tau$.
\end{description}
Essentially properties $(\mathcal{P}1)$, $(\mathcal{P}2)$ and $(\mathcal{P}3)$ reflect that the statements of
Lemmas~\ref{lemma:dominancelinear},~\ref{lemma:consistencylinear} and~\ref{lemma:completionlinear} are not tied exclusively
to the total weighted flow time problem.

\begin{theorem} \label{thm:framework1}
Any algorithm that satisfies the properties $(\mathcal{P}1)$, $(\mathcal{P}2)$ and $(\mathcal{P}3)$
with respect to a feasible dual solution is an optimal online algorithm for fractional GFP
and a $(1+\epsilon)$-speed $\frac{1+\epsilon}{\epsilon}$-competitive algorithm for integral GFP.
\end{theorem}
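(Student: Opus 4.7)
The plan is to establish fractional optimality and $(1{+}\epsilon)$-speed $\tfrac{1+\epsilon}{\epsilon}$-competitiveness for the integral objective separately, both via the same dual solution $(\lambda_j,\gamma)$ guaranteed by the hypothesis. First I would invoke $(\mathcal{P}1)$ and $(\mathcal{P}2)$ to conclude that at every time $t$ at which the algorithm executes some job $j$, the curve $\gamma_j$ is dominant, so $\gamma(t)=\gamma_j(t)=\lambda_j-\delta_j g(t-r_j)$ (the second equality being the framework's definition of $\gamma_j$, which matches the dual constraint). Property $(\mathcal{P}3)$ truncates the integral $\int_0^\infty \gamma(t)\,dt$ at $C_{\max}$, and since algorithms in class $\mathcal{A}$ only idle before any release, this integral decomposes neatly over the execution intervals:
$$\int_0^\infty \gamma(t)\,dt \;=\; \sum_{j}\int_{\mathrm{exec}(j)}\!\!\bigl(\lambda_j-\delta_j g(t-r_j)\bigr)\,dt \;=\; \sum_{j}\lambda_j p_j \;-\; C_A^{\mathrm{frac}},$$
where $C_A^{\mathrm{frac}}=\sum_j\delta_j\int_{\mathrm{exec}(j)} g(t-r_j)\,dt$ is the $(P)$-cost of the algorithm's schedule. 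Rearranging, the $(D)$ objective equals $C_A^{\mathrm{frac}}$, so weak LP duality yields fractional optimality; geometrically this is the per-job trapezoid accounting of Figure~\ref{fig:primaldual}.

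For the integral claim I would then observe that at each completion time $C_j$, property $(\mathcal{P}1)$ together with continuity of $\gamma_j$ forces $\gamma_j(C_j)=\gamma(C_j)\ge 0$, which rearranges to $\lambda_j\ge \delta_j g(F_j)$ and therefore $\lambda_j p_j\ge w_j g(F_j)$. Summing across jobs gives the clean bound $\sum_j w_j g(F_j)\le \sum_j\lambda_j p_j$. Substituting the identity from the first step into the objective of $(D_{1+\epsilon})$ and multiplying by $\tfrac{1+\epsilon}{\epsilon}$,
$$\tfrac{1+\epsilon}{\epsilon}\,(D_{1+\epsilon}) \;=\; \tfrac{1+\epsilon}{\epsilon}\!\left[\tfrac{\epsilon}{1+\epsilon}\sum_j \lambda_j p_j \,+\, \tfrac{1}{1+\epsilon}C_A^{\mathrm{frac}}\right] \;=\; \sum_j \lambda_j p_j \,+\, \tfrac{1}{\epsilon}\,C_A^{\mathrm{frac}} \;\ge\; \sum_j\lambda_j p_j,$$
so the integer cost is at most $\tfrac{1+\epsilon}{\epsilon}(D_{1+\epsilon})$, and weak duality of $(D_{1+\epsilon})$ against OPT at speed $1/(1+\epsilon)$ closes the competitive bound.

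The main hurdle is not the arithmetic but isolating the correct per-job inequality at the integer step. A direct attempt to bound $w_j g(F_j)$ by the per-job fractional contribution $\delta_j\int g(t-r_j)\,dt$ would introduce $g$-dependent slack and typically cost an extra $\Theta(1/\epsilon)$ factor, effectively recovering the classical black-box ratio. The key observation is that dominance at $C_j$ alone already yields $w_j g(F_j)\le\lambda_j p_j$ for every job, while the surplus $\tfrac{1}{\epsilon}C_A^{\mathrm{frac}}$ carried by $\tfrac{1+\epsilon}{\epsilon}(D_{1+\epsilon})$ absorbs any further deficit without requiring any structural assumption on $g$. This is exactly the direct integer-vs-fractional-dual comparison advertised in the introduction and is what saves the $O(1/\epsilon)$ factor over passing through the fractional primal.
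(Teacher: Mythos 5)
Your proposal is correct and follows essentially the same route as the paper's proof: dominance from $(\mathcal{P}1)$--$(\mathcal{P}2)$ gives the exact primal--dual trapezoid accounting for the fractional claim, and $\gamma_j(C_j)\ge 0$ gives $w_jg(F_j)\le\lambda_jp_j$ for the integral one. The only cosmetic difference is that you lower-bound the objective of $(D_{1+\epsilon})$ globally via the identity $\int\gamma=\sum_j\lambda_jp_j-C_A^{\mathrm{frac}}$ and the nonnegativity of $C_A^{\mathrm{frac}}$, whereas the paper does it per job using $\gamma(t)\le\lambda_j$ during $j$'s execution; both yield the same $\frac{\epsilon}{1+\epsilon}\sum_j\lambda_jp_j$ bound.
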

\begin{proof}
The feasibility of the dual solution is directly implied by the fact that $\lambda_j \geq 0$
(since we only increase these dual variables)
and from our definition of $\gamma(t)$ which implies that the constraints~\eqref{DLP1} are satisfied and $\gamma(t) \geq 0$.
Let $C_{\max}$ be the completion time of the last job.
We will assume, without loss of generality, that at time $t \leq C_{\max}$ there is at least one pending job in the schedule;
otherwise, there are idle times in the schedule and we can apply the same type of analysis for jobs scheduled between consecutive idle periods.

We will first show that the primal and the dual objectives are equal.
Consider a job $j$ and let $[t_1,t_2], [t_2,t_3], \ldots, [t_{k-1},t_k]$ be the time intervals during which $j$ is executed.
Note that $x_{j}(t) = 1$ for every $t$ in these intervals (and $x_{j'}(t) = 0$ for $j' \neq j$).
Hence, the contribution of $j$ to the primal (fractional) objective is
\begin{equation*}
\sum_{i=1}^{k-1} \delta_j \int_{t_i}^{t_{i+1}} g(t-r_j) dt
\end{equation*}
By properties~$(\mathcal{P}1)$ and~$(\mathcal{P}2)$, the line $\gamma_j$ is dominant during the same time intervals.
Thus, the contribution of job $j$ to the dual is
\begin{align*}
\lambda_j p_j - \sum_{i=1}^{k-1} \int_{t_i}^{t_{i+1}} \gamma(t) dt
%= \lambda_j p_j - \sum_{i=1}^{k-1} \int_{t_i}^{t_{i+1}} \gamma_j(t) dt \\
&= \lambda_j p_j - \sum_{i=1}^{k-1} \int_{t_i}^{t_{i+1}} \biggl(\lambda_j - \delta_j g(t-r_j) \biggl) dt \\
&= \sum_{i=1}^{k-1} \delta_j \int_{t_i}^{t_{i+1}} g(t-r_j) dt
\end{align*}
since $\sum_{i=1}^{k-1} \int_{t_i}^{t_{i+1}} \lambda_j dt = \lambda_j \sum_{i=1}^{k-1} \int_{t_i}^{t_{i+1}} x_{j}(t) dt
= \lambda_j p_j$.
The first part of the theorem follows by summing over all jobs $j$, and by accounting for the fact that
$\int_{C_{\max}}^\infty \gamma(t)=0$ (from property $(\mathcal{P}3)$).

For the second part of the theorem, consider again the time intervals during which a job $j$ is executed.
The contribution of $j$ to the integral objective is $w_j g(C_j-r_j) = \delta_j g(C_j-r_j) p_j$.
By properties~$(\mathcal{P}1)$ and~$(\mathcal{P}2)$, for any $t \in \bigcup_{i=1}^{k-1} [t_i,t_{i+1}]$ we have that $\gamma_j(t) \geq 0$.
In particular, it holds for $t_k=C_j$, that is $\lambda_j \geq \delta_j g(C_j-r_j)$.
Therefore, the contribution of $j$ to the integral objective is
\begin{equation*}
w_j g(C_j-r_j) \leq \lambda_j p_j
\end{equation*}
Since we consider the speed augmentation case, we will use as lower bound of the optimal solution the dual program that uses a smaller speed
as explained in Section~\ref{sec:preliminaries}.
By properties~$(\mathcal{P}1)$ and~$(\mathcal{P}2)$, we have $\gamma(t) = \lambda_{j} - \delta_j g(t-r_j) \leq \lambda_j$
during the time intervals where the job $j$ is executed.
Thus, the contribution of $j$ to the dual objective is at least
\begin{equation*}
\lambda_j p_j - \frac{1}{1+\epsilon} \sum_{i=1}^{k-1} \int_{t_i}^{t_{i+1}} \gamma(t) dt
\geq \lambda_j p_j - \frac{1}{1+\epsilon} \sum_{i=1}^{k-1} \int_{t_i}^{t_{i+1}} \lambda_j dt
= \frac{\epsilon}{1+\epsilon} \lambda_j p_j,
\end{equation*}
since $\sum_{i=1}^{k-1} \int_{t_i}^{t_{i+1}} \lambda_j dt = \lambda_j \sum_{i=1}^{k-1} \int_{t_i}^{t_{i+1}} x_{j}(t) dt = \lambda_j p_j$.
From property~$(\mathcal{P}3)$ we have that $\int_{C_{\max}}^\infty \gamma(t) dt =0$.
Summing up over all jobs, the theorem follows.
\end{proof}

In what follows in this section, we apply this framework in three different problems.

\subsection{Online GCP with general cost functions}
\label{subsec:gcp}

In this section, we consider the fractional GCP and we will show that there is an optimal primal-dual algorithm for it.
The following is a linear relaxation of the problem.

%\hspace*{-0.9cm}
%\begin{minipage}[t]{0.45\textwidth}
%\begin{align*}
%\min \sum_{j \in \mathcal{J}} \delta_{j} \int_{r_j}^{\infty} & g(t) x_j(t) dt \\
%\int_{r_j}^{\infty} x_j(t) dt & \geq p_j & & \forall j \in \mathcal{J} \\
%\sum_{j \in \mathcal{J}} x_j(t) & \leq 1 & & \forall t \geq 0 \\
%x_j(t) & \geq 0 & & \forall j \in \mathcal{J}, t \geq 0
%\end{align*}
%\end{minipage}
%$\quad \quad$
%\begin{minipage}[t]{0.45\textwidth}
%\begin{align*}
%\max \sum_{j \in \mathcal{J}} \lambda_j p_j &- \int_0^{\infty} \gamma(t) dt \\
%\lambda_j - \gamma(t) & \leq \delta_{j} g(t) & & \forall j \in \mathcal{J}, t \geq r_j \\
%\lambda_j &\geq 0 & & \forall j \in \mathcal{J} \\
%\gamma(t) &\geq 0 & & \forall t \geq 0
%\end{align*}
%\end{minipage}
%\bigskip

\begin{align*}
\min \sum_{j \in \mathcal{J}} \delta_{j} \int_{r_j}^{\infty} & g(t) x_j(t) dt \\
\int_{r_j}^{\infty} x_j(t) dt & \geq p_j & & \forall j \in \mathcal{J} \\
\sum_{j \in \mathcal{J}} x_j(t) & \leq 1 & & \forall t \geq 0 \\
x_j(t) & \geq 0 & & \forall j \in \mathcal{J}, t \geq 0
\end{align*}

The dual program reads
\begin{align*}
\max \sum_{j \in \mathcal{J}} \lambda_j p_j &- \int_0^{\infty} \gamma(t) dt \\
\lambda_j - \gamma(t) & \leq \delta_{j} g(t) & & \forall j \in \mathcal{J}, t \geq r_j \\
\lambda_j &\geq 0 & & \forall j \in \mathcal{J} \\
\gamma(t) &\geq 0 & & \forall t \geq 0
\end{align*}

We define $\gamma_j(t) = \lambda_j - \delta g(t)$, based on the same arguments as in Section 3.
In what follows, we use Procedure~\ref{algo:completion-times} so to define and update the values of $\lambda_j$'s for all jobs in $R_\tau$, i.e., all jobs that are released up to time $\tau$ (completed or not).

\begin{algorithm}[h]
\begin{algorithmic}[1]
\STATE Consider jobs in $R_\tau$ in increasing order of their completion times if no new jobs are released after time $\tau$, i.e., $C_1 < C_2 < \ldots < C_k$
\STATE Choose $\lambda_{k}$ such that $\gamma_k(C_{k})=0$
\FOR{every job $j = k-1$ to $1$}
    \STATE Let $j' \in R_\tau$ be the job scheduled right after $C_j$
    \STATE Choose the $\lambda_j$ such that $\gamma_{j}(C_{j}) = \gamma_{j'}(C_{j})$
\ENDFOR
\end{algorithmic}
\caption{Assignment and updating of $\lambda_{j}$'s for the set $R_\tau$ of all jobs released by time $\tau$ when a new job is released.}
\label{algo:completion-times}
\end{algorithm}

The following lemma shows that the dominance condition is always satisfied for all jobs in $R_{\tau}$ due to Procedure~\ref{algo:completion-times}.

\begin{lemma}
For $\lambda_j$'s as defined by Procedure~\ref{algo:completion-times}, and algorithm $A \equiv HDF$, the properties $(\mathcal{P}1)$ and $(\mathcal{P}2)$ hold.
\end{lemma}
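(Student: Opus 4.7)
The plan is to mimic the proof of Lemma~\ref{lemma:dominancelinear} while replacing the linear-case slope $\delta_j(t-r_j)$ by the GCP slope $\delta_j g(t)$; the structural ingredient that makes the argument work — curves differing from one another only through a density factor multiplying a common non-decreasing function — is preserved in the general-cost setting.

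I first derive the algebraic identity analogous to Equation~\eqref{eq:dominant-critical}. From the pairing $\gamma_j(C_j) = \gamma_{j'}(C_j)$ imposed by Procedure~\ref{algo:completion-times}, a direct subtraction yields
\[
\gamma_j(t) - \gamma_{j'}(t) \;=\; (\delta_j - \delta_{j'})\bigl(g(C_j) - g(t)\bigr).
\]
Because $j'$ is the job scheduled right after $C_j$, $A\equiv$ HDF was running $j$ at $C_j^-$ while $j'$ was already pending, hence $\delta_j\geq\delta_{j'}$; combined with $g$ non-decreasing, this gives $\gamma_j(t)\geq\gamma_{j'}(t)$ for $t\leq C_j$ and $\gamma_j(t)\leq\gamma_{j'}(t)$ for $t\geq C_j$, mirroring Inequalities~\eqref{eq:dominant-ineq-1}--\eqref{eq:dominant-ineq-2}.

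For property $(\mathcal{P}1)$, after time $\tau$ HDF processes $P_\tau$ in decreasing density order, so through pending jobs the procedure's chain $b_1\to b_2\to\cdots\to b_\ell=a_k$ is linear, and a transitive application of the two signed inequalities above gives dominance of $\gamma_{b_i}$ on its execution interval $(C_{b_{i-1}},C_{b_i}]$, exactly as in the linear case. For a completed job $j''\in\mathcal{J}_\tau$ I walk up the $B$-parent chain $j''\to B_{j''}\to\cdots\to a_k$; every hop satisfies $t>C_{\cdot}$ (since each $C$ on the walk is a past completion time and $t\geq\tau$), so the "$\gamma_{\cdot}\leq\gamma_{B_{\cdot}}$ for $t\geq C_{\cdot}$" direction of the identity applies at each step, yielding $\gamma_{j''}(t)\leq\gamma_{a_k}(t)\leq\gamma_{b_i}(t)$ by transitivity with the pending-job chain.

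For property $(\mathcal{P}2)$, the primal-side clause is immediate, since Procedure~\ref{algo:completion-times} only updates the dual multipliers $\lambda_j$ and never touches the past primal variables $x_j(t)$. For the dominance clause at past times $t<\tau$, the same chain-walking argument applies, with the additional complication that the currently-running job at $t$ may itself be a completed job whose $B$-chain must also be traversed to compare against an arbitrary $j''$ with $r_{j''}\leq t$. The main obstacle I foresee is that the relation $j\mapsto B_j$ can form a tree rather than a path when preemption events are interleaved with completions (several distinct completions may all hand control back to the same previously-preempted job), so the transitive sweep must proceed along tree edges and requires a careful case split on which side of each $C$-value encountered on the walk the comparison time $t$ lies — but at every individual edge the sign-change-at-$C_j$ property established above delivers the correct inequality, and the walk terminates at the common anchor $a_k$.
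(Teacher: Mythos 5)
Your proof takes essentially the same route as the paper's: you derive the same identity $\gamma_j(t)-\gamma_{j'}(t)=(\delta_j-\delta_{j'})\bigl(g(C_j)-g(t)\bigr)$ from the pairing $\gamma_j(C_j)=\gamma_{j'}(C_j)$, use HDF's density ordering together with the monotonicity of $g$ to read off the sign on either side of $C_j$, and conclude by transitive chaining (and your inequality directions are the correct ones, whereas the paper's write-up states them transposed relative to its own computation). The only imprecision is the parenthetical claim that every completion time on the parent walk from a completed job lies in the past --- the job scheduled right after a completion may itself still be pending at $\tau$, with a future completion time --- but each hop only requires $t$ to exceed the \emph{child's} completion time, and your own remark about case-splitting at each encountered $C$-value, combined with switching to the pending-job chain once the walk enters $P_\tau$, covers this, so the argument goes through.
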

\begin{proof}
Consider the jobs in $R_\tau$ in increasing order of their completion times, i.e., $C_1 < C_2 < \ldots < C_k$.
Let $j$ and $j+1$ be two consecutive jobs in this order.
By the choice of dual variables in Procedure~\ref{algo:completion-times} for a time $t > \max\{r_j,r_{j+1}\}$ we have
\begin{eqnarray*}
\gamma_j(t) & = & \lambda_j - \delta_j g(t) = \lambda_{j+1} - \delta_{j+1} g(C_j) + \delta_j g(C_j) - \delta_j g(t)\\
 & = & \lambda_{j+1} - \delta_{j+1} g(t) + (\delta_{j+1} - \delta_j) (g(t) - g(C_j)) \\
  & = & \gamma_{j+1}(t) + (\delta_{j+1} - \delta_j) (g(t) - g(C_j))
\end{eqnarray*}
As we follow HDF, we have that $\delta_{j+1} - \delta_j \leq 0$.
Since $g$ is a non-decreasing function,
if $t < C_j$ then $\gamma_j(t) \leq \gamma_{j+1}(t)$,
while if $t \geq C_j$ then $\gamma_j(t) \geq \gamma_{j+1}(t)$.
In other words, $\gamma_{j}$ intersects $\gamma_{j+1}$ at $C_{j}$ and that is indeed the unique intersection point between the two curves for every $j,j+1$.
Note that the uniqueness property does not necessarily hold in the settings with flow time objectives, which explains the simplicity of the dual construction for objectives on completion times compared to the ones on flow times.
Then, the dominance property follows using the same arguments as in the proof of Lemma~\ref{lemma:dominancelinear}.

Since the claim holds for every $j$, we deduce that $\gamma_{j}(t) \geq \gamma_{j'}(t)$ for every $j' \in R_\tau$ and every time $t$ during the execution of job $j$.
\end{proof}

Property $(\mathcal{P}3)$ is straightforward from Procedure~\ref{algo:completion-times} and hence we obtain the following theorem.

\begin{theorem} \label{thm:gcp}
The primal-dual algorithm $A \equiv HDF$ is an optimal algorithm for fractional GCP
and a $(1+\epsilon)$-speed $\frac{1+\epsilon}{\epsilon}$-competitive algorithm for integral GCP.
\end{theorem}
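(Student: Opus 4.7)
The plan is to reduce Theorem~\ref{thm:gcp} to the framework Theorem~\ref{thm:framework1}: having already established $(\mathcal{P}1)$ and $(\mathcal{P}2)$ for HDF with the dual assignment given by Procedure~\ref{algo:completion-times} (in the lemma immediately preceding the theorem), it suffices to verify dual feasibility and property $(\mathcal{P}3)$, and then observe that the proof of Theorem~\ref{thm:framework1} carries over to the GCP relaxation with the only change being that $g(t-r_j)$ is replaced everywhere by $g(t)$. Because the LP structure is identical (the same covering and packing constraints, and the same form of dual constraint), this replacement does not affect any of the calculations.

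For feasibility, observe that in Procedure~\ref{algo:completion-times} we set $\lambda_k = \delta_k g(C_k) \geq 0$, and then for $j<k$ we enforce $\lambda_j = \lambda_{j'} + \delta_j g(C_j) - \delta_{j'} g(C_j)$ where $j'$ is the job scheduled right after $C_j$; by induction on decreasing $j$, the inductive choice plus the dominance inequalities from the lemma yield $\lambda_j \geq 0$. By construction $\gamma(t) = \max\{0,\max_{j:r_j \leq t}\gamma_j(t)\} \geq 0$, and the dual constraint $\lambda_j - \gamma(t) \leq \delta_j g(t)$ follows directly since $\gamma(t) \geq \gamma_j(t) = \lambda_j - \delta_j g(t)$ whenever $t \geq r_j$.

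For property $(\mathcal{P}3)$, note that $\lambda_k$ is chosen so that $\gamma_k(C_{\max}^\tau) = 0$, and because $g$ is non-decreasing, each curve $\gamma_j(t) = \lambda_j - \delta_j g(t)$ is non-increasing in $t$; hence $\gamma_k(t) \leq 0$ for every $t \geq C_{\max}^\tau$. For any other job $j \in R_\tau$, $(\mathcal{P}1)$ together with the ordering of completion times gives $\gamma_j(t) \leq \gamma_k(t) \leq 0$ for $t \geq C_{\max}^\tau$, so $\gamma(t) = 0$ on that range.

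With all three properties in hand, I would replay the two-step accounting of Theorem~\ref{thm:framework1}: for the fractional objective, on each interval $[t_i,t_{i+1}]$ where job $j$ is executed, dominance gives $\gamma(t) = \lambda_j - \delta_j g(t)$, so integrating and using $\int x_j(t)\,dt = p_j$ shows that $j$'s contribution to the dual matches its fractional contribution $\delta_j \int g(t) x_j(t)\,dt$ to the primal, and summing over jobs (with the tail integral of $\gamma$ vanishing by $(\mathcal{P}3)$) proves optimality. For the integral bound, the dominance $\gamma_j(C_j) \geq 0$ now reads $\lambda_j \geq \delta_j g(C_j)$, so the integral cost $w_j g(C_j)$ is bounded by $\lambda_j p_j$, while with speed $1+\epsilon$ the dual objective for $(D_{1+\epsilon})$ is at least $\frac{\epsilon}{1+\epsilon}\lambda_j p_j$ per job. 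The ratio $\frac{1+\epsilon}{\epsilon}$ then follows exactly as in Theorem~\ref{thm:framework1}. The only place that warrants any care is this last step: ensuring that HDF's completion-time ordering $C_1 < C_2 < \cdots < C_k$ used in Procedure~\ref{algo:completion-times} is consistent across the release of new jobs on the $R_\tau$ set (rather than only $P_\tau$), but this is precisely what is covered by the preceding lemma.
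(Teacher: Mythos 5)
Your proposal follows exactly the paper's route: it reduces the theorem to the framework of Theorem~\ref{thm:framework1}, relies on the preceding lemma for properties $(\mathcal{P}1)$ and $(\mathcal{P}2)$ under Procedure~\ref{algo:completion-times}, checks feasibility and $(\mathcal{P}3)$ directly (the paper dismisses the latter as straightforward, and your argument via $\gamma_k(C_{\max}^\tau)=0$ and monotonicity of the curves is the intended one), and observes that the accounting in the framework proof carries over verbatim with $g(t-r_j)$ replaced by $g(t)$. The proof is correct and matches the paper's argument, merely supplying details the paper leaves implicit.
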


\subsection{Online GFP with convex/concave cost functions and equal density jobs}
\label{subsec:convex.same}

In this section, we consider GFP with cost function $g$ that is
either a convex or a concave non-decreasing function with $g(0)=0$.
Moreover, we assume that all jobs have the same density, i.e., $\delta_j=\delta$ for each $j \in \mathcal{J}$.
For both convex and concave cases, we will show that there is an optimal primal-dual algorithm for minimizing the total fractional cost.
For convex functions, this algorithm has to be the FIFO policy, whereas for concave functions the optimal algorithm has to be the LIFO policy.

For both problems, we define $\gamma_j=\lambda_j-\delta g(t-r_j)$ (based on the same arguments as in Section~\ref{section:flow} and the constraints~\eqref{DLP1}).
In what follows, we use Procedures~\ref{algo:linear1} and~\ref{algo:linear2} so to define and update the values of $\lambda_j$'s.

\begin{lemma}\label{lemma:convexconcave1}
For $\lambda_{j}$'s as defined by Procedure~\ref{algo:linear1},
property $(\mathcal{P}1)$ holds if: \\
(i) $g$ is convex, all jobs have equal density and $A \equiv \text{FIFO}$,\\
(ii) $g$ is concave, all jobs have equal density and $A \equiv \text{LIFO}$.
\end{lemma}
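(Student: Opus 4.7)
The plan is to mimic the computation in Lemma~\ref{lemma:dominancelinear} (and the analogous one in the GCP case), replacing the linear function by the general $g$ and using its convexity/concavity together with the release-time ordering enforced by FIFO/LIFO. First, I would order the pending jobs at time $\tau$ so that $C_1 < C_2 < \ldots < C_k$, and it will be enough to establish the pairwise inequalities
\begin{align*}
\gamma_j(t) \ge \gamma_{j+1}(t) \quad \forall t \in [\tau,C_j], \qquad \gamma_j(t) \le \gamma_{j+1}(t) \quad \forall t \ge C_j,
\end{align*}
from which $(\mathcal{P}1)$ follows by chaining (exactly as in the proof of Lemma~\ref{lemma:dominancelinear}).

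Next, using the defining equation $\lambda_j - \delta g(C_j-r_j) = \lambda_{j+1}-\delta g(C_j - r_{j+1})$ from Procedure~\ref{algo:linear1}, I would substitute to obtain
\begin{align*}
\gamma_j(t)-\gamma_{j+1}(t) = \delta\bigl\{[g(C_j-r_j)-g(C_j-r_{j+1})] - [g(t-r_j)-g(t-r_{j+1})]\bigr\}.
\end{align*}
Setting $a=|r_j-r_{j+1}|$ and $h(s):=g(s+a)-g(s)$, this expression reduces, up to an overall sign, to $\delta\bigl[h(C_j-r_{\max})-h(t-r_{\max})\bigr]$, where $r_{\max}=\max\{r_j,r_{j+1}\}$.

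The key ingredient is then the standard fact that $h(s)=g(s+a)-g(s)$ is non-decreasing in $s$ when $g$ is convex, and non-increasing when $g$ is concave. In case (i), FIFO gives $r_j\le r_{j+1}$ (jobs are executed in release order), so $r_{\max}=r_{j+1}$; convexity of $g$ makes $h$ non-decreasing, giving $\gamma_j(t)\ge\gamma_{j+1}(t)$ for $t\le C_j$ and the reverse for $t\ge C_j$. In case (ii), LIFO gives $r_j\ge r_{j+1}$, so $r_{\max}=r_j$, and the signs change but concavity of $g$ reverses the monotonicity of $h$, yielding exactly the same two inequalities.

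The main conceptual obstacle is justifying the ordering claim $r_j\le r_{j+1}$ under FIFO (resp.\ $r_j\ge r_{j+1}$ under LIFO): one must note that in the class $\mathcal{A}$ the ordering of $P_\tau$ is fixed at $\tau$, and since all jobs in $P_\tau$ are pending at $\tau$, ordering by completion times in the resulting schedule is the same as ordering by release times (ascending for FIFO, descending for LIFO). Once this is in place, the proof is just the computation above combined with the elementary monotonicity of $h$. Routine propagation to non-adjacent indices, and the fact that the slope/curve structure of $\gamma_j$ depends only on $\delta$ and $r_j$ (not on the $\lambda$'s), close the argument exactly as in Lemma~\ref{lemma:dominancelinear}.
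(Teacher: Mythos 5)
Your proposal is correct and follows essentially the same route as the paper: both derive the identity $\gamma_j(t)-\gamma_{j+1}(t) = \delta\bigl[(g(C_j-r_j)-g(C_j-r_{j+1})) - (g(t-r_j)-g(t-r_{j+1}))\bigr]$ from the defining equation of Procedure~\ref{algo:linear1}, use the release-time ordering forced by FIFO/LIFO, and invoke convexity/concavity to compare the two increments (your monotonicity of $h(s)=g(s+a)-g(s)$ is exactly the inequality the paper states directly). No gap.
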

\begin{proof}
We will follow the proof of Lemma~\ref{lemma:dominancelinear} by showing
 ($\ref{eq:dominant-ineq-1}$) and ($\ref{eq:dominant-ineq-2}$).
Similarly all jobs have the same density, we obtain
\begin{align*}
\lambda_{j} - \delta g(t-r_{j}) &= \lambda_{j+1} - \delta g(t-r_{j+1}) \\
 & \quad + \delta [(g(C_{j}-r_{j}) - g(C_{j}-r_{j+1})) - (g(t-r_{j}) - g(t-r_{j+1}))]
\end{align*}

\noindent (i) Suppose that $g$ is convex.
Since $C_j < C_{j+1}$ and the scheduling algorithm is FIFO, we have that $r_j \leq r_{j+1}$.
Thus, $g(C_{j}-r_{j}) \geq g(C_{j}-r_{j+1})$ and $g(t-r_{j}) \geq g(t-r_{j+1})$, since $g$ is non-decreasing.
If $t \geq C_j$, then by convexity it follows that $g(C_{j}-r_{j}) - g(C_{j}-r_{j+1}) \leq g(t-r_{j}) - g(t-r_{j+1})$,
and hence ($\ref{eq:dominant-ineq-1}$) holds.
If $t < C_j$, then by convexity it holds that $g(C_{j}-r_{j}) - g(C_{j}-r_{j+1}) \geq g(t-r_{j}) - g(t-r_{j+1})$,
and hence ($\ref{eq:dominant-ineq-2}$) holds.
\medskip

\noindent (ii) Suppose that $g$ is concave.
Since $C_j < C_{j+1}$ and the scheduling algorithm is LIFO, we have that $r_j \geq r_{j+1}$.
Thus, $g(C_{j}-r_{j}) \leq g(C_{j}-r_{j+1})$ and $g(t-r_{j}) \leq g(t-r_{j+1})$, since $g$ is non-decreasing.
If $t \geq C_j$, then by concavity it follows that $g(C_{j}-r_{j}) - g(C_{j}-r_{j+1}) \leq g(t-r_{j}) - g(t-r_{j+1})$,
and hence ($\ref{eq:dominant-ineq-1}$) holds.
If $t < C_j$, then by concavity it holds that $g(C_{j}-r_{j}) - g(C_{j}-r_{j+1}) \geq g(t-r_{j}) - g(t-r_{j+1})$,
and hence ($\ref{eq:dominant-ineq-2}$) holds.
\end{proof}

\begin{lemma}\label{lemma:convexconcave2}
For $\lambda_{j}$'s defined by Procedures~\ref{algo:linear1} and~\ref{algo:linear2},
property $(\mathcal{P}2)$ holds if \\
(i) $g$ is convex, all jobs have equal density and $A \equiv \text{FIFO}$,\\
(ii) $g$ is concave, all jobs have equal density and $A \equiv \text{LIFO}$.
\end{lemma}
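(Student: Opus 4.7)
The plan is to mirror the proof of Lemma~\ref{lemma:consistencylinear} step by step, by establishing analogs of the three claims used there: a \emph{monotonicity claim} on the increments $\Delta_j$ produced by Procedure~\ref{algo:linear1}, an \emph{ordering claim} on release times of representatives of critical sets, and a \emph{continuity claim} that jobs in the same critical set are executed consecutively. The partition of $\mathcal{R}_\tau$ into representative sets $S_1,\dots,S_k$ from Procedure~\ref{algo:linear2} requires no change, and the final boundary-dominance argument between consecutive critical sets then carries over verbatim once these three analogs are in hand.

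The crux is the monotonicity claim (the analog of Claim~\ref{claim:linear-change}): for $j_1,j_2\in P_\tau\setminus\{z\}$ with $C_{j_1}<C_{j_2}$, I would show $\Delta_{j_1}\ge\Delta_{j_2}$. Since all densities are equal, the position of the new job $z$ in the new queue is forced by the scheduling rule: FIFO places $z$ last (as $r_z=\tau\ge r_j$ for every pending~$j$), leaving the other completion times unchanged; LIFO places $z$ first and delays every other pending job by exactly $p_z$. In the FIFO/convex case this immediately makes the cascade in Procedure~\ref{algo:linear1} propagate one and the same additive increment back through the queue, so $\Delta_j$ is in fact \emph{constant} over $j\in P_\tau\setminus\{z\}$, and monotonicity is trivial.

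In the LIFO/concave case the same cascade yields, for two jobs $j_i,j_{i+1}\in P_\tau\setminus\{z\}$ that are consecutive in the new ordering (so $C_{j_i}<C_{j_{i+1}}$),
\begin{equation*}
\Delta_{j_i}-\Delta_{j_{i+1}}=\delta\bigl[(g(x+p_z)-g(x))-(g(y+p_z)-g(y))\bigr],
\end{equation*}
with $x=C'_{j_i}-r_{j_i}$ and $y=C'_{j_i}-r_{j_{i+1}}$, where primes denote pre-arrival values. Since LIFO executes the more recently released job earlier, $r_{j_i}>r_{j_{i+1}}$, hence $y>x$; concavity of $g$ (non-increasing first differences) then gives $\Delta_{j_i}\ge\Delta_{j_{i+1}}$, as required.

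The release-time ordering claim (analog of Claim~\ref{claim:linear-order}) I would prove by contradiction exactly as in Lemma~\ref{lemma:consistencylinear}: a critical-set representative violating the forced FIFO/LIFO order of release times would force a higher-priority competitor to have been suppressed at some $C_a<\tau$, contradicting the structure of the past schedule. The continuity claim depends only on the construction in Procedure~\ref{algo:linear2} and so is verbatim that of Claim~\ref{claim:linear-continuous}. I expect the main obstacle to be the LIFO/concave case of the monotonicity claim, where the correct telescoping identity must first be extracted from Procedure~\ref{algo:linear1} and the concavity of $g$ must then be invoked on the right pair of arguments; once that is in place, the remainder is a mechanical adaptation of the proof of Lemma~\ref{lemma:consistencylinear}.
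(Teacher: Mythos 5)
Your proposal is correct and follows essentially the same route as the paper: the proof reduces to an analog of Claim~\ref{claim:linear-change} (your monotonicity claim), with Claims~\ref{claim:linear-order} and~\ref{claim:linear-continuous} and the final boundary argument carried over from Lemma~\ref{lemma:consistencylinear} unchanged. The only cosmetic difference is that the paper computes each increment explicitly as $\Delta_j=\delta\bigl(g(C_j-r_j)-g(C_j-p_z-r_j)\bigr)$ (using that LIFO shifts every pending completion time by exactly $p_z$ and FIFO leaves them unchanged) and then compares, whereas you derive the equivalent consecutive-difference identity directly from the defining equations of Procedure~\ref{algo:linear1}; both yield $\Delta_{j_1}\ge\Delta_{j_2}$ from concavity together with the LIFO release-order $r_{j_1}\ge r_{j_2}$.
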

\begin{proof}
We rely on the following claim. Recall that $P_\tau$ denotes the set of
the algorithm's pending jobs at the release of $z$ at time $\tau$.
\begin{claim}\label{claim:convexconcave-change}
Let $j_{1}$ and $j_{2}$ be two jobs in $P_\tau$ such that $C_{j_1} < C_{j_2}$.
Then $\Delta_{j_{1}} \geq \Delta_{j_{2}}$.
\end{claim}
\begin{claimproof}~\\
\noindent(i) Suppose that $g$ is convex and we follow the FIFO policy, then
$C_{j_1} <C_{j_2}<C_z$, In this case, both $\gamma_{j_1}$ and $\gamma_{j_2}$
are moved up by $\gamma_{z}(C_{z} - p_{z} - r_{z}) - \gamma_{z}(C_{z} - r_{z})$.
Therefore, $\Delta_{j_{1}} = \Delta_{j_{2}} = \delta (g(C_{z} - r_{z}) - g(C_{z} - p_{z} - r_{z}))$.
\medskip

\noindent(ii) Suppose that $g$ is concave and we follow the LIFO policy.
Then $C_z < C_{j_1} <C_{j_2}$.
Hence, the completion times of both $j_{1}$ and $j_{2}$ are delayed by $p_{z}$ by the algorithm.
In other words, before the release of $z$ the completion times of $j_{1}$ and $j_{2}$ were
$C_{j_{1}}-p_{z}$ and $C_{j_{2}}-p_{z}$, respectively.
Moreover, the relative ordering of all jobs in $P_\tau \setminus \{z\}$ (as done by the algorithm)
remains the same before and after the release of $z$.
Thus, by Procedure~\ref{algo:linear1}, we have
$\lambda_{j_{1}} - \delta g(C_{j_{1}} - r_{j_{1}}) = \lambda'_{j_{1}} - \delta g(C_{j_{1}} - p_{z} - r_{j_{1}})$,
where $\lambda'_{j_{1}}$ is the value of the dual variable of $j_{1}$ before the release of $z$.
Hence, $\Delta_{j_{1}} = \delta(g(C_{j_{1}} - r_{j_{1}}) - g(C_{j_{1}} - p_{z} - r_{j_{1}}))$.
Similarly, $\Delta_{j_{2}} = \delta(g(C_{j_{2}} - r_{j_{2}}) - g(C_{j_{2}} - p_{z} - r_{j_{2}}))$.
Therefore, $\Delta_{j_{1}} \geq \Delta_{j_{2}}$ since $g$ is concave and by the LIFO algorithm $r_{j_1} \geq r_{j_2}$.
\end{claimproof}

Note that Claims~\ref{claim:linear-order} and~\ref{claim:linear-continuous}
also hold for the problems we study in this section. Therefore, by
applying the same arguments as in the proof of Lemma~\ref{lemma:consistencylinear}, and by replacing
Claim~\ref{claim:linear-change} with Claim~\ref{claim:convexconcave-change}, we arrive at the same conclusion.
\end{proof}

The proof of the following lemma directly follows from Lemmas~\ref{lemma:convexconcave1} and~\ref{lemma:convexconcave2} as in the linear case.
\begin{lemma}\label{lemma:convexconcave3}
For $\lambda_{j}$'s as defined by Procedures~\ref{algo:linear1} and~\ref{algo:linear2},
the property $(\mathcal{P}3)$ holds if
(i) $g$ is convex and all jobs have equal density; or
(ii) $g$ is concave and all jobs have equal density.
\end{lemma}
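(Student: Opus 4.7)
The plan is to mirror the proof of Lemma~\ref{lemma:completionlinear} essentially verbatim, with the linear curves $\lambda_j-\delta_j(t-r_j)$ replaced by the curves $\gamma_j(t)=\lambda_j-\delta g(t-r_j)$ used in this section. The argument needs only three ingredients: (i) a boundary value $\gamma(C_{\max}^\tau)=0$ by construction, (ii) $\gamma_j(C_{\max}^\tau)\leq 0$ for every job already released by time $\tau$, and (iii) monotonicity of each $\gamma_j$ for $t\geq C_{\max}^\tau$.

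For (i) and (ii), I would invoke Procedure~\ref{algo:linear1}, which explicitly sets $\lambda_k$ so that $\gamma_k(C_k)=0$, where $C_k=C_{\max}^\tau$ is the completion time of the last pending job. Since $C_k\geq\tau$ and job $k$ is being executed just before $C_k$, the future-dominance property $(\mathcal{P}1)$ — supplied here by Lemma~\ref{lemma:convexconcave1} (part (i) in the convex case and part (ii) in the concave case) — yields $\gamma_j(C_k)\leq \gamma_k(C_k)=0$ for every $j$ with $r_j\leq C_k$, which covers all of ${\cal J}_\tau\cup P_\tau$. Lemma~\ref{lemma:convexconcave2} is what guarantees that the update step performed by Procedure~\ref{algo:linear2} did not spoil this dominance for jobs already completed by $\tau$.

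The only step where the argument formally departs from the linear proof is (iii). In the linear case monotonicity was immediate from the negative slope $-\delta_j$; here, $\gamma_j(t)=\lambda_j-\delta g(t-r_j)$ is non-increasing in $t$ simply because $g$ is assumed to be non-decreasing — crucially, neither convexity nor concavity of $g$ plays any role at this step, only monotonicity, which is part of the standing hypothesis on $g$. Consequently $\gamma_j(t)\leq \gamma_j(C_{\max}^\tau)\leq 0$ for every $t\geq C_{\max}^\tau$ and every $j$ with $r_j\leq t$, and therefore $\gamma(t)=\max\{0,\max_{j:r_j\leq t}\gamma_j(t)\}=0$ for all such $t$, as required. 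There is no substantive obstacle here: all curvature-sensitive reasoning was already absorbed into Lemmas~\ref{lemma:convexconcave1} and~\ref{lemma:convexconcave2}, and $(\mathcal{P}3)$ collapses to monotonicity of $g$, which is common to both parts of the lemma.
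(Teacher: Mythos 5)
Your proposal is correct and follows essentially the same route as the paper, which simply notes that $(\mathcal{P}3)$ follows from Lemmas~\ref{lemma:convexconcave1} and~\ref{lemma:convexconcave2} exactly as in the linear case (Lemma~\ref{lemma:completionlinear}): dominance gives $\gamma_j(C_{\max}^\tau)\leq\gamma_k(C_{\max}^\tau)=0$ for all released jobs, and each $\gamma_j$ is non-increasing because $g$ is non-decreasing. Your observation that only monotonicity of $g$, not its curvature, is needed at this step is accurate and matches the paper's intent.
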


By Lemmas~\ref{lemma:convexconcave1},~\ref{lemma:convexconcave2} and~\ref{lemma:convexconcave3},
the properties $(P_1)$, $(P_2)$ and $(P_3)$, respectively, are satisfied, and hence we obtain the following theorem.

\begin{theorem} \label{thm:convex-concave}
The primal-dual algorithm $A \equiv FIFO$ (resp. $A \equiv LIFO$) is an optimal online algorithm for fractional GFP
and a $(1+\epsilon)$-speed $\frac{1+\epsilon}{\epsilon}$-competitive algorithm for integral GFP,
when we consider convex (reps. concave) cost functions and jobs of equal density.
\end{theorem}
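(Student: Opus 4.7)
The plan is to reduce this theorem directly to the general framework established in Theorem~\ref{thm:framework1}. That theorem asserts that any online algorithm which maintains a feasible dual solution satisfying the three abstract properties $(\mathcal{P}1)$, $(\mathcal{P}2)$, $(\mathcal{P}3)$ is simultaneously optimal for the fractional objective and $(1+\epsilon)$-speed $\frac{1+\epsilon}{\epsilon}$-competitive for the integral one. So it suffices to produce, for each of the two cases (convex $g$ with $A\equiv\text{FIFO}$; concave $g$ with $A\equiv\text{LIFO}$), a family of dual variables $\{\lambda_j\}$ and curves $\gamma_j(t)=\lambda_j-\delta\, g(t-r_j)$ for which these three properties hold.

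First, I would fix the dual setup: define $\gamma_j(t)=\lambda_j-\delta g(t-r_j)$ for each job $j$ (the common density $\delta$ replaces the $\delta_j$ of the linear case), set $\gamma(t)=\max\{0,\max_{j:r_j\leq t}\gamma_j(t)\}$, and on arrival of a new job $z$ at time $\tau$ apply Procedure~\ref{algo:linear1} to assign $\lambda_j$ to pending jobs in order of FIFO/LIFO completion times, then Procedure~\ref{algo:linear2} to translate the lines of already-completed jobs consistently. With these definitions the dual is feasible by construction ($\lambda_j\geq 0$ since values only grow, and $\gamma(t)\geq 0$ and $\lambda_j-\gamma(t)\leq\delta g(t-r_j)$ follow from taking an upper envelope).

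Next I would verify each property by citing the work already done. Property $(\mathcal{P}1)$, future dominance, is exactly the content of Lemma~\ref{lemma:convexconcave1}, which handles both the FIFO/convex and LIFO/concave cases by an identity of the same shape as equation~(\ref{eq:dominant-critical}), using the secant-slope inequality for convex/concave functions combined with monotonicity of release times under FIFO/LIFO. Property $(\mathcal{P}2)$, past dominance, is exactly Lemma~\ref{lemma:convexconcave2}, where Claim~\ref{claim:convexconcave-change} plays the role that Claim~\ref{claim:linear-change} played in the linear case (and Claims~\ref{claim:linear-order} and~\ref{claim:linear-continuous} carry over verbatim). Property $(\mathcal{P}3)$, completion, is Lemma~\ref{lemma:convexconcave3}, which follows in exactly the manner of Lemma~\ref{lemma:completionlinear}: since $\gamma_k(C_k)=0$ by the base step of Procedure~\ref{algo:linear1}, and $g$ is non-decreasing, one concludes that all $\gamma_j(t)\leq 0$ beyond $C_{\max}^\tau$.

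Having verified the hypotheses of the framework, the conclusion is immediate: invoke Theorem~\ref{thm:framework1} with $A\equiv\text{FIFO}$ in the convex case and $A\equiv\text{LIFO}$ in the concave case, and the two claimed guarantees (fractional optimality and $(1+\epsilon)$-speed $\frac{1+\epsilon}{\epsilon}$-competitiveness for the integral objective) follow. I do not anticipate any real obstacle here since all the technical content has been isolated into the preceding lemmas; the only point worth flagging is simply ensuring that the auxiliary structural claims used in the linear proof of past dominance (Claims~\ref{claim:linear-order} and~\ref{claim:linear-continuous}) remain valid in the FIFO/LIFO setting, which they do because their arguments depend only on the algorithm executing critical sets contiguously and on representatives being arranged consistently with release times under the relevant policy.
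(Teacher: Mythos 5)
Your proposal is correct and follows essentially the same route as the paper: the theorem is obtained by verifying properties $(\mathcal{P}1)$--$(\mathcal{P}3)$ via Lemmas~\ref{lemma:convexconcave1},~\ref{lemma:convexconcave2} and~\ref{lemma:convexconcave3} (with Procedures~\ref{algo:linear1} and~\ref{algo:linear2} defining the duals and Claim~\ref{claim:convexconcave-change} substituting for Claim~\ref{claim:linear-change}) and then invoking Theorem~\ref{thm:framework1}. Your closing remark about checking that Claims~\ref{claim:linear-order} and~\ref{claim:linear-continuous} carry over is exactly the point the paper also flags.
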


\subsection{PSP with positive constraint coefficients}
\label{subsec:psp}

In this section we study the PSP problem (defined formally in Section~\ref{sec:introduction}), assuming
constraints of the form $Bx \leq 1$, and $b_{ij} > 0$ for every $i,j$.
We denote by $b_j$, $B_j$ the smallest and largest element of each column of $B$, respectively,
i.e., $b_{j} = \min_{i} b_{ij}$ and $B_{j} = \max_{i} b_{ij}$.
The primal LP relaxation of the problem, and its dual are as follows:

\begin{align*}
\min \sum_{j \in \mathcal{J}} \delta_{j} \int_{r_j}^{\infty} & (t-r_j) x_j(t) dt \nonumber\\
\int_{r_j}^{\infty} x_j(t) dt &\geq p_j \qquad \forall j \in \mathcal{J} \\
\sum_{j \in \mathcal{J}} b_{ij}x_j(t) &\leq 1 \qquad \forall t \geq 0, \forall 1 \leq i \leq m \\
x_j(t) &\geq 0 \qquad \forall j \in \mathcal{J}, t \geq 0 \notag 
\end{align*}

\begin{align*}
\max \sum_{j \in \mathcal{J}} \lambda_j p_j - \sum_{i=1}^{m} & \int_0^{\infty} \gamma_{i}(t) dt \nonumber\\
\lambda_j - \sum_{i} b_{ij} \gamma_{i}(t) \leq \delta_{j} & (t-r_j) \quad \forall j \in \mathcal{J}, t \geq r_{j} \\
\lambda_j \geq 0 & \quad \forall j \in \mathcal{J} \nonumber\\
\gamma_{i}(t) \geq 0 & \quad \forall t \geq 0 \nonumber
\end{align*}

We begin with the intuition behind the analysis, and how one can exploit the ideas of the analysis
of HDF of Section~\ref{section:flow}, in the context of minimum total flow.
The essential difference between the two problems is the set of packing constraints that are present in the PSP formulation.
This difference manifests itself in the dual with the constraints $\lambda_j - \sum_{i} b_{ij} \gamma_{i}(t) \leq \delta_{j} (t-r_j)$,
for all $j \in \mathcal{J}, t \geq 0$.
In contrast, the dual LP of the minimum total weighted flow time problem has corresponding constraints $\lambda_j - \gamma(t) \leq \delta_{j} (t-r_j)$.
At this point, one would be motivated to define $\mu(t) = \sum_{i=1}^m b_{ij}\gamma_i(t)$, and then view this $\mu(t)$ as the variable $\gamma(t)$
in the dual LP formulation of the minimum total flow problem (and thus one would proceed as in the analysis of HDF).
However, a complication arises: it is not clear how to assign the variables $\gamma_i(t)$, for given $\mu(t)$.
Because of this, we follow a different way (which will guarantee the feasibility of the $\gamma_i(t)$'s'):
Instead of satisfying the constraint $\lambda_j - \sum_{i=1}^m b_{ij} \gamma_{i}(t) \leq \delta_{j} (t-r_j)$
we will satisfy the stronger constraint $\lambda_j -b_j \sum_{i=1}^m \gamma_i(t) \leq \delta_{j} (t-r_j)$.
At an intuitive level, we will satisfy the constraints $\lambda'_j - \mu'(t) \leq \delta'_{j} (t-r_j)$,
where $\delta'_{j} = \delta_{j}/b_{j}$ and $\mu'(t)$ and $\lambda'_{j}$ are variables.
Using the same scheme as in the setting of linear functions we can construct $\lambda'_{j}$ (so $\mu'_{j}(t)$),
so that the properties $(\mathcal{P}1), (\mathcal{P}2), (\mathcal{P}3)$ are satisfied.
Particularly, if job $j$ is processed at time $t$ then $\mu'_{j}$ is dominant at $t$, i.e.,
$\mu'_{j}(t) = \lambda'_j-\delta'_j(t-r_j)=\mu'(t) = \max_{k} \mu'_{k}(t)$.
Moreover, $\lambda'_{j} - \mu'(t) \leq \delta'_{j} (t-r_{j})$ for every job $j$ and $t \geq r_{j}$.

As a last step, we need to translate the above dual variables to the PSP problem.
Define $\lambda_{j} = \lambda'_{j} b_{j}$ for every job $j$.
Let $j$ be the job processed at time $t$ and the constraint $i$ is a tight constraint at time $t$, i.e., $b_{ij} = B_{j}$.
Set $\gamma_{i}(t) = \mu'(t)$ and $\gamma_{i'}(t) = 0$ for $i' \neq i$.

\paragraph{Algorithm.} The above discussion implies an adaptation of HDF to the PSP problem as follows:
At any time $t$, we schedule job the job $j$ which attains the highest ratio
$\delta_{j}/b_{j}$ among all pending jobs, at rate $x_{j}(t) = 1/B_{j}$.

Based on this algorithm, we prove Theorem~\ref{thm:psp} using similar arguments as for the proof of Theorem~\ref{thm:linear}.

\begin{theorem} \label{thm:psp}
For the online PSP problem with constraints $Bx \leq 1$ and $b_{ij} > 0$ for every $i,j$,
an adaptation of HDF is $\max_{j} \{B_{j}/b_{j}\}$-speed 1-competitive for fractional weighted flow time
and $\max_{j} \{(1+\epsilon)B_{j}/b_{j}\}$-speed $(1+\epsilon)/\epsilon$-competitive for integral weighted flow time.
\end{theorem}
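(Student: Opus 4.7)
The plan is to reduce the analysis to the total weighted flow time setting of Section~\ref{section:flow} by introducing a virtual instance with modified densities $\delta'_j = \delta_j/b_j$ and lifting the resulting scalar dual envelope $\mu'(t)$ to the multi-row dual $(\gamma_i(t))$ of the PSP relaxation. Since the adapted HDF processes jobs in non-increasing order of $\delta'_j$, the virtual dominance machinery of Lemmas~\ref{lemma:dominancelinear}, \ref{lemma:consistencylinear} and~\ref{lemma:completionlinear} should apply essentially verbatim.

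Concretely, I would first run the algorithm and record its actual completion times $C_j$; since job $j$ is processed at rate $1/B_j$, its processing interval $I_j$ has length $p_j B_j$. I would then invoke Procedures~\ref{algo:linear1} and~\ref{algo:linear2} on the virtual instance (with densities $\delta'_j$ and completion times $C_j$) to define $\lambda'_j$, the curves $\mu'_j(t) = \lambda'_j - \delta'_j(t-r_j)$, and the envelope $\mu'(t)$, so that the properties $(\mathcal{P}1)$--$(\mathcal{P}3)$ hold in the virtual setting. Finally I lift to the PSP dual by setting $\lambda_j = b_j \lambda'_j$ and, at each time $t$, concentrating the envelope on a single tight row: $\gamma_{i^*(t)}(t) = \mu'(t)$ and $\gamma_i(t) = 0$ for $i \neq i^*(t)$, where $i^*(t)$ attains $b_{i^*(t), j^*(t)} = B_{j^*(t)}$ for the running job $j^*(t)$.

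The main obstacle is verifying feasibility of this lifted dual, since the active row $i^*(t)$ is driven by the running job whereas the dual constraint must hold for every job $j'$. For $j'$ and $t \geq r_{j'}$, the left-hand side of the dual constraint evaluates to $b_{j'}\lambda'_{j'} - b_{i^*(t), j'}\mu'(t)$, which I would upper-bound by $b_{j'}(\lambda'_{j'} - \mu'(t))$ using $b_{i^*(t), j'} \geq b_{j'}$ and $\mu'(t) \geq 0$; the virtual dominance $\mu'(t) \geq \mu'_{j'}(t)$ then yields the required $\delta_{j'}(t - r_{j'})$ after multiplying through by $b_{j'}$. This row-uniform slack $b_{j'}$ is precisely what later forces the speed augmentation factor $\max_j B_j/b_j$.

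For the objective, on $I_j$ the identity $\delta'_j(t - r_j) = \lambda'_j - \mu'(t)$ holds; integrating and multiplying through by $b_j/B_j$ expresses the algorithm's fractional cost as $\sum_j \lambda_j p_j - \sum_j (b_j/B_j)\int_{I_j} \mu'(t)\, dt$. The speed-$1/\alpha$ dual objective is $\sum_j \lambda_j p_j - (1/\alpha)\int \mu'(t)\, dt$, so choosing $\alpha = \max_j B_j/b_j$ makes $b_j/B_j \geq 1/\alpha$ hold pointwise and yields $1$-competitiveness for the fractional case. The integral case then follows exactly as in the proof of Theorem~\ref{thm:framework1}: on $I_j$ dominance gives $\mu'(t) \leq \lambda'_j$, and $\mu'(C_j)=0$ implies $w_j F_j \leq \lambda_j p_j$; picking $\alpha = (1+\epsilon)\max_j B_j/b_j$ makes the surplus $1 - B_j/(\alpha b_j) \geq \epsilon/(1+\epsilon)$ and gives the $(1+\epsilon)/\epsilon$ ratio.
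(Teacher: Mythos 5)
Your proposal is correct and follows essentially the same route as the paper's proof: rescaling densities to $\delta'_j=\delta_j/b_j$, building the scalar envelope $\mu'(t)$ via the linear-flow-time machinery, lifting with $\lambda_j=b_j\lambda'_j$ while concentrating the dual mass on a single tight row, and using $b_{ij}\ge b_j$ for feasibility and $x_j(t)=1/B_j$ for the objective accounting. The only cosmetic difference is that you carry a uniform speed $\alpha=\max_j B_j/b_j$ through the inequalities where the paper does a per-job speed-$b_j/B_j$ comparison, which is equivalent.
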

\begin{proof}
We first show the feasibility of the dual solution.
For every job $j$ and time $t$, we have
\begin{align*}
\lambda_{j}/b_{j}- \delta_{j}/b_{j} \cdot (t-r_{j}) = \mu'_{j}(t)
\leq \mu'(t) = \sum_{i} \gamma_{i}(t) \leq \sum_{i} b_{ij}/b_{j} \cdot \gamma_{i}(t)
\end{align*}
where the last inequality is due to $b_{ij} \geq \min_{i}b_{ij} = b_{j} > 0$.

We will now bound the integral and fractional primal cost (with unit speed) by the dual cost (with speed $b_{j}/B_{j}$).
Consider a job $j$ and let $[t_1,t_2], [t_2,t_3],$ $ \ldots, [t_{m-1},t_m]$ be the time intervals
during which $j$ is executed. Note that $x_{j}(t) = 1/B_{j}$ for every $t$ during the intervals
(and $x_{j'}(t) = 0$ for $j' \neq j$).
The contribution of job $j$ to the fractional primal objective is
\begin{equation*}
\sum_{a=1}^{m-1} \int_{t_a}^{t_{a+1}} \delta_j (t-r_j) x_{j}(t) dt =
\sum_{a=1}^{m-1} \int_{t_a}^{t_{a+1}} \delta'_{j} (t-r_{j}) b_{j}/B_{j} dt
\end{equation*}

The contribution of $j$ to the integral primal objective is
\begin{equation*}
w_{j} (C_{j}-r_{j}) = p_{j} \delta_{j}(C_{j}-r_{j}) \leq p_{j} \lambda_{j}
\end{equation*}
since the curve $\lambda'_{j} - \delta'_{j}(t-r_{j}) \geq 0$ for every $t$
during which $j$ is executed, particularly for $t = C_{j}$ (and note that
$\lambda_{j} = b_{j} \lambda'_{j}$ and $\delta_{j} = b_{j} \delta'_{j}$).

Assuming that job $j$ is processed on the machine with speed $b_{j}/B_{j}$,
the contribution of job $j$ to the dual is
\begin{align*}
\lambda_j p_j &- \frac{b_{j}}{B_{j}} \sum_{a=1}^{m-1} \int_{t_a}^{t_{a+1}} \sum_{i}\gamma_{i}(t) dt
= \sum_{a=1}^{m-1} \int_{t_a}^{t_{a+1}} \biggl( \lambda_j x_{j}(t) - \frac{b_{j}}{B_{j}} \sum_{i}\gamma_i(t) \biggl) dt \\
&= \sum_{a=1}^{m-1} \int_{t_a}^{t_{a+1}} \frac{b_{j}}{B_{j}} \biggl( \lambda'_j - \mu'(t) \biggl) dt
= \sum_{a=1}^{m-1} \int_{t_a}^{t_{a+1}} \frac{b_{j}}{B_{j}} \delta'_{j} (t-r_{j}) dt
\end{align*}
where the first equality is due to $\sum_{a=1}^{m-1} \int_{t_a}^{t_{a+1}}= p_{j}$; the
second equality follows by $x_{j}(t) = 1/B_{j}$ and the definitions of dual variables;
and the last equality holds by the dominance property. Hence,
the contribution of job $j$ to the fractional primal cost and that to the dual one (with speed $b_{j}/B_{j}$)
are equal. As the latter holds for every job, the first statement follows.

We can bound differently the contribution of $j$ to the dual.
\begin{align*}
\lambda_j p_j &- \frac{b_{j}}{(1+\epsilon)B_{j}} \sum_{a=1}^{m-1} \int_{t_a}^{t_{a+1}} \sum_{i}\gamma_{i}(t) dt
\geq \lambda_j p_j - \frac{1}{(1+\epsilon)B_{j}} \sum_{a=1}^{m-1} \int_{t_a}^{t_{a+1}} \lambda_{j} dt \\
&= \lambda_j p_j - \frac{\lambda_{j}}{1+\epsilon} \sum_{a=1}^{m-1} \int_{t_a}^{t_{a+1}} x_{j}(t) dt
= \frac{\epsilon}{1+\epsilon} \lambda_{j} p_{j}
\end{align*}
where the first inequality due to $\lambda_{j}/b_{j} = \lambda'_{j} \geq \mu'(t) = \sum_{i} \gamma_{i}(t)$
for every $t$ during which $j$ is processed; the first equality holds since job $j$ is processed at rate $x_{j}(t) = 1/B_{j}$;
and the last equality follows $\int x_{j}(t)dt = p_{j}$. Therefore, the contribution of job $j$ to the integral primal cost
is at most $(1 + \epsilon)/\epsilon$ that to the dual objective. Again, summing over all jobs, the second statement holds.
\end{proof}

\section{A generalized framework using dual-fitting}
\label{sec:competitive}

In this section we relax certain properties as established in Section~\ref{section:framework-opt}
in order to generalize our framework and apply it to the integral variant of more problems.
Our analysis here is based on the dual-fitting paradigm, since the analysis of Section~\ref{section:flow}
provides us with intuition about the geometric interpretation of the primal and dual objectives.
We consider, as concrete applications, GFP for given cost functions $g$.
We again associate with each job $j$ the curve $\gamma_j$ and set $\gamma(t)=\max\{0,\max_{j \in \mathcal{J}: r_j \leq t}\{\gamma_j(t)\}\}$.
Then, we need to define how to update the dual variables $\lambda_j$, upon release of a new job $z$ at time $\tau$,
such that the following properties are satisfied:
%$(\mathcal{Q}1)$: \ if the algorithm $A$ schedules job $j$ at time $t\geq \tau$
%then $\gamma_{j}(t) \geq 0$ and $\lambda_{j} \geq \gamma_{j'}(t)$ for every other pending job $j'$ at time $t$;
%$(\mathcal{Q}2)$: \ if the algorithm $A$ schedules job $j$ at time $t<\tau$,
%then $\gamma_{j}(t) \geq 0$ and $\lambda_{j} \geq \gamma_{j'}(t)$ for every other pending job $j'$ at time $t$.
%In addition, the primal solution for $t<\tau$ is not affected by the release of $z$; and
%$(\mathcal{Q}3)$: \ $\gamma(t) = 0$ for all $t > C_{\max}^\tau$.
\begin{description}
\item[$(\mathcal{Q}1)$] If the algorithm $A$ schedules job $j$ at time $t\geq \tau$
then $\gamma_{j}(t) \geq 0$ and $\lambda_{j} \geq \gamma_{j'}(t)$ for every other pending job $j'$ at time $t$.
\item[$(\mathcal{Q}2)$] If the algorithm $A$ schedules job $j$ at time $t<\tau$,
then $\gamma_{j}(t) \geq 0$ and $\lambda_{j} \geq \gamma_{j'}(t)$ for every other pending job $j'$ at time $t$.
In addition, the primal solution for $t<\tau$ is not affected by the release of $z$.
\item[$(\mathcal{Q}3)$] $\gamma(t) = 0$ for all $t > C_{\max}^\tau$.
\end{description}

Note that $(\mathcal{Q}1)$ is relaxed with respect to property $(\mathcal{P}1)$ of Section~\ref{section:framework-opt},
since it describes a weaker dominance condition.
Informally, $(\mathcal{Q}1)$ guarantees that for any time $t$ the job that is scheduled at $t$ does not have negative contributions in the dual.
On the other hand, property~$(\mathcal{Q}2)$ is the counterpart of $(\mathcal{Q}1)$,
for times $t<\tau$ (similar to the relation between $(\mathcal{P}1)$ and $(\mathcal{P}2)$).
Finally, note that even though the relaxed properties do not  guarantee anymore the optimality for the fractional objectives,
the following theorem (Theorem~\ref{thm:framework2}) establishes exactly the same result as Theorem~\ref{thm:framework1} for the {\em integral} objectives.
This is because in the second part of the proof of Theorem~\ref{thm:framework1} we only require that
when $j$ is executed at time $t$ then $\lambda_{j} \geq \gamma_{j'}(t)$ for every other pending job $j'$ at $t$,
which is in fact guaranteed by properties~$(\mathcal{Q}1)$ and~$(\mathcal{Q}2)$.
%Therefore, the proof of the following theorem is identical with the one of Theorem~\ref{thm:framework1}.

\begin{theorem} \label{thm:framework2}
Any algorithm that satisfies the properties $(\mathcal{Q}1)$, $(\mathcal{Q}2)$ and $(\mathcal{Q}3)$ with respect to a feasible dual solution
is a $(1+\epsilon)$-speed $\frac{1+\epsilon}{\epsilon}$-competitive algorithm for integral GFP with general cost functions $g$.
\end{theorem}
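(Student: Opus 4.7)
The plan is to follow the proof of Theorem~\ref{thm:framework1} almost verbatim, simply replacing each invocation of the dominance properties $(\mathcal{P}1), (\mathcal{P}2)$ by the weaker properties $(\mathcal{Q}1), (\mathcal{Q}2)$ at the points where the original proof actually uses only the one-sided inequalities that they still guarantee. The first (fractional optimality) half of Theorem~\ref{thm:framework1}'s proof is dropped, since the goal is only the integral competitive ratio. Dual feasibility is inherited unchanged: the $\lambda_j$'s are nonnegative (they are only ever increased by the updating procedures), and by construction $\gamma(t) = \max\{0, \max_{j':r_{j'}\leq t}\gamma_{j'}(t)\}$ satisfies constraints~\eqref{DLP1}.

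First I would upper-bound the algorithm's integral cost by $\sum_j \lambda_j p_j$. For each job $j$ with execution intervals whose last right endpoint is $t_k=C_j$, the nonnegativity clause of $(\mathcal{Q}1)$ or $(\mathcal{Q}2)$ (depending on whether $C_j$ lies in the future or past of the relevant arrival $\tau$) gives $\gamma_j(C_j) = \lambda_j - \delta_j g(C_j - r_j) \geq 0$, hence $w_j g(C_j-r_j) = \delta_j g(C_j-r_j)\,p_j \leq \lambda_j p_j$.

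Next I would lower-bound the $(D_{1+\epsilon})$-objective by $\tfrac{\epsilon}{1+\epsilon}\sum_j \lambda_j p_j$. The contribution of job $j$ to that objective is $\lambda_j p_j - \tfrac{1}{1+\epsilon}\sum_i\int_{t_i}^{t_{i+1}}\gamma(t)\,dt$, with property $(\mathcal{Q}3)$ ensuring $\int_{C_{\max}}^\infty \gamma(t)\,dt = 0$ so the $-\gamma/(1+\epsilon)$ mass can be partitioned over jobs' execution supports. The crux is that during $j$'s execution intervals one has $\gamma(t) \leq \lambda_j$: the second clause of $(\mathcal{Q}1)/(\mathcal{Q}2)$ yields $\lambda_j \geq \gamma_{j'}(t)$ for every other job $j'$ pending at time $t$, while $\gamma_j(t) \leq \lambda_j$ is automatic since $\delta_j g(t-r_j) \geq 0$. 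Granted $\gamma(t) \leq \lambda_j$ on the execution support, the standard calculation $\sum_i\int_{t_i}^{t_{i+1}}\gamma(t)\,dt \leq \lambda_j \sum_i\int_{t_i}^{t_{i+1}}x_j(t)\,dt = \lambda_j p_j$ delivers $\lambda_j p_j - \tfrac{1}{1+\epsilon}\lambda_j p_j = \tfrac{\epsilon}{1+\epsilon}\lambda_j p_j$. Chaining the two bounds and invoking weak duality on $(D_{1+\epsilon})$ gives the advertised $(1+\epsilon)$-speed $\tfrac{1+\epsilon}{\epsilon}$-competitive ratio.

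The main obstacle is the step $\gamma(t) \leq \lambda_j$, because the envelope $\gamma(t)$ ranges over all jobs released by time $t$, including those already completed, whereas $(\mathcal{Q}1)/(\mathcal{Q}2)$ only bound curves $\gamma_{j'}$ of currently pending $j'$. To close this gap one exploits that each $\gamma_{j'}$ is non-increasing (because $g$ is non-decreasing), so its value at a later $t$ is bounded by its value at $C_{j'}$; applying the pending-dominance clause at time $C_{j'}$ to the job then being executed, and unrolling inductively back in time, confines the envelope below $\lambda_j$ throughout $j$'s execution. Once this detail is in place, the remainder is essentially a line-by-line transcription of the corresponding calculation in Theorem~\ref{thm:framework1}.
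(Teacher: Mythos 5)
Your proposal follows the paper's own route: the paper proves Theorem~\ref{thm:framework2} by observing that the second half of the proof of Theorem~\ref{thm:framework1} only ever uses the one-sided inequalities $\gamma_j(t)\geq 0$ and $\lambda_j\geq\gamma_{j'}(t)$ at times when $j$ is executed, which is exactly the transcription you carry out. Your two bounds ($w_jg(C_j-r_j)\leq\lambda_jp_j$ from the nonnegativity clause at $t=C_j$, and the $\frac{\epsilon}{1+\epsilon}\lambda_jp_j$ lower bound on the $(D_{1+\epsilon})$-contribution from $\gamma(t)\leq\lambda_j$ on $j$'s execution support together with $(\mathcal{Q}3)$) are precisely the paper's calculation.

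You deserve credit for flagging the one point the paper passes over in silence: $\gamma(t)$ is the upper envelope over \emph{all} jobs released by time $t$, whereas $(\mathcal{Q}1)/(\mathcal{Q}2)$ only bound $\gamma_{j'}(t)$ for \emph{pending} $j'$, so jobs already completed before $t$ are not covered by the stated properties. However, the patch you sketch does not close this gap. Applying the pending-dominance clause ``at time $C_{j'}$ to the job then being executed'' yields inequalities with $\lambda_{j'}$ (or with the $\lambda$-value of the job that starts at $C_{j'}$) on the left-hand side, not $\lambda_j$; the chain you would unroll backwards in time therefore bounds $\gamma_{j'}(t)$ by the $\lambda$-value of an \emph{intermediate} job, and there is no reason in general that these $\lambda$-values are dominated by $\lambda_j$. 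Indeed, if $j'$ completes before $j$ ever begins execution, $(\mathcal{Q}1)/(\mathcal{Q}2)$ never produce any inequality relating $\lambda_j$ to $\gamma_{j'}$. In the paper this is resolved not inside the abstract theorem but in each application: the procedures are shown to satisfy the stronger statement $\lambda_j\geq\gamma_{j'}(t)$ for every job $j'$ released by $t$ (Lemma~\ref{lemma:qproperties} is proved for all of $R_\tau$, including the already-completed jobs of Case~1, and in Section~\ref{subsec:concave} the completed jobs are handled by the past-dominance machinery of Procedure~\ref{algo:linear2}). So either strengthen $(\mathcal{Q}1)/(\mathcal{Q}2)$ to range over all released jobs (which is what the applications actually verify), or supply a genuinely different argument for the completed jobs; the monotonicity-plus-unrolling step as written would fail.
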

\begin{proof}
The feasibility of the dual solution is straightforward.
We denote by $C_{\max}$ the completion time of the last job.
As in the proof of Theorem~\ref{thm:framework1}, we can assume, without loss of generality, that at time $t \leq C_{\max}$ there is at least one pending job in the schedule.

Consider now a job $j$ and let $[t_1,t_2], [t_2,t_3], \ldots, [t_{k-1},t_k]$ be the time intervals during which $j$ is executed.
Note that $x_{j}(t) = 1$ for every $t$ during the intervals (and $x_{j'}(t) = 0$ for $j' \neq j$).
Hence, the contribution of $j$ to the primal objective is
\begin{equation*}
\sum_{i=1}^{k-1} \int_{t_i}^{t_{i+1}} \delta_j g(t-r_j) dt
    \leq \sum_{i=1}^{k-1} \int_{t_i}^{t_{i+1}} \lambda_{j} dt
    = \lambda_{j} p_{j}
\end{equation*}
where the inequality follows from properties $(\mathcal{Q}1)$ and $(\mathcal{Q}2)$, and specifically due to the constraint that $\gamma_j(t) \geq 0$.

From properties $(\mathcal{Q}1)$ and $(\mathcal{Q}2)$, we have $\lambda_{j} \geq \gamma(t)$ during the same time intervals.
Since we assume that the optimal solution uses a speed of $\frac{1}{1+\epsilon}$, the contribution of job $j$ to the dual objective is at least
\begin{equation*}
\lambda_j p_j - \frac{1}{1+\epsilon} \sum_{i=1}^{k-1} \int_{t_i}^{t_{i+1}} \gamma(t) dt
\geq \lambda_j p_j - \frac{1}{1+\epsilon} \sum_{i=1}^{k-1} \int_{t_i}^{t_{i+1}} \lambda_j dt
= \frac{\epsilon}{1+\epsilon} \lambda_j p_j
\end{equation*}
In addition, from  property ($\mathcal{Q}3)$ we have $\int_{C_{\max}}^\infty \gamma(t) dt =0$.
Summing up over all jobs $j$, the theorem follows.
\end{proof}

\subsection{Online GFP with general cost functions and equal-density jobs}
\label{subsec:equal-density}

We will analyze the FIFO algorithm using dual fitting.
We will use a single procedure, namely Procedure~\ref{algo:general-equaldensity},
for the assignment of the $\lambda_j$ variables for each job $j$ released by time $\tau$.
We denote this set of jobs by $R_\tau$, and $k=|R_\tau|$.

\begin{algorithm}[h]
\begin{algorithmic}[1]
\STATE Consider jobs in $R_\tau$ in increasing order of completion times if no new jobs are released after time $\tau$, i.e., $C_1 < C_2 < \ldots < C_k$
\STATE Choose $\lambda_k$ such that $\gamma_k(C_{k})=0$
\FOR{$j = k-1$ to $1$}
    \STATE Choose the maximum possible $\lambda_j$ such that for every $t \geq C_j$, $\gamma_j(t) \leq \gamma_{j+1}(t)$
    \IF {$\gamma_j(C_j)<0$}
        \STATE Choose $\lambda_j$ such that $\gamma_j(C_j)=0$
    \ENDIF
\ENDFOR
\end{algorithmic}
\caption{Assignment and updating of $\lambda_{j}$'s for the set $R_\tau$ of all jobs released by time $\tau$ when a new job is released.}
\label{algo:general-equaldensity}
\end{algorithm}

We will need first the following simple proposition:

\begin{proposition}\label{lemma:general-equaldensity}
Let $i,j \in \mathcal{J}$ be two jobs such that $r_i \leq r_j$ and suppose that there is a time $t_0 \geq r_j$ such that $\gamma_i(t_0) \geq \gamma_j(t_0)$.
Then, $\lambda_i \geq \lambda_j$.
\end{proposition}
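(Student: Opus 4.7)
The plan is to exploit the explicit form of the curves in the equal-density setting of this subsection, namely $\gamma_j(t) = \lambda_j - \delta g(t - r_j)$, and reduce the claim to the monotonicity of $g$. All the ingredients are already in place: since $r_i \leq r_j$, the argument $t_0 - r_i$ exceeds $t_0 - r_j \geq 0$, and $g$ is non-decreasing with $g(0)=0$, so we will get a sign on $g(t_0-r_i) - g(t_0-r_j)$ for free.

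First I would write out the hypothesis $\gamma_i(t_0) \geq \gamma_j(t_0)$ using the definition of the curves, obtaining
\[
\lambda_i - \delta g(t_0 - r_i) \;\geq\; \lambda_j - \delta g(t_0 - r_j).
\]
Rearranging isolates the quantity of interest:
\[
\lambda_i - \lambda_j \;\geq\; \delta\bigl(g(t_0 - r_i) - g(t_0 - r_j)\bigr).
\]
Next I would invoke the two structural facts: $r_i \leq r_j$ gives $t_0 - r_i \geq t_0 - r_j$, and since $t_0 \geq r_j \geq r_i$ both arguments lie in the domain where $g$ is defined and non-decreasing. Hence $g(t_0 - r_i) \geq g(t_0 - r_j)$, so the right-hand side above is non-negative, and the conclusion $\lambda_i \geq \lambda_j$ follows immediately.

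There is no real obstacle here; the statement is essentially a direct consequence of the chosen parametrization of $\gamma_j$ together with the monotonicity of $g$. The only point worth highlighting in the write-up is that the argument uses the equal-density assumption in a crucial way: it is precisely this assumption that allows us to write both $\gamma_i$ and $\gamma_j$ with the same coefficient $\delta$ in front of $g$, so that the $\delta g(\cdot)$ terms combine cleanly and produce a non-negative quantity after using $r_i \leq r_j$ and the monotonicity of $g$.
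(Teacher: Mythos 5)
Your proof is correct and follows exactly the same route as the paper's: expand $\gamma_i(t_0)\geq\gamma_j(t_0)$ using $\gamma_j(t)=\lambda_j-\delta g(t-r_j)$, and use $r_i\leq r_j$ together with the monotonicity of $g$ to conclude $\lambda_i\geq\lambda_j$. Your write-up just spells out the rearrangement more explicitly than the paper does.
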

\begin{proof}
By the statement of the proposition we have that $\lambda_i - \delta g(t_0 - r_i) \geq \lambda_j - \delta g(t_0 - r_j)$.
Since the function $g$ is non-decreasing and $r_i \leq r_j$, it must be that $\lambda_i \geq \lambda_j$.
\end{proof}

The following lemma is instrumental in establishing the desired properties.

\begin{lemma} \label{lemma:qproperties}
For every job $j$ in $R_\tau$, $\lambda_j \geq \gamma_i(C_{j-1})$.
\end{lemma}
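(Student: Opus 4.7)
The plan is to prove the claim by a chain argument along indices $i, i+1, \dots, j-1$, with a case split on whether each $\lambda_{i'}$ was boosted in Lines~5--7 of Procedure~\ref{algo:general-equaldensity} or merely set by Line~4. Since $\gamma_j(C_{j-1}) = \lambda_j - \delta g(C_{j-1}-r_j) \le \lambda_j$ trivially handles $i=j$, and $\gamma_i(C_{j-1})$ lies outside the natural domain of $\gamma_i$ when $i>j$ and $r_i>C_{j-1}$, I focus on the interesting case $i<j$.

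First I would record two structural facts. Under FIFO on equal-density jobs, completion order agrees with release order, so $r_1 \le r_2 \le \dots \le r_k$; together with the no-idle-time convention (as in the proof of Theorem~\ref{thm:framework1}), this yields $r_j \le C_{j-1}$ for every $j \ge 2$, so that $\gamma_i(C_{j-1})$ is well defined whenever $i\le j$. Separately, if Lines~5--7 fire at some index $i'$, then by construction $\gamma_{i'}(C_{i'})=0$; since $g$ is non-decreasing, $\gamma_{i'}$ is non-increasing on $[r_{i'},\infty)$, hence $\gamma_{i'}(t)\le 0$ for every $t\ge C_{i'}$.

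Now fix $i<j$ and let $i^*$ be the smallest index in $\{i, i+1, \dots, j-1\}$ whose $\lambda_{i^*}$ was boosted (if any such index exists). Then each $\lambda_{i'}$ with $i\le i'\le i^*-1$ was set by Line~4, so $\gamma_{i'}(t)\le\gamma_{i'+1}(t)$ for all $t\ge C_{i'}$. Evaluating at $t=C_{j-1}$ (valid since $j-1\ge i'$ implies $C_{j-1}\ge C_{i'}$) and chaining yields
\[
\gamma_i(C_{j-1})\le \gamma_{i+1}(C_{j-1})\le \cdots \le \gamma_{i^*}(C_{j-1})\le 0\le \lambda_j,
\]
where the penultimate inequality invokes the second structural fact. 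If no index in $\{i,\dots,j-1\}$ was boosted, the chain extends all the way to $\gamma_{j-1}(C_{j-1})$, and a final application of Line~4 for $\lambda_{j-1}$ against $\gamma_j$ gives $\gamma_{j-1}(C_{j-1})\le\gamma_j(C_{j-1})\le\lambda_j$.

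The main obstacle I foresee is ensuring that the chain inequalities produced by Line~4 at earlier indices remain valid as the procedure later processes smaller indices. This is in fact automatic because Procedure~\ref{algo:general-equaldensity} iterates in \emph{decreasing} order of $j$: once $\lambda_{i'+1}$ is finalized, it is not touched again during the current call, so the inequality $\gamma_{i'}\le\gamma_{i'+1}$ on $[C_{i'},\infty)$ holds with respect to the final values of both curves. A secondary subtlety is verifying that $C_{j-1}$ lies in the validity interval of each chain link, which reduces to $C_{j-1}\ge C_{i'}$ for $i'\le j-1$ and follows from the sorted completion-time ordering. Proposition~\ref{lemma:general-equaldensity} could in principle be used to convert pointwise dominance into monotonicity of the $\lambda$'s, but the chain argument above bypasses the need for it.
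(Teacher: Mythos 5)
Your argument for $i \le j$ is correct and is essentially the paper's own Case~1, just written as an explicit chain rather than a downward induction: follow the Line-4 dominance inequalities $\gamma_{i'}(C_{j-1}) \le \gamma_{i'+1}(C_{j-1})$ up to the first boosted index, whose curve is non-positive beyond its own completion time, and otherwise chain all the way to $\gamma_j(C_{j-1}) \le \lambda_j$. The facts you isolate ($r_j \le C_{j-1}$, monotonicity of the curves, $\lambda_j \ge 0$) are used the same way in the paper.

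The gap is the case $i > j$. You dismiss it on the grounds that $\gamma_i(C_{j-1})$ lies outside the domain of $\gamma_i$ \emph{when} $r_i > C_{j-1}$, but under FIFO a job $i$ with $i > j$ is merely released no earlier than $r_j$; it can perfectly well have $r_i \le C_{j-1}$ (any job that queues up behind $j$ while jobs $1,\dots,j-1$ are still running does). These are exactly the ``other pending jobs'' that property $(\mathcal{Q}1)$ quantifies over while $j$ runs on $[C_{j-1},C_j]$, so this sub-case cannot be skipped, and it does not follow from your chain: the Line-4 inequalities bound $\gamma_{i'}$ from above by $\gamma_{i'+1}$, i.e., they propagate in the wrong direction for $i > j$. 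The paper handles it by proving the stronger statement $\lambda_j \ge \lambda_{j+1} \ge \dots \ge \lambda_i$ (which suffices since $\lambda_i \ge \gamma_i(t)$ for all $t$): for each consecutive pair, either the maximality in Line~4 forces $\gamma_{i'}$ and $\gamma_{i'+1}$ to touch at some $t_0$, or the firing of Line~6 forces $\gamma_{i'}$ to lie above $\gamma_{i'+1}$ at some point; in either case Proposition~\ref{lemma:general-equaldensity} (using $r_{i'} \le r_{i'+1}$, guaranteed by FIFO) gives $\lambda_{i'} \ge \lambda_{i'+1}$. So the proposition you propose to bypass is in fact the key tool for the half of the lemma your proof omits.
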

\begin{proof}
Consider the jobs in $R_\tau$ in increasing order of completion times.
We will prove the lemma by considering two cases: for jobs $i < j$, and for jobs $i > j$ (according to the above ordering).
In both cases we apply induction on $i$.

\medskip
\noindent
{\em Case 1: $i \leq j$.} \ For base case $i=j$, the inequality becomes $\lambda_j \geq \gamma_j(C_{j-1})$
which trivially holds since $\lambda_{j} \geq \gamma_{j}(t)$ for all $t$.

For the induction step, suppose that the lemma is true for a job $i < j$, that is
$\lambda_j \geq \gamma_i(C_{j-1})$.
Recall that job $i-1 \in R_\tau$ is the last job that is completed
before $i$. If $\lambda_{i-1}$ had not been set in line~6 of
Procedure~\ref{algo:general-equaldensity} then it has been set in line~4 of the procedure; thus,
from the induction hypothesis,
$\gamma_{i-1}(C_{i-1}) \leq \gamma_i(C_{i-1}) \leq \lambda_j$. On the other hand, if
$\lambda_{i-1}$ is set in line~6 of
Procedure~\ref{algo:general-equaldensity} then
$\gamma_{i-1}(C_{i-1})=0$. Hence $\lambda_j > 0 =
\gamma_{i-1}(C_{i-1}) \geq \gamma_{i-1}(C_{j-1})$
where the last inequality is due to the fact that $\gamma_{i-1}$ is a
non-increasing function.

\medskip
\noindent
{\em Case 2: $i \geq j$.} \
We will show the stronger claim that $\lambda_{j} \geq \lambda_{i}$ for every $i \geq j$.
This suffices since $\lambda_{i} \geq \gamma_{i}(t)$ for every $t$.
The base case in the induction, namely the case $i = j$ is trivial.

Assume that the claim is true for a job $i > j$, that is
$\lambda_j \geq \lambda_{i}$.
We need to show that $\lambda_j \geq \gamma_{i+1}(C_{j-1})$.
If $\lambda_i$ is not set in line~6 of
Procedure~\ref{algo:general-equaldensity} (and thus set in line~4 instead), then there is time
$t_0$ such that $\gamma_i(t_0) = \gamma_{i+1}(t_0)$ by the
maximality in the choice $\lambda_j$.
On the other hand, if $\lambda_i$ is set in line~6
of Procedure~\ref{algo:general-equaldensity} then there is at
least one time $t_1$, with $C_{j} \leq t_{1} \leq t_{0}$ such that
$\gamma_i(t_1) \geq \gamma_{i+1}(t_1)$.
%Note that if $g$ is continuous then there exists $t_{1}$ such that $\gamma_i(t_1) = \gamma_{i+1}(t_1)$.
In both cases, from Proposition~\ref{lemma:general-equaldensity}
we deduce that $\lambda_i \geq \lambda_{i+1}$. Moreover, $\lambda_{i} \leq \lambda_{j}$
by the induction hypothesis. The claim follows.
\end{proof}

\begin{lemma} \label{lemma:properties.general.equal}
Procedure~\ref{algo:general-equaldensity} satisfies properties $(\mathcal{Q}1)$, $(\mathcal{Q}2)$ and $(\mathcal{Q}3)$.
\end{lemma}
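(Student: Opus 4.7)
The plan is to use the preceding Lemma~\ref{lemma:qproperties} as the key tool. Recall that it gives, for every pair $i, j \in R_\tau$, the bound $\lambda_j \geq \gamma_i(C_{j-1})$; moreover, its Case~2 actually proves the stronger statement $\lambda_j \geq \lambda_i$ whenever $i \geq j$ in the FIFO ordering, and hence $\lambda_j \geq \gamma_i(t)$ for \emph{every} $t$ in that case. It is this Case~2 bound that will do most of the work.

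For $(\mathcal{Q}1)$ and $(\mathcal{Q}2)$, I first observe that under FIFO with equal-density jobs there is no preemption at an arrival: the newly released $z$ is appended to the end of the current queue, so the primal scheduling decisions on $[0,\tau)$ are unchanged by the release of $z$, which already gives the last clause of $(\mathcal{Q}2)$. With equal densities the FIFO orderings by release time and by completion time coincide, so if job $j$ is being executed at a time $t$ (whether $t<\tau$ or $t\geq \tau$) then $t \in [C_{j-1}, C_j]$, and any other pending job $j'$ at $t$ must satisfy $C_{j'}>t\geq C_{j-1}$, i.e., $j'>j$ in the FIFO order. Applying Lemma~\ref{lemma:qproperties} (Case~2) to this pair yields $\lambda_j \geq \lambda_{j'} \geq \gamma_{j'}(t)$, which is the dominance part of both $(\mathcal{Q}1)$ and $(\mathcal{Q}2)$. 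The nonnegativity part $\gamma_j(t)\geq 0$ then follows from the fact that Procedure~\ref{algo:general-equaldensity} explicitly enforces $\gamma_j(C_j)\geq 0$ (either the value produced in line~4 is already nonnegative, or the reset in lines~5--7 fires and pins it to zero), together with the monotonicity of $\gamma_j$ in $t$ and $t\leq C_j$.

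For $(\mathcal{Q}3)$ I would argue by downward induction on $j$ from $k$ down to $1$, with the claim that $\gamma_j(t)\leq 0$ for every $t \geq C_{\max}^\tau$. The base case $j=k$ is immediate: line~2 sets $\gamma_k(C_k)=0$, and since $C_k=C_{\max}^\tau$ and $\gamma_k$ is non-increasing, $\gamma_k(t)\leq 0$ for $t\geq C_{\max}^\tau$. For the inductive step, I split according to which branch of Procedure~\ref{algo:general-equaldensity} set $\lambda_j$: if line~4 was used, then the chosen $\lambda_j$ satisfies $\gamma_j(t)\leq \gamma_{j+1}(t)$ for all $t\geq C_j$, and the inductive hypothesis (together with $C_{\max}^\tau \geq C_j$) yields $\gamma_j(t)\leq 0$ on $[C_{\max}^\tau,\infty)$; if line~6 was triggered, then $\gamma_j(C_j)=0$ and monotonicity alone gives $\gamma_j(t)\leq 0$ for $t\geq C_j$. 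Taking the maximum over $j\in R_\tau$ with $r_j\leq t$ (and over the zero floor) yields $\gamma(t)=0$ on $(C_{\max}^\tau,\infty)$, which is $(\mathcal{Q}3)$.

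The main obstacle I anticipate spending care on is the case analysis around the reset in lines~5--7 of the procedure: after such a reset the inequality $\gamma_j(t)\leq \gamma_{j+1}(t)$ relied upon in line~4 can fail, so the induction for $(\mathcal{Q}3)$ cannot simply chain those comparisons across all $j$. The remedy, as noted above, is that a reset pins $\gamma_j(C_j)$ to exactly zero and monotonicity of $\gamma_j$ on $[C_j,\infty)$ then suffices on its own, giving a uniform argument in both branches. A secondary check is that no job $j'$ pending during the execution of $j$ can end up ordered before $j$ in the FIFO schedule; this is immediate because FIFO appends to the queue at each arrival, so $j'>j$ and Lemma~\ref{lemma:qproperties} Case~2 always applies.
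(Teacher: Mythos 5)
Your proof is correct for the lemma as literally stated, and it follows essentially the same route as the paper: Lemma~\ref{lemma:qproperties} supplies the dominance inequality, line~6 of Procedure~\ref{algo:general-equaldensity} together with the monotonicity of $\gamma_j$ gives $\gamma_j(t)\geq 0$ on $[0,C_j]$, and the line-4/line-6 case split plus monotonicity gives $(\mathcal{Q}3)$; your explicit handling of the reset branch in the $(\mathcal{Q}3)$ induction is in fact slightly more careful than the paper's.

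One caveat, though. You deliberately restrict to pending jobs $j'$ and declare Case~2 of Lemma~\ref{lemma:qproperties} to be ``most of the work,'' treating Case~1 as unnecessary. The literal wording of $(\mathcal{Q}1)$ and $(\mathcal{Q}2)$ indeed only quantifies over pending jobs, so your argument proves the stated properties. But the paper's proof of this lemma goes out of its way to establish the stronger fact $\lambda_j \geq \gamma(t)=\max_{i\in\mathcal{J}}\gamma_i(t)$ at $t=C_{j-1}$ --- a maximum over \emph{all} released jobs, including those already completed --- and Case~1 of Lemma~\ref{lemma:qproperties} exists precisely to cover completed jobs $i<j$ whose curves may still be positive at time $t$. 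That stronger bound is what is actually consumed in the proof of Theorem~\ref{thm:framework2} (via Theorem~\ref{thm:framework1}), where the dual term $\int\gamma(t)\,dt$ over $j$'s execution interval is bounded by $\lambda_j p_j$ using $\gamma(t)\leq\lambda_j$ with $\gamma(t)$ the upper envelope over all of $\mathcal{J}$. So if you keep only the pending-job argument, the lemma you prove is no longer strong enough to drive the theorem; you should retain Case~1 (i.e., $\lambda_j\geq\gamma_{i}(C_{j-1})$ for completed $i\leq j$, extended to all $t\geq C_{j-1}$ by monotonicity of $\gamma_i$) alongside your Case~2 argument.
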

\begin{proof}
From Lemma~\ref{lemma:qproperties} it follows that properties $(\mathcal{Q}1)$ and $(\mathcal{Q}2)$ are satisfied.
More precisely, observe that by line 6 of Procedure~\ref{algo:general-equaldensity} we have
$\gamma_j(t)\geq 0$ for any $t \leq C_j$. It remains to show that for any
time $t$ at which a job $j$ is executed, we have
$\lambda_j \geq \gamma(t) = \max_{i \in \mathcal{J}}\gamma_i(t)$. Since $g$
is non-decreasing, it follows that $\gamma_j(t)$ is non-increasing, hence it
suffices to show the above only for $t=C_{j-1}$; this is indeed established by Lemma~\ref{lemma:qproperties}.

Moreover, concerning property $(\mathcal{Q}3)$, we observe that for all jobs $j \in R_\tau$ for which
$\lambda_j$ is set in line~6 of Procedure~\ref{algo:general-equaldensity}, the property follows straightforwardly.
For all remaining jobs in $R_\tau$, the property holds using the same arguments as in Lemma~\ref{lemma:completionlinear}
for the minimum flow time problem.
\end{proof}

The above lemma in conjunction with Theorem~\ref{thm:framework2} lead to the following result.

\begin{theorem} \label{thm:equal}
FIFO is a $(1+\epsilon)$-speed $\frac{1+\epsilon}{\epsilon}$-competitive for integral GFP with general cost functions and equal-density jobs.
\end{theorem}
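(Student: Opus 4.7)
The plan is to derive Theorem~\ref{thm:equal} as an immediate application of Theorem~\ref{thm:framework2} with $A \equiv \text{FIFO}$ and the dual assignment produced by Procedure~\ref{algo:general-equaldensity}. The first step is to argue that this combination makes sense in the class $\mathcal{A}$: since all jobs share the common density $\delta$, FIFO keeps the relative order of any two pending jobs fixed over time, so at each release time $\tau$ we may regard FIFO as choosing an ordering of $P_\tau$ by release time, and executing jobs of $R_\tau$ in that order. In particular, the completion-time ordering $C_1 < C_2 < \cdots < C_k$ along which Procedure~\ref{algo:general-equaldensity} sweeps is precisely the FIFO order. Hence the procedure prescribes a well-defined dual candidate for each $\tau$.

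Next I would check feasibility of the dual. Both $\lambda_j \geq 0$ and $\gamma(t) \geq 0$ hold by construction: the procedure only assigns $\lambda_j$ so that either $\gamma_j(C_j) = 0$ or $\gamma_j(C_j) = \gamma_{j+1}(C_j) \geq 0$ (and $\lambda_j$ never decreases), while $\gamma(t) = \max\{0, \max_{j : r_j \leq t} \gamma_j(t)\}$ is nonnegative by definition. The dual constraint $\lambda_j - \gamma(t) \leq \delta \, g(t - r_j)$ rewrites as $\gamma_j(t) \leq \gamma(t)$, which is immediate from this definition of $\gamma(t)$.

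The heart of the proof has already been done in Lemma~\ref{lemma:properties.general.equal}, which certifies properties $(\mathcal{Q}1)$, $(\mathcal{Q}2)$ and $(\mathcal{Q}3)$ for this choice of $\lambda_j$'s. Plugging this feasible dual into Theorem~\ref{thm:framework2} yields the $(1+\epsilon)$-speed $\frac{1+\epsilon}{\epsilon}$-competitive guarantee for integral GFP with equal-density jobs and a general non-decreasing $g$, as claimed.

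There is no real obstacle at this last stage; the technical heavy lifting happens in Lemma~\ref{lemma:qproperties}, which compares $\lambda_j$ against the $\gamma_i(C_{j-1})$ values by a double induction on the position of $i$ relative to $j$ in the completion-time order. That lemma is precisely what allowed us to relax $(\mathcal{P}\cdot)$ to $(\mathcal{Q}\cdot)$: for a later-completing job $i > j$ one does not obtain pointwise dominance of $\gamma_j$ over $\gamma_i$, but only the weaker scalar inequality $\lambda_j \geq \lambda_i \geq \gamma_i(t)$ needed by $(\mathcal{Q}1)$--$(\mathcal{Q}2)$. This is also the reason we can claim only integral competitiveness here (rather than fractional optimality as in the convex/concave cases of Section~\ref{subsec:convex.same}).
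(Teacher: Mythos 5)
Your proposal matches the paper's proof: Theorem~\ref{thm:equal} is obtained exactly by combining Lemma~\ref{lemma:properties.general.equal} (which certifies $(\mathcal{Q}1)$--$(\mathcal{Q}3)$ for Procedure~\ref{algo:general-equaldensity} via Lemma~\ref{lemma:qproperties}) with Theorem~\ref{thm:framework2}. The extra checks you add---that FIFO's completion-time order is the release order so the procedure is well defined, and that the resulting dual is feasible---are implicit in the paper's argument and are correct.
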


\subsection{Online GFP with concave cost functions}
\label{subsec:concave}

We will analyze the HDF algorithm using dual fitting.
As in Section~\ref{section:flow}, we will employ two procedures for maintaining the dual variables $\lambda_j$.
The first one is Procedure~\ref{algo:energy} which updates the $\lambda_j$'s for $j \in P_\tau$.
The second procedure updates the $\lambda_j$'s for $j \in {\cal J}_\tau$;
this procedure is identical to Procedure~\ref{algo:linear2} of Section~\ref{section:flow}.

The intuition behind Procedure~\ref{algo:energy} is to ensure property $({\cal Q}1)$ that is,
$\gamma_j(t) \geq 0$ and $\lambda_j \geq \gamma_{j'}(t)$ for all $j' \in P_\tau$, which in some sense is the ``hard'' property to maintain.
Specifically, for given job $j$ there is a set of jobs $A$ (initialized in line~4) for which the property does not hold.
The while loop in the procedure decreases the $\lambda$ values of jobs in $A$ so as to rectify this situation (see line~6(ii)).
However, this decrement may, in turn, invalidate this property for some jobs $b$ (see line~6(i)).
These jobs are then added in the set of ``problematic'' jobs $A$ and we continue until no problematic jobs are left.
One can formally argue that this procedure terminates.

\begin{algorithm}[htp]
\begin{algorithmic}[1]
\STATE{Consider the jobs in $P_\tau$ in increasing order of completion times if no new jobs are released after time $\tau$, i.e., $C_1<C_2<\ldots <C_k$} %\ ($k=|P_\tau|$)}
\STATE For every $1 \leq j \leq k$ choose $\lambda_{j}$ such that $\gamma_{j}(C_{k}) = 0$
\FOR{$j = 2$ to $k$}
\STATE{Define $A := \{\text{jobs }1 \leq a \leq j-1 : \gamma_{a}(C_{j-1}) > \lambda_{j}\}$ }
    \WHILE{$A \neq \emptyset$}
        \STATE Continuously reduce $\lambda_{a}$ by the same amount for all jobs $a \in A$ until: \\% one of the following happens: \\
        (i) $\exists$  $a \in A$ and $b \in P_\tau \setminus A$ with $b<a$ such that
        $\lambda_{a} = \gamma_{b}(C_{a-1})$;
        then
        $A \gets A \cup \{b\}$  \\
        (ii) $\exists$  $a \in A$ such that $\gamma_{a}(C_{j-1}) = \lambda_{j}$;
        then $A \leftarrow A \setminus \{a\} $
    \ENDWHILE
\ENDFOR
\end{algorithmic}
\caption{Assignment of dual variables $\lambda_{j}$ for all $j \in P_\tau$ at the arrival of a new job.}
\label{algo:energy}
\end{algorithm}

We will first need to show the following technical lemma:

\begin{lemma} \label{lem:concave-density-high}
Let $a, b \in {\cal J}$ such that $\delta_{a} \geq \delta_{b}$ and suppose that there is a time $t_0 \geq \max\{r_{a},r_{b}\}$
such that $\gamma_{a}(t_{0}) \geq \gamma_{b}(t_{0})$.
Then $\lambda_{a} \geq \gamma_{b}(t)$ for every $t \geq \max\{r_{a},r_{b}\}$.
\end{lemma}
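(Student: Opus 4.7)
The plan is to reduce the claim to a single inequality and then dispatch it via two cases, using concavity crucially only in one of them. Since $g$ is non-decreasing, the curve $\gamma_b(t) = \lambda_b - \delta_b g(t-r_b)$ is non-increasing on $[r_b,\infty)$. Hence the supremum of $\gamma_b(t)$ over $t \geq \max\{r_a,r_b\}$ is attained at $t^\star := \max\{r_a,r_b\}$, and it suffices to show
\begin{equation*}
\lambda_a \geq \gamma_b(t^\star).
\end{equation*}
From the hypothesis $\gamma_a(t_0) \geq \gamma_b(t_0)$ I get the lower bound
\begin{equation*}
\lambda_a \geq \lambda_b - \delta_b g(t_0 - r_b) + \delta_a g(t_0 - r_a),
\end{equation*}
so it is enough to prove $\delta_a g(t_0 - r_a) \geq \delta_b\bigl[g(t_0 - r_b) - g(t^\star - r_b)\bigr]$.

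In the easy case $r_b \geq r_a$, we have $t^\star = r_b$ and $g(t^\star - r_b)=g(0)=0$. Then the required inequality becomes $\delta_a g(t_0-r_a) \geq \delta_b g(t_0-r_b)$, which follows immediately from $\delta_a \geq \delta_b$ combined with $g(t_0-r_a) \geq g(t_0-r_b)$ (monotonicity of $g$ and $r_a \leq r_b$). In the harder case $r_a \geq r_b$, we have $t^\star = r_a$ and the required inequality becomes
\begin{equation*}
\delta_a g(t_0 - r_a) \geq \delta_b \bigl[g(t_0-r_b) - g(r_a-r_b)\bigr].
\end{equation*}
Here I would invoke subadditivity of $g$: since $g$ is concave with $g(0)=0$, $g$ is subadditive, so $g(t_0-r_b) = g((t_0-r_a)+(r_a-r_b)) \leq g(t_0-r_a) + g(r_a-r_b)$. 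Rearranging yields $g(t_0-r_b) - g(r_a-r_b) \leq g(t_0-r_a)$, and together with $\delta_a \geq \delta_b$ the claim follows.

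The main (and essentially the only) obstacle is isolating the case $r_a \geq r_b$ and realizing that plain monotonicity is not enough there; one must use the fact that concavity of $g$ together with $g(0)=0$ yields subadditivity, which is precisely what transforms the increment $g(t_0-r_b)-g(r_a-r_b)$ into something bounded by $g(t_0-r_a)$. The rest is a direct algebraic manipulation of the definition of $\gamma_j(\cdot)$ and the non-increasing character of the curves.
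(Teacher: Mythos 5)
Your proof is correct and follows essentially the same route as the paper's: both reduce the claim, via monotonicity of $\gamma_b$, to the inequality at $t^\star=\max\{r_a,r_b\}$, handle the case $r_a\leq r_b$ by plain monotonicity of $g$ together with $\delta_a\geq\delta_b$, and dispose of the case $r_a>r_b$ by invoking subadditivity of $g$ (from concavity and $g(0)=0$) to bound $g(t_0-r_b)-g(r_a-r_b)$ by $g(t_0-r_a)$. No gaps.
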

\begin{proof}
By assumption, we have
$
\lambda_{a} - \delta_{a}g(t_{0}-r_{a}) \geq \lambda_{b} - \delta_{b}g(t_{0}-r_{b})
$.
If $r_{a} \leq r_{b}$ then $g(t_0-r_{a}) \geq g(t_0-r_{b})$ so $\delta_{a}g(t_0-r_{a}) \geq \delta_{b}g(t_0-r_{b})$.
Hence $\lambda_{a} \geq \lambda_{b} \geq \gamma_{b}(t)$ for
every $t$. Remains then to consider the case $r_{a} > r_{b}$. Since
$\gamma_{b}(t)$ is non-increasing, suffices to prove that
$\lambda_{a} \geq \gamma_{b}(r_{a})$. Since
$\lambda_{a} \geq \lambda_{b} - \delta_{b}g(t_0-r_{b}) + \delta_{a}g(t_0-r_{a})$, it will suffice to show that
\begin{align*}
\lambda_{b} - \delta_{b}g(t_0-r_{b}) + \delta_{a}g(t_0-r_{a}) &\geq \lambda_{b} - \delta_{b} g(r_{a}-r_{b}) \\
\Leftrightarrow \qquad \delta_{a} g(t_0-r_{a}) &\geq \delta_{b}( g(t_0 - r_{b}) - g(r_{a} - r_{b}))
\end{align*}
Note that $g$ is concave and $g(0) = 0$ so $g$ is sub-additive (sub-linear). Therefore,
the right-hand side is upper bounded by $\delta_{b} g(t_0-r_{a})$, which is at most
the left-hand side.
\end{proof}

The following lemma is related to Procedure~\ref{algo:energy}, which rectifies property $(\mathcal{Q}1)$ iteratively.
It is not hard to see that this procedure terminates, since whenever a job $a$ is added to the set $A$, then $\lambda_a$ is decreased.
Note that a job may be added and removed several times however, $\lambda_a$ cannot be smaller than zero.

\begin{lemma}
\label{lemma:concave.cases}
At the end of the for-loop for job $j$ (line~3) of Procedure~\ref{algo:energy}:\\
(i) For any two jobs $a,b \in P_\tau$ such that $1 \leq a < b \leq j$ (i.e, $\delta_{a} \geq \delta_{b}$),
    we have $\lambda_{a} \geq \gamma_{b}(t)$ for all $t \in [C_{a-1},C_{a}]$; and\\
(ii) For any two jobs $a,b \in P_\tau$ such that $1 \leq b < a \leq j$ (i.e,  $\delta_{a} \leq \delta_{b}$),
     we have $\lambda_{a} \geq \gamma_{b}(t)$ for all  $t \in [C_{a-1},C_{a}]$.
\end{lemma}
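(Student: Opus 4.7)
The plan is to prove both statements simultaneously by strong induction on the outer-loop index $j$. The base case $j=1$ is vacuous (no eligible pairs exist). For the inductive step, assuming both parts hold at the end of iteration $j-1$, I will verify them at the end of iteration $j$ by partitioning the pairs $(a,b) \in \{1,\ldots,j\}^2$, $a \neq b$, into four groups according to whether $\max\{a,b\}$ equals $j$ or is at most $j-1$, and on the relative order of $a$ and $b$.

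The two ``new'' sub-cases involve the index $j$. For case~(ii) with $a = j$ and $b < j$, the claim $\lambda_j \geq \gamma_b(C_{j-1})$ is an immediate consequence of the while-loop's termination condition ($A = \emptyset$), and since $\gamma_b$ is non-increasing the inequality extends to all $t \in [C_{j-1}, C_j]$. For case~(i) with $a < b = j$, my plan is to invoke Lemma~\ref{lem:concave-density-high}, so it suffices to exhibit a witness time $t_0 \geq \max(r_a, r_j)$ with $\gamma_a(t_0) \geq \gamma_j(t_0)$. If $a$ ever entered $A$ during iteration $j$, then at the end $\gamma_a(C_{j-1}) = \lambda_j \geq \gamma_j(C_{j-1})$, so $t_0 = C_{j-1}$ suffices. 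Otherwise $\lambda_a$ is untouched in iteration $j$: I will let $j^*$ be the most recent strictly earlier iteration that modified $\lambda_a$ (falling back to the line-2 initialization if no such $j^*$ exists), note that at the end of iteration $j^*$ one has $\gamma_a(C_{j^*-1}) = \lambda_{j^*}$ with $\lambda_{j^*}$ still at its initial value $\delta_{j^*} g(C_k - r_{j^*})$, and then invoke the sub-additivity of $g$ (concave with $g(0)=0$) together with $\delta_{j^*} \geq \delta_j$ and $r_{j^*} \leq C_{j^*-1}$ to deduce $\gamma_a(C_{j^*-1}) \geq \gamma_j(C_{j^*-1})$; since neither $\lambda_a$ nor $\lambda_j$ changes from the end of iteration $j^*$ through the end of iteration $j$, the witness $t_0 = C_{j^*-1}$ remains valid.

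The two ``preservation'' sub-cases concern pairs entirely within $\{1,\ldots,j-1\}$. For case~(ii) with $b < a \leq j-1$, I will argue that condition~(i) of the while-loop is precisely the mechanism that maintains $\lambda_a \geq \gamma_b(C_{a-1})$ throughout iteration $j$: whenever $\lambda_a$ would drop to $\gamma_b(C_{a-1})$ with $b \notin A$, the procedure adds $b$ to $A$, so $\lambda_b$ thereafter decreases at the same rate as $\lambda_a$ and the equality is preserved; all other configurations (only $\lambda_b$ shrinks, or both are in $A$ together) merely strengthen or preserve the inequality. For case~(i) with $a < b \leq j-1$, I will split according to which of $a,b$ ever entered $A$ during iteration $j$: if neither did, the IH transfers directly; if only $b$ did, then $\lambda_a$ is unchanged while $\gamma_b$ has only decreased; and if $a$ did (alone or together with $b$), then at the end $\gamma_a(C_{j-1}) = \lambda_j \geq \gamma_b(C_{j-1})$, so Lemma~\ref{lem:concave-density-high} applies with $t_0 = C_{j-1}$.

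The main obstacle I anticipate is the sub-case of case~(i) with $b = j$ in which $\lambda_a$ was modified in some strictly earlier iteration but not in iteration $j$, since the IH alone does not furnish a direct comparison between $\gamma_a$ and $\gamma_j$; the argument must reach back to the last modifying iteration and rely on the sub-additivity of $g$. A secondary subtlety is justifying that a job once removed from $A$ via condition~(ii) cannot be re-added during the same outer iteration (because $\gamma_a(C_{j-1})$ can only decrease after removal and thus can never again equal $\lambda_j$, so if $a$ were re-added it would remain stuck in $A$, contradicting termination); this underpins the key dichotomy that whenever $a$ ever touched $A$ one has $\gamma_a(C_{j-1}) = \lambda_j$ at the end.
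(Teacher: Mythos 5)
Your proof is correct and follows essentially the same route as the paper's: induction on the outer-loop index $j$, with Lemma~\ref{lem:concave-density-high} invoked at the witness time $t_{0}=C_{j-1}$ for part~(i), and the while-loop's add/remove conditions in line~6 doing the work for part~(ii) and for the pair $(a,j)$. The one place where you go beyond the paper is the sub-case of (i) with $b=j$ and $\lambda_{a}$ untouched during iteration $j$: the paper disposes of it by appealing to the induction hypothesis (which, as literally stated, does not cover the pair $(a,j)$; it is implicitly maintaining the claim for \emph{all} pairs, seeded by the line-2 initialization where every $\gamma_{j}(C_{k})=0$), whereas your Case~C reaches back to the last modifying iteration $j^{*}$ and rebuilds the witness via sub-additivity of $g$ --- a legitimate, slightly more explicit patch of the same argument.
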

\begin{proof}~\\
\noindent
{\em (i)} \ Proof by induction on $j$.
For the base case ($j=1$) note that in line~2 of the procedure, $\gamma_{j}(C_k) = 0$ for every $j \in P_\tau$;
hence from Lemma \ref{lem:concave-density-high} the base case is satisfied.
Suppose that the claim holds at the end of the for-loop for job $j-1$; we will call this for-loop the {\em iteration} for job $j-1$.
Consider the iteration of job $j$. Suppose that during this iteration, $\lambda_{a}$ has decreased more than $\lambda_{b}$ has,
since otherwise the claim follows directly from the induction hypothesis.
This implies that at some moment during the execution of the while loop,
$a \in A$ and $b \notin A$. At that moment, we deduce that $\gamma_{b}(C_{j-1}) \leq \lambda_{j} < \gamma_{a}(C_{j-1})$.
However, by line~6(ii), $\lambda_{a}$ stops decreasing at the point in which $\lambda_{j} = \gamma_{a}(C_{j-1})$.
Therefore, at the end of the iteration of job $j$, we have $\gamma_{b}(C_{j-1}) \leq \gamma_{a}(C_{j-1})$.
Applying Lemma \ref{lem:concave-density-high} by choosing $t_{0} = C_{j-1}$,
it holds that $\lambda_{a} \geq \gamma_{b}(t)$ for $t \in [C_{a-1},C_{a}]$.
\medskip

\noindent
{\em (ii)} The proof is again by induction on $j$.
The base case, $j=1$, holds since no other jobs in $P_\tau$ have higher density than job 1.
Assume that the statement holds at the end of the for-loop for job $j-1$.
During the iteration of the for-loop for job $j$, a set $A$ contains
jobs $a < j$ such that $\lambda_{j} < \gamma_{a}(C_{j-1})$. By the procedure,
$\lambda_{a}$ for every $a \in A$ is decreased until $\lambda_{j} = \gamma_{a}(C_{j-1})$.
So at the end of the iteration, $\lambda_{j} \geq \gamma_{a}(C_{j-1})$ for every $a < j$.
It remains to show that at the end of the iteration, $\lambda_{a} \geq \gamma_{b}(C_{a-1})$
for jobs $b < a < j$. Let $b < a < j$ be two arbitrary jobs. By the induction hypothesis,
$\lambda_{a} \geq \gamma_{b}(C_{a-1})$ holds before the iteration.
During the iteration, if $\lambda_{a}$ is not modified then the inequality remains true
(since $\lambda_{b}$ is not increased). Otherwise, $a$ must be added to $A$ at some moment.
As $\lambda_{a}$ is decreased so probably at some later moment $\lambda_{a} = \gamma_{b}(C_{a-1})$.
However, job $b$ will be added to $A$ and the $\lambda$-values of both jobs
will be decreased by the same amount. Therefore, at the end of the iteration
$\lambda_{a} \geq \gamma_{b}(C_{a-1})$ (the induction step is done).
\end{proof}

\begin{lemma} \label{lem:concave-density-small}
Procedures~\ref{algo:energy} and~\ref{algo:linear2} combined satisfy properties $(\mathcal{Q}1)$,$(\mathcal{Q}2)$ and $(\mathcal{Q}3)$.
\end{lemma}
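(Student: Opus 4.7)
The plan is to verify the three properties in turn, leveraging Lemma~\ref{lemma:concave.cases} together with the structural properties of Procedures~\ref{algo:energy} and~\ref{algo:linear2}, broadly mirroring the argument for the linear case from Section~\ref{section:flow}.

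For property $(\mathcal{Q}1)$, fix a time $t \geq \tau$ at which $A \equiv \text{HDF}$ executes some job $j$; by the class-$\mathcal{A}$ structure, $t$ lies in $[C_{j-1},C_j]$ for the projected completion times computed at $\tau$. The dominance component $\lambda_j \geq \gamma_{j'}(t)$ for every pending $j' \in P_\tau$ is immediate from Lemma~\ref{lemma:concave.cases}: part~(i) handles $j'>j$ (smaller density) and part~(ii) handles $j'<j$ (larger density). For the non-negativity $\gamma_j(t) \geq 0$, since $\gamma_j$ is non-increasing it suffices to establish $\gamma_j(C_j) \geq 0$. Initially Procedure~\ref{algo:energy} sets $\gamma_j(C_k)=0$, which already yields $\gamma_j(C_j) \geq 0$ because $C_j \leq C_k$ and $g$ is non-decreasing. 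Whenever $\lambda_j$ is later reduced in the while loop of the iteration for some $j'>j$, the reduction stops (by rule~6(ii)) as soon as $\gamma_j(C_{j'-1}) = \lambda_{j'}$; combined with $C_j \leq C_{j'-1}$ this gives $\gamma_j(C_j) \geq \lambda_{j'}$, and an inductive invariant that $\lambda_{j'} \geq 0$ throughout all iterations then yields $\gamma_j(C_j) \geq 0$ at termination.

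For property $(\mathcal{Q}2)$ I transplant the partitioning argument from the proof of Lemma~\ref{lemma:consistencylinear}. First, the primal side is immediate: the two procedures modify only dual variables, so past scheduling decisions are preserved. For the dual side, I redefine $\Delta_j$ in Procedure~\ref{algo:linear2} to be the net change of $\lambda_j$ induced by Procedure~\ref{algo:energy} on the representative $j \in P_\tau$. The concave analogues of Claims~\ref{claim:linear-change},~\ref{claim:linear-order} and~\ref{claim:linear-continuous} then hold under HDF, with concavity of $g$ and the HDF ordering playing the role of linearity in the sign-control of the $\Delta$'s. Since each critical set $S_j$ consists of a contiguous block of jobs executed before $\tau$, and every $\gamma_a$ for $a \in S_j$ is shifted by exactly $\Delta_j$, the relative ordering of the $\gamma_a$ curves inside a critical set is preserved automatically; between consecutive critical sets the analogue of Claim~\ref{claim:linear-change} ensures that dominance is not disturbed at the boundary. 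The required non-negativity at execution points is inherited from the corresponding non-negativity of the representative established in the argument for $(\mathcal{Q}1)$.

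For property $(\mathcal{Q}3)$, Procedure~\ref{algo:energy} only decreases $\lambda_j$ from its initialization at $\gamma_j(C_k)=0$, hence $\gamma_j(C_k) \leq 0$ for every $j \in P_\tau$; monotonicity of $g$ then extends this to $\gamma_j(t) \leq 0$ for all $t > C_k = C_{\max}^\tau$. Completed jobs $a \in \mathcal{J}_\tau$ inherit the same bound because Procedure~\ref{algo:linear2} shifts $\lambda_a$ by the same amount as its representative, so $\gamma_a(t) - \gamma_j(t)$ is invariant on $a \in S_j$, keeping $\gamma_a(t) \leq 0$ beyond $C_{\max}^\tau$; taking the maximum with $0$ in the definition of $\gamma(t)$ then yields $\gamma(t) = 0$ on $(C_{\max}^\tau,\infty)$.

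The main obstacle is the non-negativity $\gamma_j(C_j) \geq 0$ under Procedure~\ref{algo:energy}: the inner while loop can reduce $\lambda_j$ many times (each time $j$ re-enters some set $A$), and the values $\lambda_{j'}$ that bound $\lambda_j$ from below may themselves shrink in subsequent outer iterations. Framing a robust global invariant — \emph{after each outer iteration and each while-loop step, all $\lambda$-values and all $\gamma_a(C_a)$ remain non-negative} — together with a finiteness argument for the inner loop, is where the technical work concentrates.
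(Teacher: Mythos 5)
Your argument is correct and follows essentially the same route as the paper: property $(\mathcal{Q}1)$ is obtained exactly as in the paper from Lemma~\ref{lemma:concave.cases} plus the termination condition of line~6(ii) (with the non-negativity of the $\lambda$-values as the underlying invariant), while $(\mathcal{Q}2)$ and $(\mathcal{Q}3)$ are handled by transplanting the arguments of Lemmas~\ref{lemma:consistencylinear} and~\ref{lemma:completionlinear}, which is precisely what the paper does (in fact more tersely than you). Your closing remark correctly identifies where the remaining technical work lies; no substantive deviation from the paper's proof.
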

\begin{proof}
We will first show that at the end of Procedure~\ref{algo:energy} the dual solution satisfies property $(\mathcal{Q}1)$.
From Lemma~\ref{lemma:concave.cases}, given $j \in P_\tau$,
we have $\lambda_{j} \geq \gamma_{a}(t)$ for every job $a \in P_\tau$ and for every time $t \in [C_{j-1},C_{j}]$.
It remains to show that $\gamma_{j}(t) \geq 0$ for $t \in [C_{j-1},C_{j}]$.
Since $\gamma_j$ is non-increasing, suffices to show that $\gamma_{j}(C_{j}) \geq 0$.
After line~2 of the procedure, $\gamma_{j}(C_{j}) \geq \gamma_{j}(C_{k}) \geq 0$.
Note that subsequently $\lambda_{j}$ may be decreased; however, we will argue that $\gamma_{j}(C_{j}) \geq 0$.
Job $j$ may be added in $A$ in either line~2 or in line~6(i) of the procedure,
thus $\lambda_{j}$ may be decreased only if there is a job $j' > j$ such that $\lambda_{j'} < \gamma_{j}(C_{j'-1})$.
Since $C_{j'-1} \geq C_{j}$ and $\gamma_{j}(t)$ is a decreasing in $t$ we have that $\lambda_{j'}< \gamma_{j}(C_{j})$.
Moreover, $\lambda_{j}$ will continue decreasing until $\lambda_{j'} = \gamma_{j}(C_{j'-1})$ (line~6(ii)).
Note that $\lambda_{j}$ may be decreased in many iterations and thus the condition $\lambda_{j'} = \gamma_{j}(C_{j'-1})$ may hold for more than one job $j'$;
let $j^*$ denote the job of the last iteration for which $\lambda_{j^*} = \gamma_{j}(C_{j^*-1})$.
Since $\lambda_{j^*} \geq 0$ we obtain $\gamma_{j}(C_{j}) \geq \lambda_{j^*} \geq 0$.
Hence we showed that at the end of Procedure~\ref{algo:energy} the dual solution satisfies property $(\mathcal{Q}1)$.

Moreover, property $(\mathcal{Q}2)$ is a relaxed variant of $(\mathcal{P}2)$, therefore Procedure~\ref{algo:linear2} guarantees $(\mathcal{Q}2)$.
In addition, Procedures~\ref{algo:energy} and~\ref{algo:linear2} satisfy $(\mathcal{Q}3)$,
using very similar arguments as in the proof of Lemma~\ref{lemma:completionlinear}.
\end{proof}

The above lemma in conjunction with Theorem~\ref{thm:framework2} lead to the following result.

\begin{theorem} \label{thm:concave}
HDF is a $(1+\epsilon)$-speed $\frac{1+\epsilon}{\epsilon}$-competitive for integral GFP with concave cost functions.
\end{theorem}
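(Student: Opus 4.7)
The plan is to apply Theorem~\ref{thm:framework2} directly. I would instantiate the scheduling algorithm $A$ with HDF, and build a dual solution by running Procedure~\ref{algo:energy} upon each release to maintain the variables $\lambda_j$ for pending jobs $j \in P_\tau$, together with Procedure~\ref{algo:linear2} to update the $\lambda_j$'s of already-completed jobs in $\mathcal{J}_\tau$. As in the rest of the framework, $\gamma(t)$ is always set to $\max\{0,\max_{j : r_j \leq t}\gamma_j(t)\}$ with $\gamma_j(t)=\lambda_j-\delta_j g(t-r_j)$. This definition automatically gives $\gamma(t)\geq 0$ as well as the dual constraint $\lambda_j-\gamma(t)\leq \delta_j g(t-r_j)$, and the non-negativity of each $\lambda_j$ follows from $\gamma_j(C_j)\geq 0$, which is exactly the point argued at the end of the proof of Lemma~\ref{lem:concave-density-small}. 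Hence the constructed solution is dual-feasible.

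Next, I would invoke Lemma~\ref{lem:concave-density-small} to conclude that properties $(\mathcal{Q}1)$, $(\mathcal{Q}2)$ and $(\mathcal{Q}3)$ hold for this dual solution paired with the HDF schedule. Combined with dual feasibility, these are precisely the hypotheses of Theorem~\ref{thm:framework2}, which then yields the announced $(1+\epsilon)$-speed $\frac{1+\epsilon}{\epsilon}$-competitive ratio of HDF for integral GFP with concave cost functions.

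The genuine technical content has already been absorbed into Lemma~\ref{lem:concave-density-small} (and behind it, Lemma~\ref{lemma:concave.cases} and Lemma~\ref{lem:concave-density-high}, where sub-additivity of the concave function $g$ with $g(0)=0$ is crucial for converting a single pointwise dominance $\gamma_a(t_0)\geq \gamma_b(t_0)$ into a global bound $\lambda_a\geq \gamma_b(t)$). The only residual obstacle at the theorem level is purely bookkeeping: making sure that no step of either procedure can drive some $\lambda_j$ below zero or destroy the non-negativity of $\gamma_j(C_j)$, but this is precisely controlled by the stopping rule in line~6(ii) of Procedure~\ref{algo:energy} and the invariant $\lambda_{j^*}\geq 0$ used in the argument for $(\mathcal{Q}1)$. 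Consequently, the proof reduces to a one-line citation of Lemma~\ref{lem:concave-density-small} plus Theorem~\ref{thm:framework2}, with no additional case analysis needed beyond what has already been carried out in this subsection.
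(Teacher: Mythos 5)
Your proposal is correct and follows essentially the same route as the paper: the paper also obtains Theorem~\ref{thm:concave} by running HDF with Procedures~\ref{algo:energy} and~\ref{algo:linear2}, invoking Lemma~\ref{lem:concave-density-small} for properties $(\mathcal{Q}1)$--$(\mathcal{Q}3)$, and then applying Theorem~\ref{thm:framework2}. Your additional remarks on dual feasibility and on where the concavity (sub-additivity) is actually used are consistent with the supporting lemmas already proved in that subsection.
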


\section{Online JDGFP with differentiable concave cost functions}
\label{sec:jdgfp}

We consider the online JDGFP, assuming that for each job $j$, the
cost function $g_{j}$ is concave and differentiable. Instead of analyzing the fractional objectives and
rounding to integral ones (as in previous sections), we study directly the integral objective by considering
a non-convex relaxation.
%an approach that has been applied in \cite{Thang13:Lagrangian-Duality}.
%and when applying to the current problem, the analysis could be done similarly
%as in the fractional setting (with LP relaxation) but no need of rounding
%(hence no lost in the competitive ratio).
Let $x_{j}(t)$ be the variable indicating the execution rate of job $j$ at time $t$.
Let $C_{j}$ be the variable representing the completion time of job $j$. We have the
following non-convex relaxation:
\begin{align*}
\min \sum_{j \in \mathcal{J}} \delta_{j} \int_{r_j}^{C_{j}} g_{j}(C_{j}-r_j) & x_j(t) dt \nonumber\\
\int_{r_j}^{C_{j}} x_j(t) dt &= p_j & & \forall j \in \mathcal{J} \\
\sum_{j \in \mathcal{J}} x_j(t) &\leq 1 & & \forall t \geq 0 \\
x_j(t) &\geq 0 & & \forall j \in \mathcal{J}, t \geq 0
\end{align*}
%Note that the objective function $\sum_{j} w_{j} g_{j}(C_{j} - r_{j})$ could be rewritten
%as in the relaxation using the first constraint.
By associating the dual variables $\lambda_j$ and $\gamma(t)$ with the first and second constraints respectively,
we obtain the Lagrangian dual program \linebreak $\max_{\lambda, \gamma} \min_{x,C} L(\lambda,\gamma,x,C)$
where the Lagrangian function $L(\lambda,\gamma,x,C)$ is equal to\\
\begin{align*}
L(\lambda,\gamma,x,C)=  \sum_{j} \delta_{j} \int_{r_j}^{C_{j}} g_{j}(C_{j}-r_j) x_j(t) dt
 &+ \sum_{j} \lambda_{j} \biggl( p_{j} - \int_{r_j}^{C_{j}} x_j(t) dt \biggl) \\
 &+ \int_{0}^{\infty} \gamma(t) \biggl( \sum_{j} x_j(t) - 1\biggl) dt.
\end{align*}
Next, we describe the algorithm and its analysis, which is inspired by~\cite{ImKulkarni14:Competitive-Algorithms,ImKulkarni14:SELFISHMIGRATE:-A-Scalable}.

%In the following we present the algorithm
%and its analysis. Our algorithm is inspired by the ones in
%\cite{ImKulkarni14:Competitive-Algorithms,ImKulkarni14:SELFISHMIGRATE:-A-Scalable}
%and the analysis follows the same scheme as \cite{ImKulkarni14:SELFISHMIGRATE:-A-Scalable}.

\medskip
\noindent
{\bf Algorithm.} \ Let $k = \lceil 2/\epsilon \rceil$.
We write $j' \prec j$ if $r_{j'} \leq r_{j}$, breaking ties arbitrarily.
Let $G_{j}(t) = \sum_{a \preceq j} w_{a}g'_{a}(t-r_{a})$ where the sum is taken over pending jobs
$a$ at time $t$. Note that as $g_{a}$'s are concave functions,
$G_{j}(t)$ is non-increasing (with respect to $t$).
Informally, each term $w_{a}g'_{a}(t-r_{a})$ represents the rate of the contribution
of pending job $a$ at time $t$ to the total cost.
Thus $G_{j}(t)$ stands for the contribution rate of pending
jobs released before $j$ (including $j$) at time $t$ to the total cost.
Whenever it is clear from context, we omit the time parameter.
At time $t$, the algorithm processes job $j$ at rate proportional to $G_{j}(t)^{k} - G_{j+1}(t)^{k}$.
In other words, the rate of job $j$ at time $t$ is
$
\nu_{j}(t) = (G_{j}(t)^{k} - G_{j+1}(t)^{k})/G(t)^{k},
$
where $G(t) = \sum_{a} w_{a}g'_{a}(t-r_{a})$, and the sum is taken over pending jobs $a$ at time $t$.

Next, we define the dual variables. Define $\gamma(t) = 0$.
Moreover, define $\lambda_{j}$ such that
$
\lambda_{j} p_{j} = \frac{1}{k+1} \int_{r_{j}}^{C_{j}} \biggl( \nu_{j}(t) \sum_{a \preceq j} w_{a} g'_{a}(t-r_{a})
                                + w_{j}g'_{j}(t-r_{j}) \sum_{a \prec j} \nu_{a}(t) \biggl) dt,
$
where $C_{j}$ is the completion time of job $j$.
% Note that the first sum is taken over jobs $a \prec j$ and also job $j$.
In the following we will bound the total cost of the algorithm by the Lagrangian dual value.
Let $\mathcal{F}$ denote the total cost of the algorithm. We will use the definitions of the
dual variables to derive first some essential properties (Lemma~\ref{lemma:jd.1} and~\ref{lem:job-dep-cost-cru}).

\begin{lemma}
$(k+1)\sum_{j}\lambda_{j}p_{j} = \int_{0}^{\infty} G(t)dt = \mathcal{F}$.
\label{lemma:jd.1}
\end{lemma}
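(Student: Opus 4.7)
The plan is to verify the two equalities separately by direct computation: the second one via the fundamental theorem of calculus applied to each job's cost, and the first one via the definition of $\lambda_j$ together with a swap of summation indices and the telescoping identity $\sum_j \nu_j(t) = 1$.

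For the second equality $\int_0^\infty G(t)\,dt = \mathcal{F}$, I would apply the fundamental theorem of calculus to each job: since $g_j(0)=0$, we have $w_j g_j(C_j - r_j) = \int_{r_j}^{C_j} w_j g'_j(t-r_j)\,dt$. Summing over $j$ and swapping sum and integral (each job $j$ contributes to $G(t)$ precisely on its pending interval $[r_j,C_j]$) gives $\mathcal{F} = \int_0^{\infty} \sum_{j \text{ pending at } t} w_j g'_j(t-r_j)\,dt = \int_0^{\infty} G(t)\,dt$.

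For the first equality $(k+1)\sum_j \lambda_j p_j = \int_0^\infty G(t)\,dt$, I would start from the definition of $\lambda_j$, multiply by $k+1$, sum over $j$, and again exchange order of summation and integration by encoding $\int_{r_j}^{C_j}$ as an indicator of $j$ being pending. This yields
\[
(k+1)\sum_j \lambda_j p_j = \int_0^{\infty} \biggl( \sum_j \nu_j(t)\, G_j(t) + \sum_j w_j g'_j(t-r_j)\sum_{a \prec j} \nu_a(t) \biggr) dt,
\]
where all outer sums run over jobs pending at time $t$. In the second inner sum I would swap the order of summation: for fixed $a$, the inner variable $j$ ranges over pending jobs with $a \prec j$, so by the definition of $G_a$ this equals $\sum_a \nu_a(t)\bigl(G(t) - G_a(t)\bigr)$. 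Adding the two inner sums cancels the $G_j$ terms and leaves $G(t)\sum_j \nu_j(t)$. The final step invokes the telescoping identity: by the design of the rates $\nu_j$, one has $\sum_j \nu_j(t) = 1$ whenever the processor is busy, and $G(t) = 0$ when there are no pending jobs, so the integrand equals $G(t)$ pointwise.

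The main obstacle is bookkeeping: one must carefully keep every summation restricted to the jobs pending at the current time $t$, and carry out the swap $\sum_j \sum_{a \prec j} = \sum_a \sum_{j \succ a}$ using the right convention so that the inner sum collapses to exactly $G(t)-G_a(t)$. Once this is in place, the remaining ingredients — a Fubini-type interchange, the telescoping property of $\{\nu_j\}$, and the fundamental theorem of calculus — are each routine.
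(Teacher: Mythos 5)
Your proof is correct and follows essentially the same route as the paper: both arguments reduce to the facts that the rates $\nu_j(t)$ sum to $1$ over pending jobs and that $w_j g_j(C_j-r_j)=\int_{r_j}^{C_j}w_jg'_j(t-r_j)\,dt$. The only cosmetic difference is that the paper establishes $(k+1)\sum_j\lambda_jp_j=\mathcal{F}$ by reading off the coefficient of each term $w_jg'_j(t-r_j)$ directly, whereas you reach $\int_0^\infty G(t)\,dt$ first via the explicit swap $\sum_j\sum_{a\prec j}=\sum_a\sum_{j\succ a}$ and the cancellation of the $G_j$ terms.
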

\begin{proof}
We first show that $(k+1)\sum_{j}\lambda_{j}p_{j} = \mathcal{F}$. Consider the term
$w_{j} g'_{j}(t-r_{j})$ in the sum $(k+1)\sum_{j}\lambda_{j}p_{j}$ for $r_{j} \leq t \leq C_{j}$.
The coefficient of this term in the sum is equal to $\sum_{a \prec j} \nu_{a}(t) + \sum_{a \succeq j} \nu_{a}(t) = 1$.
Therefore,
$$
(k+1)\sum_{j}\lambda_{j}p_{j} = \sum_{j} \int_{r_{j}}^{C_{j}} w_{j} g'_{j}(t-r_{j}) dt
    = \sum_{j} w_{j} g_{j}(C_{j}-r_{j}).
$$

Last, note that the identity $\int_{0}^{\infty} G(t)dt = \sum_{j} w_{j} g_{j}(C_{j}-r_{j})$ is straightforward from the definition of $G(t)$.
\end{proof}

\begin{lemma} \label{lem:job-dep-cost-cru}
For any time $\tau \geq r_{j}$, it holds that
$\lambda_{j} \leq \delta_{j} g_{j}(\tau-r_{j}) + \frac{1}{k} G(\tau)$.
\end{lemma}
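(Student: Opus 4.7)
The plan is to exploit the explicit integral definition of $\lambda_{j}$ to obtain a tractable pointwise bound on the integrand, and then integrate after splitting the time interval $[r_{j}, C_{j}]$ at $\tau$ when $\tau < C_{j}$.

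First I would rewrite the integrand
$I(t) := \nu_{j}(t) G_{j}(t) + w_{j} g'_{j}(t-r_{j}) \sum_{a \prec j} \nu_{a}(t)$
in a more tractable form. Using $G_{j}(t) - G_{j+1}(t) = w_{j} g'_{j}(t-r_{j})$ together with the telescoping identity $\sum_{a \prec j} \nu_{a}(t) = 1 - G_{j}(t)^{k}/G(t)^{k}$ (which follows from $\sum_{a} \nu_{a}(t) = 1$ and the definition of $\nu_{a}$), an algebraic manipulation gives
\[
I(t) = w_{j} g'_{j}(t-r_{j}) + \frac{G_{j}(t) G_{j+1}(t) \bigl(G_{j}(t)^{k-1} - G_{j+1}(t)^{k-1}\bigr)}{G(t)^{k}}.
\]

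Second, I would establish the key pointwise bound $I(t) \leq k\, w_{j} g'_{j}(t-r_{j})$. This follows by applying the mean-value estimate $x^{k-1} - y^{k-1} \leq (k-1)\, x^{k-2}(x-y)$ for $x \geq y \geq 0$ with $x = G_{j}(t)$, $y = G_{j+1}(t)$, and then using $G_{j}(t), G_{j+1}(t) \leq G(t)$ to bound the resulting ratio by $1$.

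For the easy case $\tau \geq C_{j}$, integrating the pointwise bound over $[r_{j}, C_{j}]$ gives $(k+1) \lambda_{j} p_{j} \leq k\, w_{j} g_{j}(F_{j})$, hence $\lambda_{j} \leq \delta_{j} g_{j}(F_{j}) \leq \delta_{j} g_{j}(\tau - r_{j})$ by monotonicity of $g_{j}$, and the claim follows since $G(\tau)/k \geq 0$.

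For the main case $\tau < C_{j}$, I would split the integral at $\tau$. On $[r_{j}, \tau]$ the pointwise bound yields $\int_{r_{j}}^{\tau} I(t) dt \leq k\, w_{j} g_{j}(\tau - r_{j})$. On $[\tau, C_{j}]$, I would switch to the alternative bound $I(t) \leq w_{j} g'_{j}(t-r_{j}) + \nu_{j}(t) G(t)$ (which is immediate from $\nu_{j} G_{j} \leq \nu_{j} G$ and $\sum_{a \prec j} \nu_{a} \leq 1$), then invoke concavity of $g_{j}$ to get $w_{j} g'_{j}(t-r_{j}) \leq w_{j} g'_{j}(\tau - r_{j}) \leq G(\tau)$ for $t \geq \tau$ (using that $j$ is pending at $\tau$, so its term appears in $G(\tau)$). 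Combined with the monotonicity argument $\int_{\tau}^{C_{j}} \nu_{j}(t) G(t) dt \leq G(\tau) \int_{\tau}^{C_{j}} \nu_{j}(t) dt \leq G(\tau) p_{j}$ (valid when $G$ is non-increasing on $[\tau, C_{j}]$, since the $g'_{a}(\cdot)$ are non-increasing and the set of pending jobs only shrinks absent new arrivals), this yields the desired $\int_{\tau}^{C_{j}} I(t) dt \leq w_{j} g_{j}(\tau - r_{j}) + \tfrac{k+1}{k} G(\tau) p_{j}$, which together with the first segment gives $(k+1)\lambda_{j} p_{j} \leq (k+1) w_{j} g_{j}(\tau - r_{j}) + \tfrac{k+1}{k} G(\tau) p_{j}$, i.e.\ the stated inequality.

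The main obstacle is controlling $G(t)$ on $[\tau, C_{j}]$ when jobs arrive after $\tau$: such arrivals can push $G(t)$ above $G(\tau)$. The resolution I expect is that these later arrivals also drop $\nu_{j}(t)$ correspondingly, so the product $\nu_{j}(t) G(t)$ remains amenable to a telescoping argument; alternatively, one can decompose the post-$\tau$ interval by successive arrival epochs and apply the monotonicity inductively between consecutive arrivals, absorbing the extra contributions into the dual variables of the newly arrived jobs. Making this accounting precise is the technical heart of the proof.
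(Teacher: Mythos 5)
Your decomposition of $[r_j,C_j]$ at $\tau$ and your treatment of the first segment essentially reproduce the paper's Claim~\ref{claim:job-dep-cost-1}: the pointwise bound of the integrand by a constant multiple of $w_jg'_j(t-r_j)$ via convexity of $z^k$ is exactly what is done there (the paper settles for $(k+1)w_jg'_j(t-r_j)$ rather than your $k\,w_jg'_j(t-r_j)$, which is all that is needed). The second segment, however, contains a genuine gap --- in fact two. First, the obstacle you flag at the end is real and is not repaired by the accounting you sketch: $G(t)$ jumps upward at every arrival after $\tau$, so the inequality $\int_\tau^{C_j}\nu_j(t)G(t)\,dt\le G(\tau)\int_\tau^{C_j}\nu_j(t)\,dt$ simply fails, and there is no clean way to ``absorb'' the excess into the dual variables of later jobs within this lemma. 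The paper's resolution is not to control $G(t)$ at all on this segment but to avoid it: it works with $G_j(t)$, which sums only over pending jobs $a\preceq j$, i.e.\ jobs with $r_a\le r_j$. For $t\ge r_j$ no new job can ever enter that index set, jobs can only complete and leave it, and each surviving term $w_ag'_a(t-r_a)$ is non-increasing by concavity; hence $G_j$ is genuinely non-increasing and one concludes with $G_j(t)\le G_j(\tau)\le G(\tau)$ only at the very end. Your step $\nu_j(t)G_j(t)\le\nu_j(t)G(t)$ discards precisely this structure.

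Second, and independently, your bound for the cross term is not integrable to anything proportional to $p_j$: you bound $w_jg'_j(t-r_j)\sum_{a\prec j}\nu_a(t)$ pointwise by the constant $G(\tau)$, but integrating a constant over $[\tau,C_j]$ produces a factor $C_j-\tau$, which is not controlled by $p_j$ (only $\int_\tau^{C_j}\nu_j(t)\,dt\le p_j$ is; the job is processed at rate $\nu_j(t)$, possibly tiny, throughout a long interval). Consequently your claimed intermediate inequality $\int_\tau^{C_j}I(t)\,dt\le w_jg_j(\tau-r_j)+\frac{k+1}{k}G(\tau)p_j$ does not follow from the steps you give. The paper's Claim~\ref{claim:job-dep-cost-2} avoids both problems simultaneously by factoring $\nu_j(t)$ out of the \emph{entire} integrand: writing $\sum_{a\prec j}\nu_a(t)=\nu_j(t)\,G_{j+1}(t)^k/(G_j(t)^k-G_{j+1}(t)^k)$ and using convexity of $z^k$ in the other direction, $G_j^k-G_{j+1}^k\ge kG_{j+1}^{k-1}(G_j-G_{j+1})$, it bounds the cross term by $\frac{1}{k}\nu_j(t)G_{j+1}(t)$, so that $I(t)\le\left(1+\frac{1}{k}\right)\nu_j(t)G_j(t)\le\left(1+\frac{1}{k}\right)\nu_j(t)G_j(\tau)$ and $\int_\tau^{C_j}\nu_j(t)\,dt\le p_j$ closes the argument. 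Your first segment and your final arithmetic are fine; the second segment needs to be redone along these lines.
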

\begin{proof}
The proof is by the same scheme as in~\cite{ImKulkarni14:SELFISHMIGRATE:-A-Scalable}.
We nevertheless present a complete proof, which
follows from the following two claims, namely Claim~\ref{claim:job-dep-cost-1} and Claim~\ref{claim:job-dep-cost-2}.

\begin{claim} \label{claim:job-dep-cost-1}
For any time $\tau \geq r_{j}$,
$$
\frac{1}{p_{j}} \int_{r_{j}}^{\tau} \biggl( \nu_{j}(t) \sum_{a \preceq j} w_{a} g'_{a}(t-r_{a})
                        + w_{j}g'_{j}(t-r_{j}) \sum_{a \prec j} \nu_{a}(t) \biggl) dt
                \leq (k+1) \delta_{j} g_{j}(\tau-r_{j})
$$
\end{claim}
\begin{claimproof}
At time $t$,
\begin{align*}
\nu_{j}(t) \sum_{a \preceq j} w_{a} g'_{a}(t-r_{a})
    = \frac{G_{j}(t)^{k} - G_{j+1}(t)^{k}}{G(t)^{k}} G_{j}(t)
    \leq k w_{j} g'_{j}(t-r_{j})
\end{align*}
where the inequality is due to the convexity of function $z^{k}$.
Moreover, note that $\sum_{a \prec j} \nu_{a} \leq 1$. Therefore,
\begin{align*}
\int_{r_{j}}^{\tau} \biggl( \nu_{j}(t) &\sum_{a \preceq j} w_{a} g'_{a}(t-r_{a}) + w_{j}g'_{j}(t-r_{j}) \sum_{a \prec j} \nu_{a}(t) \biggl) dt \\
    &\leq \int_{r_{j}}^{\tau} (k+1) w_{j} g'_{j}(t-r_{j}) dt 
    = (k+1) w_{j} g_{j}(\tau-r_{j})
\end{align*}
The claim follows.
\end{claimproof}

\begin{claim} \label{claim:job-dep-cost-2}
For any time $\tau \geq r_{j}$,
$$
\frac{1}{p_{j}} \int_{\tau}^{C_{j}} \biggl( \nu_{j}(t) \sum_{a \preceq j} w_{a} g'_{a}(t-r_{a})
                + w_{j}g'_{j}(t-r_{j}) \sum_{a \prec j} \nu_{a}(t) \biggl) dt
                \leq \biggl(1+ \frac{1}{k} \biggl) G(t)
$$
\end{claim}
\begin{claimproof}
We have
\begin{align*}
&\frac{1}{p_{j}}\int_{\tau}^{C_{j}} \biggl( \nu_{j}(t) \sum_{a \preceq j} w_{a} g'_{a}(t-r_{a})
            + w_{j}g'_{j}(t-r_{j}) \sum_{a \prec j} \nu_{a}(t) \biggl) dt \\
&= \frac{1}{p_{j}}\int_{\tau}^{C_{j}} \nu_{j}(t) \biggl( \sum_{a \preceq j} w_{a} g'_{a}(t-r_{a})
    + w_{j}g'_{j}(t-r_{j})  \frac{G_{j+1}(t)^{k}}{G_{j}(t)^{k} - G_{j+1}(t)^{k}} \biggl) dt \\
&\leq \frac{1}{p_{j}}\int_{\tau}^{C_{j}} \nu_{j}(t) \biggl( \sum_{a \preceq j} w_{a} g'_{a}(t-r_{a})
    + \frac{1}{k} G_{j+1}(t) \biggl) dt \\
&\leq \frac{1}{p_{j}}\biggl(1+ \frac{1}{k} \biggl) G_{j}(\tau) \int_{\tau}^{C_{j}} \nu_{j}(t) dt
= \frac{1}{p_{j}}\biggl(1+ \frac{1}{k} \biggl) G_{j}(\tau) p_{j} \\
&= \biggl(1+ \frac{1}{k} \biggl) G_{j}(\tau) \leq \biggl(1+ \frac{1}{k} \biggl) G(\tau)
\end{align*}
where the first inequality is due to the convexity of $z^{k}$ and the second inequality follows from the
fact that $G_{j}(t)$ is non-increasing in $t$.
Hence, the claim follows.
\end{claimproof}

The proof of the lemma follows by the above two claims.
\end{proof}

\begin{theorem}\label{thm:lagrangian}
The algorithm is $(1+\epsilon)$-speed $\frac{4(1+\epsilon)^2}{\epsilon^{2}}$-competitive for integral JDGFP.
\end{theorem}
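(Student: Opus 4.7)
The plan is to combine Lemmas~\ref{lemma:jd.1} and~\ref{lem:job-dep-cost-cru} to relate the algorithm's cost $\mathcal{F}$ to the optimal cost. Following the convention of Section~\ref{sec:preliminaries}, it suffices to upper bound $\mathcal{F}$ by $\frac{4(1+\epsilon)^2}{\epsilon^2}$ times the optimum of a schedule constrained by $\sum_j x_j^*(t) \leq 1/(1+\epsilon)$. First I would invoke the identity $\mathcal{F} = (k+1)\sum_j \lambda_j p_j$ from Lemma~\ref{lemma:jd.1}. Then I would apply Lemma~\ref{lem:job-dep-cost-cru} at $\tau = C_j^*$, where $C_j^*$ denotes the completion time of job $j$ in OPT, multiply by $p_j$, and sum over all jobs to obtain
$$
\sum_{j} \lambda_j p_j \;\leq\; \sum_{j} w_j g_j(C_j^* - r_j) \;+\; \frac{1}{k}\sum_{j} p_j\, G(C_j^*) \;=\; \mathrm{OPT} \;+\; \frac{1}{k}\sum_{j} p_j\, G(C_j^*).
$$

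The decisive step is to bound $\sum_j p_j G(C_j^*)$ by a constant multiple of $\mathcal{F}$. Since each $g_a$ is concave and differentiable, $G(t)$ is non-increasing in $t$, and since $\int_{r_j}^{C_j^*} x_j^*(t)\,dt = p_j$ in any feasible OPT schedule, we have
$$
p_j\, G(C_j^*) \;=\; G(C_j^*)\int_{r_j}^{C_j^*} x_j^*(t)\,dt \;\leq\; \int_{r_j}^{C_j^*} G(t)\, x_j^*(t)\,dt.
$$
Summing over $j$, swapping the sum and the integral, and invoking the speed-augmentation constraint $\sum_j x_j^*(t) \leq 1/(1+\epsilon)$ yields
$$
\sum_{j} p_j\, G(C_j^*) \;\leq\; \int_0^\infty G(t) \sum_{j} x_j^*(t)\,dt \;\leq\; \frac{1}{1+\epsilon}\int_0^\infty G(t)\,dt \;=\; \frac{\mathcal{F}}{1+\epsilon},
$$
where the last equality is again Lemma~\ref{lemma:jd.1}.

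Putting these two inequalities together with $\sum_j \lambda_j p_j = \mathcal{F}/(k+1)$ gives $\mathcal{F}\bigl(\tfrac{1}{k+1} - \tfrac{1}{k(1+\epsilon)}\bigr) \leq \mathrm{OPT}$, and hence
$$
\mathcal{F} \;\leq\; \frac{k(k+1)(1+\epsilon)}{k\epsilon - 1}\,\mathrm{OPT}.
$$
The choice $k = \lceil 2/\epsilon \rceil$ ensures $k\epsilon \geq 2$ (so $k\epsilon - 1 \geq k\epsilon/2$) and $k+1 \leq 2(1+\epsilon)/\epsilon$; routine algebra then delivers the claimed ratio $4(1+\epsilon)^2/\epsilon^2$.

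The main subtlety is the ``charging'' step that converts $\sum_j p_j G(C_j^*)$ into a fraction of $\mathcal{F}$: it simultaneously exploits monotonicity of $G$ (to pass from the value at $C_j^*$ to an average over OPT's processing interval for job $j$) and the speed constraint of OPT (to turn $\sum_j x_j^*(t)$ into a genuine discount). This discount is exactly what forces resource augmentation—without it, the coefficient of $\mathcal{F}$ on the right-hand side would be $1/k$ and the resulting inequality $\mathcal{F}(\tfrac{1}{k+1} - \tfrac{1}{k}) \leq \mathrm{OPT}$ is vacuous (its left-hand side is negative). Setting $k = \Theta(1/\epsilon)$ is then the natural way to close the gap between $1/(k+1)$ and $1/(k(1+\epsilon))$.
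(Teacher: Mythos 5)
Your overall strategy is sound and is in fact a dual-fitting restatement of the paper's Lagrangian computation: the paper lower-bounds $\min_{x,C}\sum_j\lambda_jp_j-\sum_j\int x_j(t)(\lambda_j-\delta_jg_j(C_j-r_j))\,dt$ by applying Lemma~\ref{lem:job-dep-cost-cru} \emph{pointwise} under the integral and then using $\sum_jx_j(t)\le 1/(1+\epsilon)$, which is exactly your charging argument in different clothing, and your closing algebra with $k=\lceil 2/\epsilon\rceil$ is correct. However, there is one genuine error: the claim that $G(t)=\sum_a w_ag'_a(t-r_a)$ (summed over the algorithm's pending jobs) is non-increasing in $t$. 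Concavity makes each \emph{term} non-increasing, and the quantity $G_j(t)$ used inside the proof of Lemma~\ref{lem:job-dep-cost-cru} is non-increasing for $t\ge r_j$ because the index set $\{a\preceq j\}$ can only lose members after $r_j$; but $G(t)$ jumps \emph{upward} whenever a new job is released (the new job contributes $w_zg'_z(0)$, the largest value of its derivative). Consequently the step $p_jG(C_j^*)=G(C_j^*)\int_{r_j}^{C_j^*}x_j^*(t)\,dt\le\int_{r_j}^{C_j^*}G(t)x_j^*(t)\,dt$ can fail: if a heavy job is released shortly before $C_j^*$, then $G(C_j^*)$ may exceed $G(t)$ on most of OPT's processing interval for $j$.

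The repair is short and brings you back to the paper's argument. Do not apply Lemma~\ref{lem:job-dep-cost-cru} only at the single point $\tau=C_j^*$; apply it at every $\tau=t$ with $r_j\le t\le C_j^*$, and use only the monotonicity of $g_j$ (not of $G$) to write $g_j(t-r_j)\le g_j(C_j^*-r_j)$. Integrating against $x_j^*(t)\,dt$ over $[r_j,C_j^*]$ and using $\int_{r_j}^{C_j^*}x_j^*(t)\,dt=p_j$ gives
\begin{equation*}
\lambda_jp_j\;\le\;w_jg_j(C_j^*-r_j)\;+\;\frac{1}{k}\int_{r_j}^{C_j^*}G(t)\,x_j^*(t)\,dt ,
\end{equation*}
after which your interchange of sum and integral, the speed constraint $\sum_jx_j^*(t)\le 1/(1+\epsilon)$, and the identity $\int_0^\infty G(t)\,dt=\mathcal{F}$ from Lemma~\ref{lemma:jd.1} go through verbatim and yield the same final inequality $\mathcal{F}\bigl(\tfrac{1}{k+1}-\tfrac{1}{k(1+\epsilon)}\bigr)\le\mathrm{OPT}$. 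With this one-line change your proof is correct and matches the paper's bound.
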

%\noindent {\em Proof sketch.} \
%With the choice of dual variables, we can show, from Lemma~\ref{lem:job-dep-cost-cru}
%that the Lagrangian dual is
%$L(\lambda, \gamma,x,C) \geq \min_{x} \sum_{j} \lambda_{j} p_{j}
% - \int_{0}^{\infty} \sum_{j} x_{j}(t) \frac{G(t)}{k} dt$. From Lemma~\ref{lemma:jd.1}
%and the resource augmentation assumption (i.e., $\sum_{j} x_{j}(t) \leq 1- \epsilon$),
%we deduce that $\sum_{j} \lambda_{j}p_{j} - (1-\epsilon)\int_{0}^{\infty} \frac{G(t)}{k} dt \geq \frac{\epsilon^{2}}{4} \mathcal{F}$.
%\hfill$\Box$
\begin{proof}
With our choice of dual variables, the Lagrangian dual objective is
\begin{align*}
\min_{x,C} & \sum_{j} \lambda_{j} p_{j} - \int_{0}^{\infty} \gamma(t)dt
    - \sum_{j} \int_{r_{j}}^{C_{j}} x_{j}(t) \biggl( \lambda_{j} - \gamma(t) - \delta_{j} g_{j}(C_{j}-r_{j}) \biggl) dt \\
&\geq \min_{x,C} \sum_{j} \lambda_{j} p_{j}
    - \sum_{j} \int_{r_{j}}^{C_{j}} x_{j}(t) \biggl( \lambda_{j} - \delta_{j} g_{j}(t-r_{j}) \biggl) dt \\
&\geq \min_{x,C} \sum_{j} \lambda_{j} p_{j}
    - \int_{0}^{\infty} \sum_{j} x_{j}(t) \frac{G(t)}{k} dt
\end{align*}
The first inequality is due to $\gamma(t) = 0$ for every $t$
and $t \leq C_{j}$ in the integral term corresponding to $j$
so $g_{j}(t - r_{j}) \leq g_{j}(C_{j} - r_{j})$ for every job $j$.
The second inequality holds by Lemma \ref{lem:job-dep-cost-cru}.

In the resource augmentation model, the offline optimum has a machine of speed $\frac{1}{1+\epsilon}$, i.e., $\sum_{j} x_{j}(t) \leq \frac{1}{1+\epsilon}$
(while the algorithm has unit-speed machine).
Recall that $k = \lceil 2/\epsilon \rceil$.
Therefore, the Lagrangian dual is at least
\begin{align*}
\sum_{j} \lambda_{j}p_{j} - \frac{1}{1+\epsilon}\int_{0}^{\infty} \frac{G(t)}{k} dt
    = \frac{1}{k+1} \mathcal{F} - \frac{1}{1+\epsilon} \frac{1}{k} \mathcal{F}
    \geq \frac{\epsilon^{2}}{4(1+\epsilon)^2} \mathcal{F}
\end{align*}
The algorithm has cost $\mathcal{F}$ so the theorem follows.
\end{proof}

\section{Conclusion}
\label{sec:conclusion}
In this work we applied primal-dual and dual-fitting techniques in the analysis of online algorithms
for generalized flow-time scheduling problems. This approach yields proofs that are derived from duality principles,
unlike previous approaches that are predominantly based on potential functions. More importantly, we showed how
to exploit duality in order to bypass a canonical rounding of fractional solutions that has been, up to now,
a standard tool in the area of online scheduling. As a result, we obtained, at least for some objective functions,
improved competitive ratios.

A promising direction for future work is to apply our framework to {\em non-clairvoyant} problems.
It would be very interesting to obtain a primal-dual analysis of
Shortest Elapsed Time First (SETF)
%or minimizing online total flow time on a single machine.
%This algorithm
which is is known to be scalable~\cite{KalyanasundaramPruhs00:Speed-is-as-powerful}; moreover,
this algorithm has been analyzed in~\cite{FoxIm13:Online-Non-clairvoyant} in the context of the
online GFP with convex/concave cost functions. We believe that one can use duality to argue
that SETF is the non-clairvoyant counterpart of HDF; more precisely, we believe that one can derive SETF as a primal-dual
algorithm in a similar manner as the discussion of HDF in Section~\ref{section:flow}. A more challenging task is to bound
the primal and dual objectives, which appears to be substantially harder than in the clairvoyant setting.
A further open question is extending the results of this paper to multiple machines; here, one potentially
needs to define the dual variable $\gamma(t)$ with respect to as many curves per job as machines.
%Last, we would like to further relax the
%conditions of the current framework in order to allow for algorithms that are not necessarily scalable.
Last, it would be very interesting to extend the framework in order to allow for algorithms that
are not necessarily scalable. To this direction, one needs to further relax the
conditions of the proposed framework, so as to exploit the speed augmentation and remedy the problematic situations
in which primal job contributions correspond to a negative contribution in the dual.

\section*{Acknowledgements}
Spyros Angelopoulos is supported by project ANR-11-BS02-0015 ``New Techniques in Online Computation--NeTOC''.
Giorgio Lucarelli is supported by the ANR project Moebus (Grant No.
ANR-13-INFR-0001).
Nguyen Kim Thang supported by the FMJH program Gaspard Monge in optimization and operations research and by EDF.

\newpage

\appendix
{\LARGE \bf Appendix}

\section{LP-formulation of fractional JDGFP}
\label{app:lp.formulation}

We argue that the following LP is a relaxation of JDGFP.

\begin{align*}
\min \sum_{j \in \mathcal{J}} \delta_j \int_{r_j}^{\infty} g_j(t-r_j) & x_j(t) dt \\
\int_{r_j}^{\infty} x_j(t) dt &\geq p_j & & \forall j \in \mathcal{J} \\
\sum_{j \in \mathcal{J}} x_j(t) &\leq 1 & & \forall t \geq 0 \\
x_j(t) &\geq 0 & & \forall j \in \mathcal{J}, t \geq 0 \notag
\end{align*}

Consider a job $j \in \mathcal{J}$.
By definition, the fractional weighted cost of a job $j \in \mathcal{J}$ is equal to
\begin{eqnarray*}
\int_{r_j}^{\infty} w_j(t) g_j'(t-r_j) dt
 & = & \frac{w_j}{p_j} \int_{r_j}^{\infty} q_j(t) g_j'(t-r_j) dt\\
 & = & \frac{w_j}{p_j} \int_{r_j}^{\infty} \left( g_j'(t-r_j) \int_{u=t}^{\infty} x_j(u) du \right) dt
\end{eqnarray*}
By changing the order of the integrals we get
\begin{eqnarray*}
\int_{r_j}^{\infty} w_j(t) g_j'(t-r_j) dt
 & = & \frac{w_j}{p_j} \int_{u=r_j}^{\infty} \left( x_j(u) \int_{t=r_j}^{u} g_j'(t-r_j) dt\right) du\\
 & = & \frac{w_j}{p_j} \int_{u=r_j}^{\infty} g_j(u-r_j) x_j(u) du
\end{eqnarray*}
\qed

\end{document}